\documentclass[12pt]{article}
\usepackage[slantedGreek]{mathpazo}

\usepackage[utf8]{inputenc}
\usepackage{color}
\usepackage{float}
\usepackage{amsmath}
\usepackage{amssymb}
\usepackage{setspace}
\usepackage[sort, authoryear]{natbib}
\doublespacing
\usepackage{chngcntr}

\makeatletter


\floatstyle{ruled}
\newfloat{algorithm}{tbp}{loa}
\providecommand{\algorithmname}{Algorithm}
\floatname{algorithm}{\protect\algorithmname}

\usepackage{url}
\usepackage{booktabs}
\usepackage{array}
\usepackage{multirow}
\usepackage{float}

\singlespacing


\usepackage{graphicx}
\usepackage{amsthm}
\usepackage{mathrsfs}

\usepackage[toc,page]{appendix}

\usepackage{enumitem}
\setlist[enumerate]{topsep=0pt,itemsep=-1ex,partopsep=1ex,parsep=1ex}

\usepackage{parskip}
\usepackage{verbatim}
\usepackage{xcolor}
\usepackage{subcaption}

\definecolor{darkblue}{rgb}{0.0,0.0,0.55}
\RequirePackage[colorlinks,citecolor=darkblue,urlcolor=darkblue,linkcolor=darkblue]{hyperref} 

\usepackage[format=plain, labelfont=bf, singlelinecheck=off]{caption} 

\captionsetup{labelsep=period}

\captionsetup[table]{belowskip=-10pt}
\captionsetup[table]{aboveskip=0pt}

\captionsetup[subfigure]{justification=centering}

\numberwithin{equation}{section}
\setlength{\intextsep}{10pt plus 2pt minus 2pt}

\captionsetup{labelsep=period}

\usepackage{algorithm}
\usepackage{algpseudocode}
\newlength{\continueindent}
\setlength{\continueindent}{2em}
\usepackage{etoolbox}

\setlength{\floatsep}{5pt plus 2pt minus 2pt}
\setlength{\textfloatsep}{5pt plus 2pt minus 2pt}
\setlength{\intextsep}{5pt plus 2pt minus 2pt}

\newcommand*{\ALG@customparshape}{\parshape 2 \leftmargin \linewidth \dimexpr\ALG@tlm+\continueindent\relax \dimexpr\linewidth+\leftmargin-\ALG@tlm-\continueindent\relax}
\apptocmd{\ALG@beginblock}{\ALG@customparshape}{}{\errmessage{failed to patch}}

\algnewcommand{\algorithmicgoto}{\textbf{go to}}%
\algnewcommand{\Goto}[1]{\algorithmicgoto~\ref{#1}}%

\numberwithin{equation}{section}

\newcommand{\calX}{{\cal X}}

\usepackage{amsthm}
\newtheorem{theorem}{\textsc{{Theorem}}}

\newtheorem{lemma}{\textsc{{Lemma}}}
\newtheorem{proposition}{\textsc{{Proposition}}}


\theoremstyle{remark}


\newcommand{\bfb}{\mathbf{b}}
\newcommand{\bfc}{\mathbf{c}}
\newcommand{\bfh}{\mathbf{h}}

\newcommand{\bff}{\mathbf{f}}

\newcommand{\bfr}{\mathbf{r}}

\newcommand{\bfv}{\mathbf{v}}
\newcommand{\bfw}{\mathbf{w}}
\newcommand{\bfx}{\mathbf{x}}
\newcommand{\bfy}{\mathbf{y}}

\newcommand{\bfA}{\mathbf{A}}
\newcommand{\bfB}{\mathbf{B}}

\newcommand{\bfF}{\mathbf{F}}

\newcommand{\bfH}{\mathbf{H}}

\newcommand{\bfK}{\mathbf{K}}

\newcommand{\bfQ}{\mathbf{Q}}
\newcommand{\bfR}{\mathbf{R}}

\newcommand{\bfT}{\mathbf{T}}

\newcommand{\Ell}{\boldsymbol \ell}

\newcommand{\bfbeta}{\boldsymbol \beta}

\newcommand{\bfxi}{\boldsymbol \xi}

\newcommand{\bfmu}{\boldsymbol \mu}
\newcommand{\bfphi}{\boldsymbol \phi}

\newcommand{\bfgamma}{\boldsymbol \gamma}
\newcommand{\bfrho}{\boldsymbol \rho}
\newcommand{\bfsigma}{\boldsymbol \sigma}

\newcommand{\bfSigma}{\boldsymbol \Sigma}

\newcommand{\bfOmega}{\boldsymbol \Omega}

\newcommand{\diag}{\text{diag}}

\expandafter\def\expandafter\normalsize\expandafter{%
    \normalsize
    \setlength\abovedisplayskip{6pt}
    \setlength\belowdisplayskip{6pt}
    \setlength\abovedisplayshortskip{6pt}
    \setlength\belowdisplayshortskip{6pt}
}

\setlength{\parindent}{1.5em}
\setlength{\parskip}{0.20em}
\setlength{\intextsep}{0.8\baselineskip}

\makeatother

\defcitealias{NOAA}{NOAA2019}

\pdfminorversion=4

\usepackage{geometry}
 \geometry{
 papersize={8.5in,11in},
 lmargin = 1in,
 rmargin = 1in,
 tmargin = 1in,
 bmargin = 1.4in 
 }
\usepackage{setspace}
\setstretch{1.6} 

\begin{document}

\title{\vspace{-1cm} \baselineskip=20pt  \bf Multifidelity Computer Model Emulation with High-Dimensional Output: An Application to Storm Surge
}



\date{}
\singlespacing
\maketitle
\baselineskip=15pt
\vspace{-1.75cm}
\begin{center}
  Pulong Ma\footnote{{\it Address for correspondence}: Pulong Ma, Assistant Professor in School of Mathematical and Statistical Sciences at Clemson University; 220 Parkway Drive, Clemson, SC 29634; Email: plma@clemson.edu}
\\
  {\it School of Mathematical and Statistical Sciences, Clemson University, Clemson, SC, USA}\\
  Georgios Karagiannis \\
  {\it Durham University, Durham, UK}\\
  Bledar A. Konomi\\
  {\it University of Cincinnati, Cincinnati, OH, USA}\\
  Taylor G. Asher \\
  {\it University of North Carolina at Chapel Hill, Chapel Hill, NC, USA}\\
  Gabriel R. Toro \\
  {\it Lettis Consultants International, Inc., USA} \\
  and Andrew T. Cox\\
  {\it Oceanweather, Inc., USA}
  \hskip 5mm\\
\end{center}

\begin{abstract}
Hurricane-driven storm surge is one of the most deadly and costly natural disasters, making precise quantification of the surge hazard of great importance. Surge hazard quantification is often performed through physics-based computer models of storm surges. Such computer models can be implemented with a wide range of fidelity levels, with computational burdens varying by several orders of magnitude due to the nature of the system. The threat posed by surge makes greater fidelity highly desirable, however such models and their high-volume output tend to come at great computational cost, which can make detailed study of coastal flood hazards prohibitive. These needs make the development of an emulator combining high-dimensional output from multiple complex computer models with different fidelity levels important. We propose a parallel partial autoregressive cokriging model to predict highly-accurate storm surges in a computationally efficient way over a large spatial domain. This emulator has the capability of predicting storm surges as accurately as a high-fidelity computer model given any storm characteristics over a large spatial domain. 
\end{abstract}

\noindent%
{\it Keywords:}  Autoregressive cokriging; High-dimensional output; Multifidelity computer model; Storm surge; Uncertainty quantification 


\singlespacing
\section{Introduction}

Storm surge is one of the most severe natural disasters that can lead to significant flooding in coastal areas that brings multi-billion dollar damages and is responsible on average for half of lives lost from hurricanes \citep{rappaport2014fatalities}. On average, inflation-normalized direct economic damages to the U.S. (1900-2005) are estimated at \$10 billion per year and increasing as a result of storm surges \citep{pielke2008normalized}. Since 2005, there have been 12 hurricanes whose total U.S. damages exceeded \$10 billion \citep{NOAA2019}. For instance, Hurricane Katrina (2005) caused over 1500 deaths and total estimated damages of \$75 billion in the New Orleans area and along the Mississippi coast as a result of storm surge \citep{FEMA2006}. To mitigate these impacts, studies are carried out to evaluate the probabilistic hazard \citep[e.g.,][]{niedoroda2010analysis,Cialone2017} and risk \citep[e.g.,][]{fischbach2016bias} from coastal flooding through a synthesis of computer modeling, statistical modeling, and extreme-event probability computation. Here computer modeling is used to predict the storm surge hazard initialized by hurricanes, statistical modeling is used to determine the distribution of hurricane characteristics, and extreme-event probability is used to assess the flood hazard. These studies support development and application of flood insurance rates, building codes, land use planning/development, infrastructure design and construction, and related goals by providing hazard levels at a range of frequencies \citep[e.g.,][]{aerts2014evaluating}. Similarly, forecast simulations are used to support a wide array of operational needs, most notably disaster mitigation and evacuation planning/preparedness \citep[e.g.,][]{blanton2018integrated,georgas2016stevens}. 

The ADCIRC ocean circulation model \citep{Luettich2004,westerink2008basin} is the primary computer model in the U.S. to predict storm surges in coastal areas. It was certified by the Federal Emergency Management Agency (FEMA) for coastal flood hazard study and has been successfully used in a large number of applications, including FEMA flood hazard map updates \cite[e.g.,][]{FEMA2008, niedoroda2010analysis, Jensen2012, Hesser2013} and in support of United States Army Corps of Engineers (USACE) projects \citep[e.g.,][]{Wamsley2013,Cialone2017}. These studies develop surge and wave hazard elevations corresponding to a range of frequencies, from the 50\% annual exceedance level to the 0.01\% annual exceedance level. 

In the risk assessment of coastal flood hazard, ADCIRC needs to be run for a large number of storm characteristics.
ADCIRC can be run at different levels of accuracy due to the sophistication of the physics incorporated in mathematical models, accuracy of numerical solvers, and resolutions of meshes. Although ADCIRC can be run efficiently in parallel on large supercomputers \citep{Tanaka2011}, its computational cost scales with the cube of the spatial resolution, meaning that very high fidelity models are several orders of magnitude more expensive than lower fidelity ones.  By incorporating the physics of ocean waves, ADCIRC can generate storm surges with even greater fidelity, but this adds another order of magnitude to the run time as compared to the uncoupled ADCIRC model \citep{Dietrich2012}. For instance, a single high-resolution, coupled simulation in Southwestern Florida takes roughly 2000 core-hours on a high-performance supercomputer \citep{xsede}. As a result, research often uses coarser models without wave effects, even though the importance of more advanced, detailed models in estimating storm surge has been well-demonstrated \citep[e.g.,][]{marsooli2018numerical,yin2016coupled}. 

An important scientific demand is to develop an \emph{emulator} - a fast probabilistic approximation of a simulator - that can produce highly accurate storm surges over a large spatial domain. The development of an emulator is needed in probabilistic flood studies dealing with climate change where a broad host of variables need to be considered. For instance, risk studies with relatively simple surge models estimate the current coastal flooding risk to New York City alone at over 100 million U.S. dollars per year \citep{aerts2014evaluating}. But these estimates are highly sensitive to underlying assumptions that remain to be explored \citep[e.g.,][]{de2011effect,fischbach2016bias}. Similarly, high-fidelity studies of coastal flooding changes associated with sea level rise and/or climate change have shown complex, nonlinear changes in flooding patterns that simpler studies cannot uncover \citep{liu2019physical}. 
 
The main scientific goal in this article is to develop an emulator that can not only predict highly accurate storm surges but also can be run very quickly  over a large spatial domain. The development of an emulator for the high-fidelity storm surge model directly can be computationally prohibitive, since the high-fidelity storm surge model requires a tremendous amount of computing resources just to obtain a single run. An alternative is to develop an emulator that can combine a limited number of highly accurate simulations from a high-fidelity but expensive surge model and a larger number of less accurate simulations from a low-fidelity but cheaper surge model. Combining simulations from different fidelity levels relies on the idea that, after quantifying discrepancy between models of different fidelity levels, information from the low-fidelity surge model can facilitate prediction at high fidelity.  

Several statistical works have been proposed to combine output from simulators at different fidelity levels based on a well-known geostatistical method called \emph{cokriging} \citep[see Chapter 3 of][]{Cressie1993}. The idea to emulate multifidelity computer models is originated in \cite{Kennedy2000} using an autoregressive cokriging model - a  cokriging model with an Markov assumption. The work in \cite{Kennedy2000} has been extended in several ways. For instance, \cite{Qian2008} propose a Bayesian hierarchical formulation. \cite{Gratiet2013} devises an efficient Bayesian approach to estimate model parameters. \cite{konomikaragiannisABTCK2019} introduce nonstationarity by partitioning the input space via a Bayesian tree structure. \cite{Ma2020OBayes} develops objective Bayesian analysis for an autoregressive cokriging model. A typical assumption in univariate autoregressive cokriging models is the hierarchically nested design, which is not always preferred as discussed in \cite{Qian2008}. All these works focus on univariate computer model output without addressing the high-dimensionality challenge and possibly non-nested design in the storm surge application.

To address challenges due to high-dimensionality in the output space and non-nested design, we propose the parallel partial (PP) cokriging emulation methodology that can deal with high-dimensional output over a large spatial domain and non-nested design. In particular, the PP cokriging emulator is an extension of the PP kriging emulator \citep{Gu2016} on different levels of fidelity code with massive output.  To allow non-nested design, we develop a data-augmentation technique for parameter estimation via Monte Carlo expectation-maximization (MCEM) algorithm. For prediction, we develop a sequential prediction approach in which prediction at a higher fidelity level requires prediction to be made at a lower fidelity level. The proposed PP cokriging emulator explicitly introduces nonstationarity through spatially-varying the mean parameters and variance parameters in Gaussian processes at each fidelity level and also allows fast computations in an empirical Bayesian framework. 


The remainder of the article is organized as follows. Section~\ref{sec: storm surge} introduces the storm surge application with two storm surge simulators. In Section~\ref{sec: model formulation}, we present the proposed methodology to handle high-dimensional output and non-nested design.   In Section~\ref{sec: application}, an analysis of storm surge simulators is performed with the proposed methodology. Section~\ref{sec: Discussion} is concluded with discussions and possible extensions.

\section{Application of interest: Storm Surge} \label{sec: storm surge}
This section describes the storm surge simulators with their discrepancy highlighted and the simulation design that is used for this study.
\subsection{Storm Surge Simulators}
 ADCIRC is a hydrodynamic circulation numerical model that solves the shallow-water equations for water levels and horizontal currents using the finite-element method over an unstructured gridded domain representing bathymetric and topographic features \citep{Luettich2004}. Information about ADCIRC can be accessed at \url{https://adcirc.org}. In what follows, we will refer to ADCIRC as the low-fidelity simulator, meanwhile we will refer to the coupled ADCIRC + SWAN model as the high-fidelity simulator. The latter incorporates the Simulating WAves Nearshore (SWAN) wave model \citep{Booij1999, Zijlema2010} in order to enhance system physics and accuracy. This is achieved by tightly coupling the ADCIRC and SWAN models, simulating them on the same unstructured mesh \citep{Dietrich2011, Dietrich2012}. This coupling is important for accurate prediction of waves in the nearshore, which ride on top of storm surge and bring substantial destructive power to coastal structures and defenses.

\begin{figure}[htbp]
\begin{subfigure}{.7\textwidth}
\centering
\makebox[\textwidth][c]{ \includegraphics[width=1.0\textwidth, height=0.15\textheight]{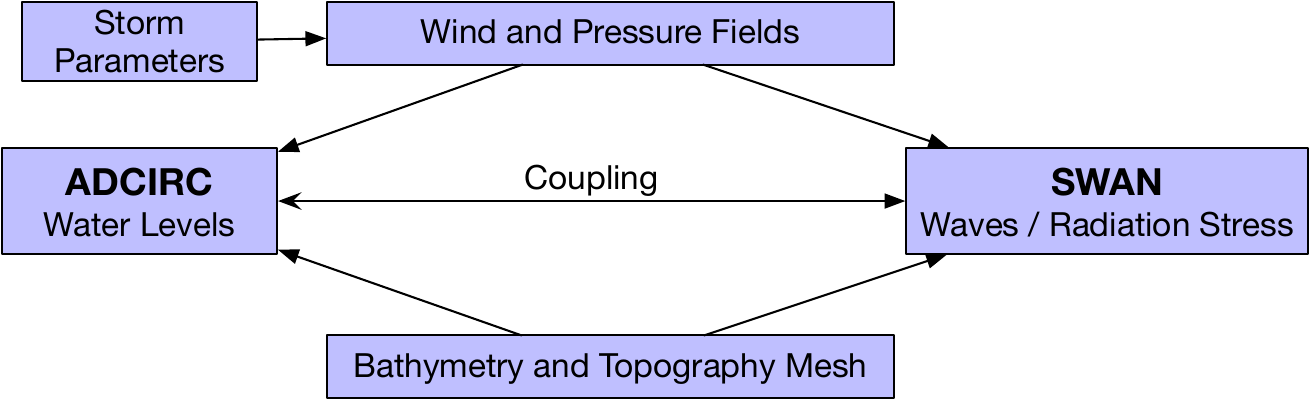}}
\end{subfigure}%
\begin{subfigure}{.3\textwidth}
  \centering
  \includegraphics[width=.9\linewidth, height=0.20\textheight]{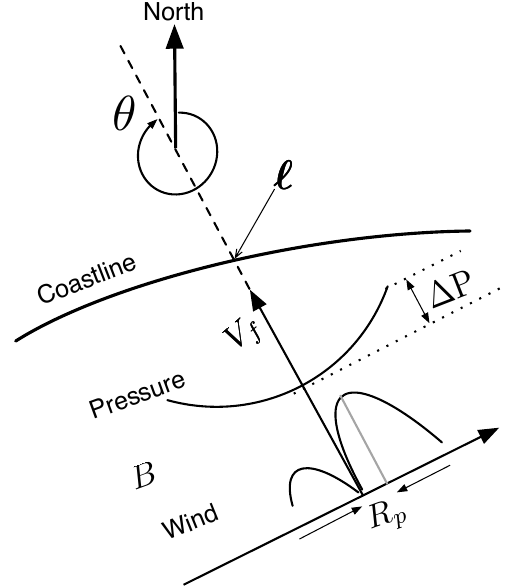}
\end{subfigure}
\caption{Diagram of storm surge models based on ADCIRC and storm parameters. The left panel shows ADCIRC and its coupling with SWAN. Then right panel shows storm parameters when a storm approaches the coastline \citep{Toro2010}. ADCIRC has bathymetry and topography mesh and wind and pressure fields as inputs. Storm parameters are used to derive the wind and pressure fields.}
\label{fig: ADCIRC + SWAN}
\end{figure}

Figure~\ref{fig: ADCIRC + SWAN} shows a basic diagram for the ADCIRC simulator and the ADCIRC + SWAN simulator. In this study, we focus on six input parameters to characterize the storm: $\Delta P$, $R_p$, $V_f$, $\theta$, $B$, $\mathbf{\Ell}$, with their physical meaning given in Table~\ref{table: input parameters}. These parameters will be treated as inputs in the ADCIRC simulator.  Although the behavior of hurricanes is much more complex than this characterization, using this simplified storm parameterization is acceptable for the probabilistic characterization of future storms since no practical or robust model exists to represent these effects for surge-frequency calculations for future storms. In this application, the response variable of interest is the peak surge elevation (PSE) for each landfalling hurricane simulated from these surge models, where the peakness is taken across time over the course of one storm. 

\begin{table}[htbp]
\centering
\normalsize
   \caption{Storm characteristics parameters.}
  {\resizebox{1.0\textwidth}{!}{%
  \setlength{\tabcolsep}{3.5em}
   \begin{tabular}{c l } 
   \toprule \noalign{\vskip 1.5pt} 
Input variables    &  Physical meaning \\ \noalign{\vskip 1.5pt} 
\midrule \noalign{\vskip 2.5pt} 
$\Delta P$  & central pressure deficit of the storm (mb) \\ \noalign{\vskip 2.5pt} 
$R_p$       & scale pressure radius in nautical miles \\ \noalign{\vskip 2.5pt} 
$V_f$       & storm's forward speed (m$/$s) \\ \noalign{\vskip 2.5pt} 
$\theta$    & storm's heading in degrees clockwise from north \\ \noalign{\vskip 2.5pt} 
$B$         & Holland's $B$ parameter (unitless) \\ \noalign{\vskip 2.5pt} 
$\mathbf{\Ell}$       & landfall location in latitude and longitude \\  \noalign{\vskip 2.5pt}  \bottomrule
   \end{tabular}%
   }}
   \label{table: input parameters}
\end{table}

\subsection{Model Validation}
 
In this work, our goal is to develop an emulator for this coupled ADCIRC + SWAN model for storm surge prediction. The detailed validation of the coupled ADCIRC + SWAN model has been performed against a variety of data sources (tide harmonic constituent data, surge measurements, water level gages, and wave buoy data) and storm events (Hurricane Charley 2004, Tropical storm Gabrielle 2001, Hurricane Donna 1960) in FEMA coastal flood hazard studies \citep[e.g.,][]{FEMA2017}. The coupled ADCIRC + SWAN model has been validated to observed surges in several studies for historical storms, and shows good model performance with typical errors below 0.3 meters \citep{Dietrich2012,Dietrich2011,FEMA2017,Cialone2017}.

From a physics perspective, ADCIRC + SWAN explicitly incorporates ubiquitous wave effects on water levels and currents through the SWAN model. Before modern computing, wave setup effects were determined via approximate equations and added onto surge hazard estimates.  Now, model coupling is standard practice for both researchers and practitioners to more correctly represent the physical processes, with clear benefits \citep[e.g.,][]{Dietrich2011}.  ADCIRC + SWAN has been used in all regions of the U.S. Gulf and Atlantic coasts (including our study region) by both FEMA and USACE in their flood hazard studies.  Flooding forecasting work still struggles to utilize coupled models because of the computational cost, and our work also has the potential to aid such efforts, partly thanks to the speed of emulator construction. There is a substantial need to develop an efficient emulator for the high-fidelity simulator ADCIRC + SWAN to aid flood hazard studies and forecasting work.

\subsection{Model Simulation Setup} \label{sec: model simulation}
In FEMA coastal flood hazard studies \citep[e.g.,][]{FEMA2017}, storm surge hazard assessment is accomplished via the annual exceedance probability (AEP) for hurricane-prone areas. The quantification of AEP requires large-scale numerical simulations from ADCIRC. Statistical modeling is used to develop characteristics of synthetic storms based on historical tropical cyclones. The wind and pressure fields are used as inputs into hydrodynamic models such as ADCIRC to predict storm surges. In this application, the ADCIRC simulator and the ADCIRC + SWAN simulator are run on the same mesh with 148,055 nodes (spatial points).  The mesh and simulation characteristics were constructed for a FEMA coastal flood hazard study in Southwest Florida \citep{FEMA2017} and these simulations were carried out using the same standards and methods documented in that study. The peak surge hazard estimates produced by that study are considered to represent current conditions (i.e. sea level and climate in the years around when the study is done), and the joint probability distribution of tropical cyclone parameters is constructed from regional historical data. 

We primarily focus on peak storm surges at $N=9,284$ spatial locations in the Cape Coral subregion of the study area, since Cape Coral is a study region in the FEMA Region IV's mission. Section~\ref{sec: nodes} of the Supplementary Material gives a brief description of coastal flood study and shows the full mesh of ADCIRC and the selected spatial locations. To design the experiment, we select 50 unique combinations of storm parameters $(\Delta P, R_p, V_f, \theta, B)$ based on the maximin Latin hypercube design (LHD) in the input domain $[30, 70]\times [16, 39] \times [3, 10] \times [15, 75] \times [0.9, 1.4]$. This parameter range corresponds to a core region of major surge hazards of interest in the FEMA coastal study \citep{FEMA2017}. For each combination of these 5 storm parameters, the landfall location $\Ell$ is repeated with one $R_p$ spacing along the coastline in Cape Coral. For each of the 5 storm parameters, the initial position of landfall location is randomly chosen, meaning that no two storms make landfall at the same location, and the number of landfalls for each of the 50 parameter combinations varies, with a higher number of smaller storms; this reflects the smaller spatial scale of storm surge for smaller storms, which necessitates higher sampling to capture the more localized response. The meteorological forcing in both the ADCIRC simulator and the ADCIRC + SWAN simulator is produced by a single group (Oceanweather, Inc.) using the work in \cite{Cardone2009}. The full simulation can be found from the Coastal Flood Modeling Database - Southwest Florida \citep{Asher2022}. 

In total, we obtained 226 inputs, meaning that on average, each of the 50 parameter combinations is used to generate 4.5 storms. These 226 inputs will be referred to as $\calX_0$. We randomly selected 60 inputs from the 226 inputs to run the ADCIRC + SWAN simulator, which will be referred to as $\calX_2$. Here 60 model runs are selected because the folklore for Gaussian process emulation is to use $10d$ model runs. Then we randomly selected 150 inputs from the the remaining 166 inputs to run the ADCIRC simulator, which will be referred to as $\calX_1^1$. To characterize the difference between the ADCIRC simulator and the ADCIRC + SWAN simulator, we also randomly chose 50 inputs from the 60 inputs to run the ADCIRC simulator, which will be referred to as $\calX_1^2$. Let $\calX_1:=\calX_1^1 \cup \calX_1^2$ be the collection of 200 inputs from $\calX_1^1$ and $\calX_1^2$. Notice that only 50 inputs in the ADCIRC + SWAN simulator are nested within the 200 inputs in the ADCIRC simulator.  

Figure~\ref{fig: storm surges at two input settings} shows peak surge elevations over $9,284$ spatial locations from the ADCIRC simulator and the ADCIRC + SWAN simulator at two different input settings $\bfx_1=(48.30,$ $20.48, 6.187, 62.28, 1.260, -82.08, 26.59)^\top$ and $\bfx_2=(68.85, 34.34, 8.778, 55.57, 1.066, -82.15,$ $26.69)^\top$. The output surfaces also have very different variations in different regions at each fidelity level. This indicates that a spatially-varying mean function or a spatially-varying variance function may help capture the spatial variations in the output space. The third column shows that PSEs have their maximum difference less than 0.3 meters between the low and high fidelity simulators, and also shows that the discrepancy has different spatial structures. At some specific regions such as the Fort Myers Beach (on the top right panel), very sharp changes can be detected. Physically, the changes in the surge elevation arise from differences in the spatial and temporal structures of the surge-only versus the wave response to storm forcing. For instance, in the top panel, the addition of wave-driven water level setup leads to greater overtopping (a situation when waves are higher than the dunes or structures they encounter) of coastal barrier islands, bringing greater water into the semi-protected bays in the southeastern portion of the figure. It can be difficult to detect these sorts of patterns without modeling wind-driven wave effects. In what follows, we develop a cokriging-based emulator to approximate the high-fidelity simulator by combining simulations from a limited number of high-fidelity runs and a larger number of low-fidelity simulation runs.

\begin{figure}
\begin{subfigure}{.333\textwidth}
  \centering
\makebox[\textwidth][c]{ \includegraphics[width=1.0\linewidth, height=0.12\textheight]{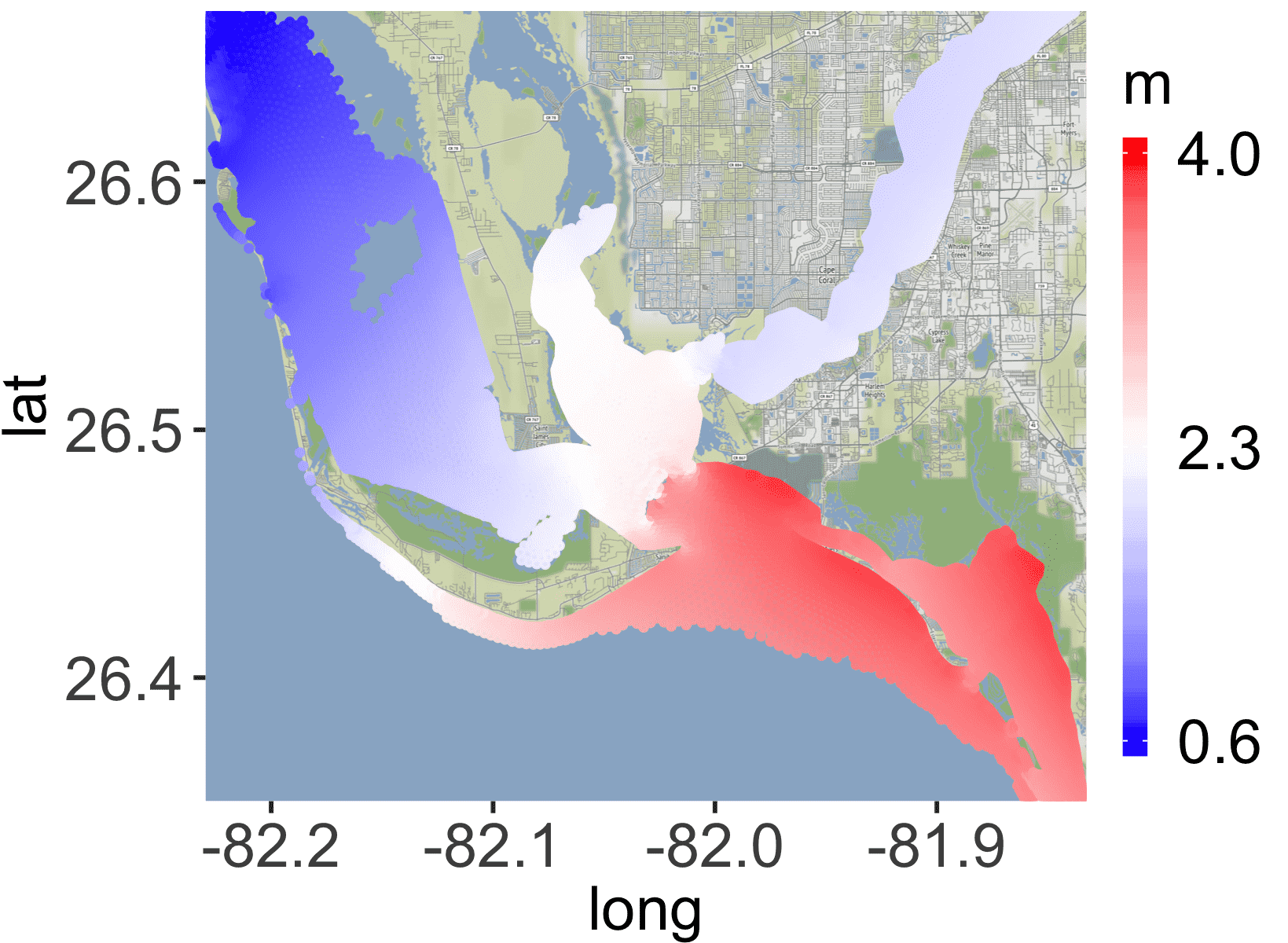}}
  \caption{ADCIRC: $\bfx_1$.}
\end{subfigure}%
\begin{subfigure}{.333\textwidth}
  \centering
  \includegraphics[width=1.0\linewidth, height=0.12\textheight]{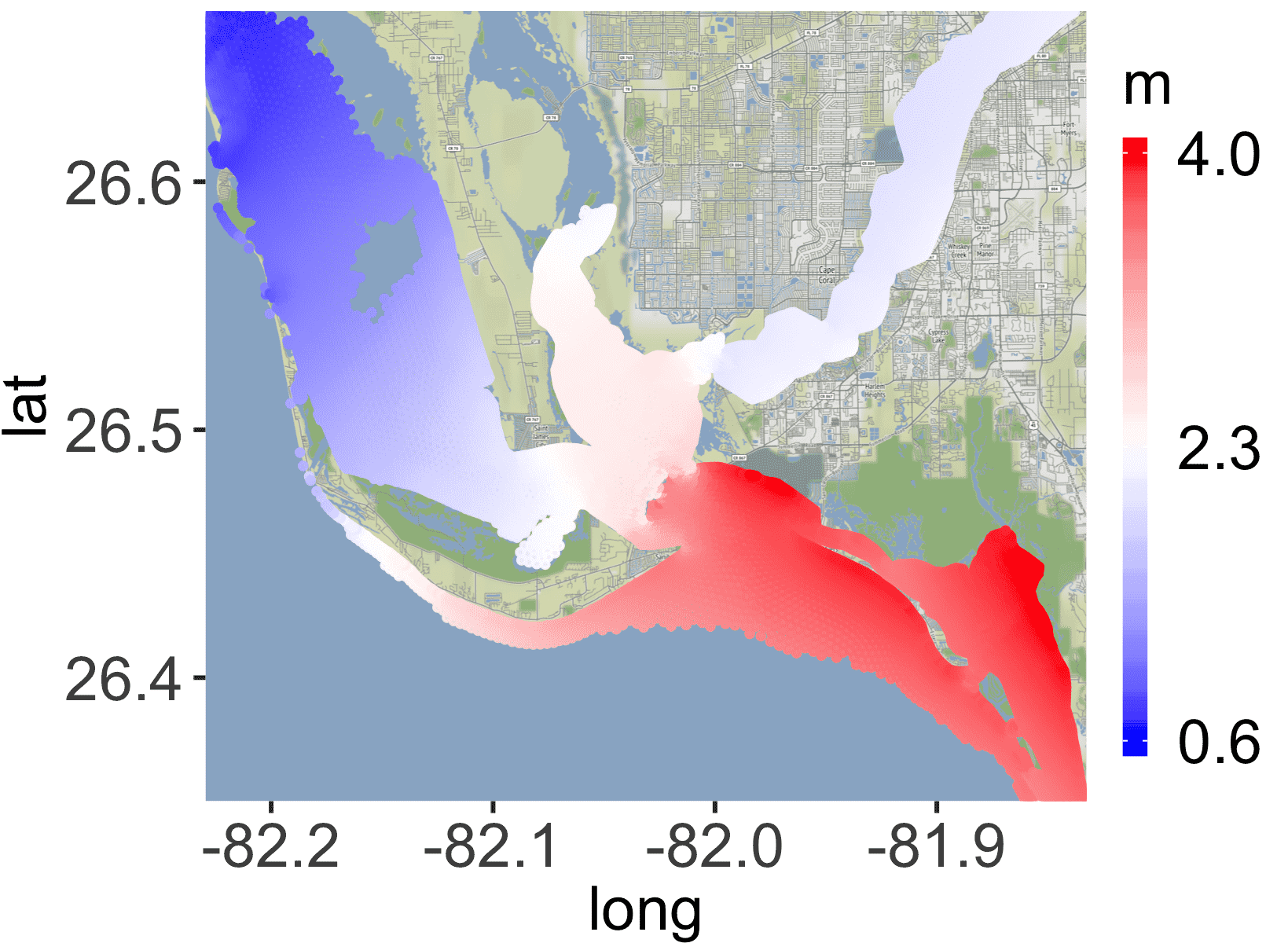}
  \caption{ADCIRC+SWAN: $\bfx_1$.}
\end{subfigure}%
\begin{subfigure}{.333\textwidth}
  \centering
  \includegraphics[width=1.0\linewidth, height=0.12\textheight]{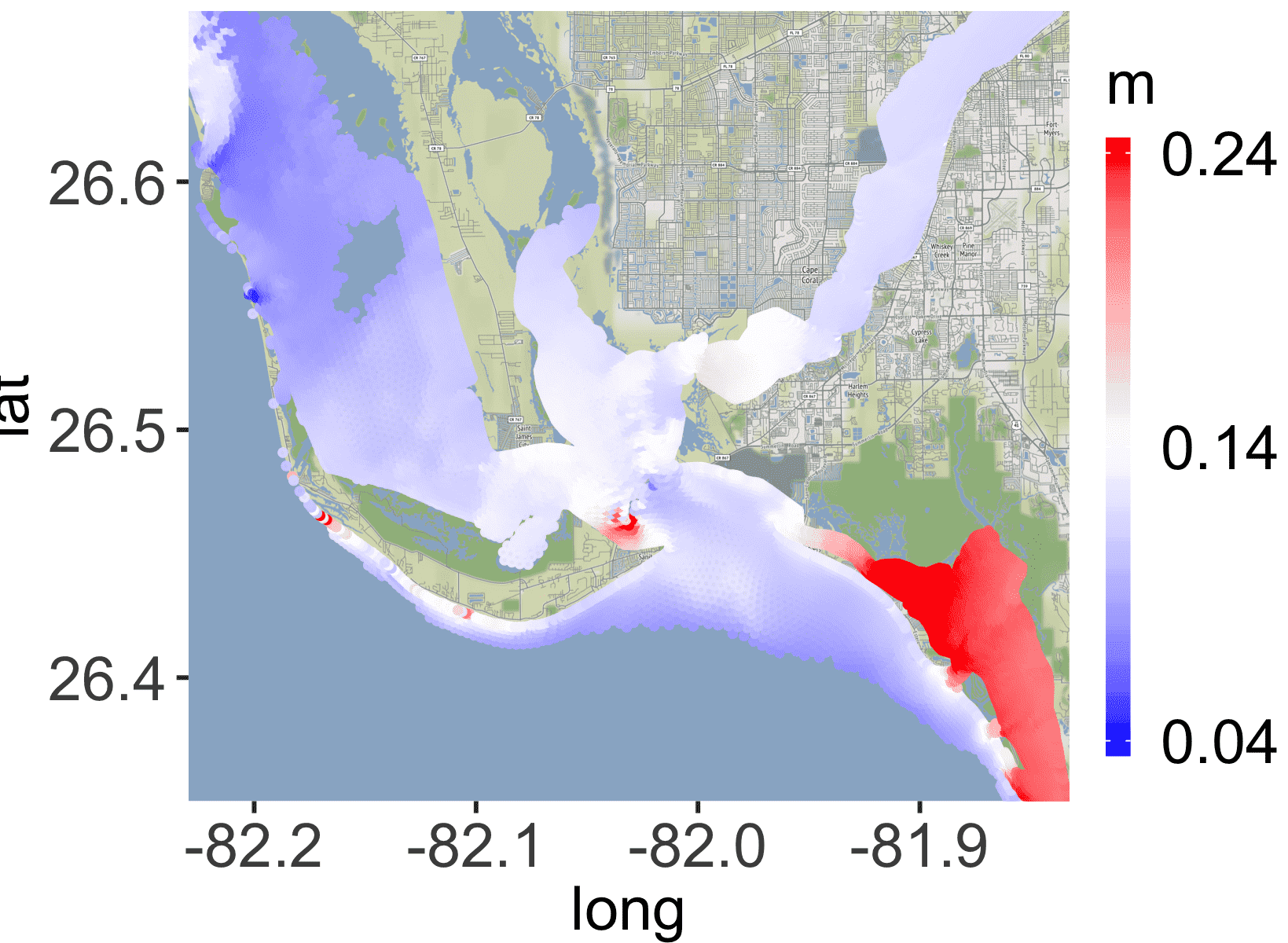}
  \caption{Difference at $\bfx_1$.}
\end{subfigure}
 
 \begin{subfigure}{.333\textwidth}
  \centering
\makebox[\textwidth][c]{ \includegraphics[width=1.0\linewidth, height=0.12\textheight]{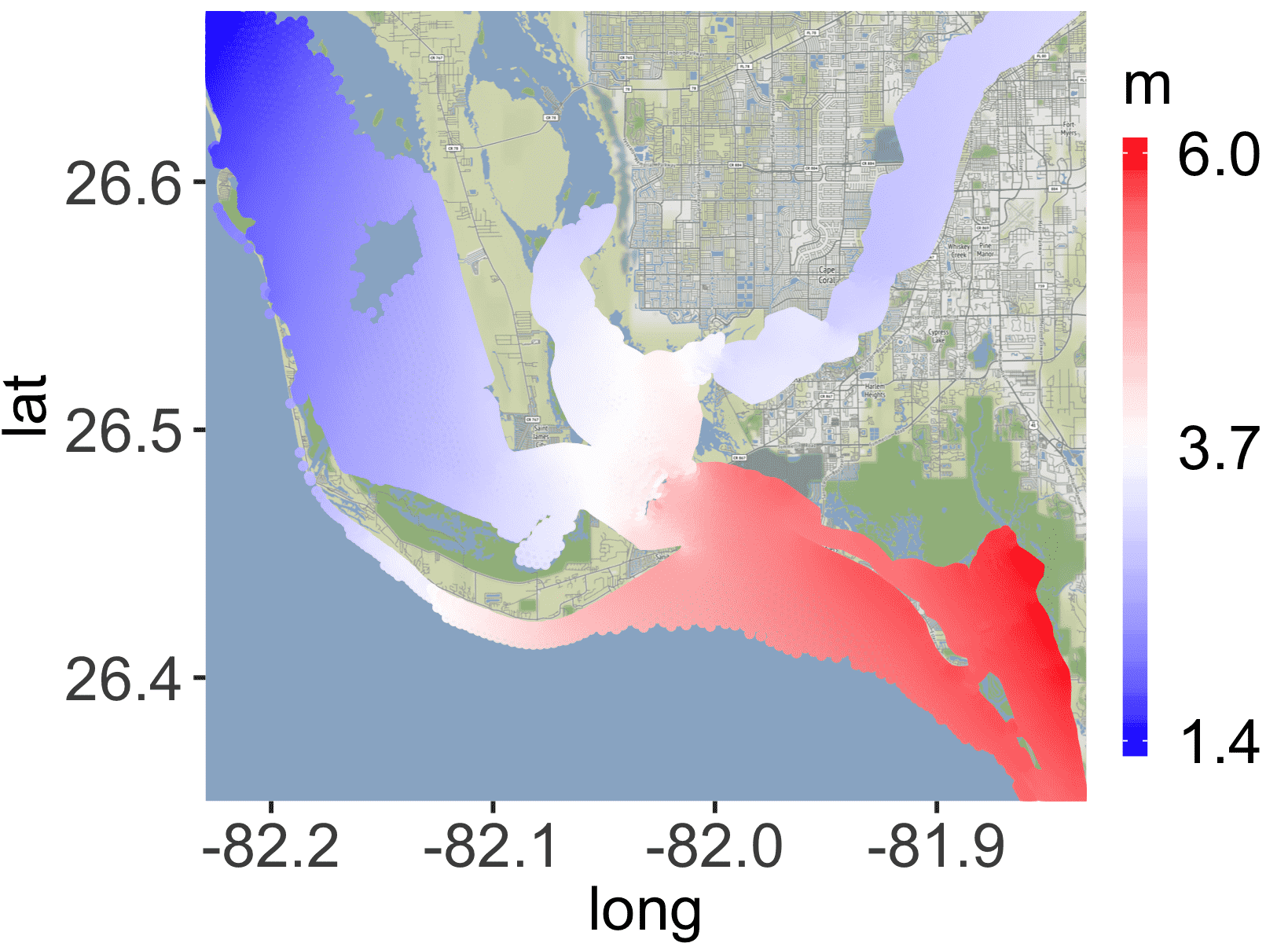}}
  \caption{ADCIRC: $\bfx_2$.}
\end{subfigure}%
\begin{subfigure}{.333\textwidth}
  \centering
  \includegraphics[width=1.0\linewidth, height=0.12\textheight]{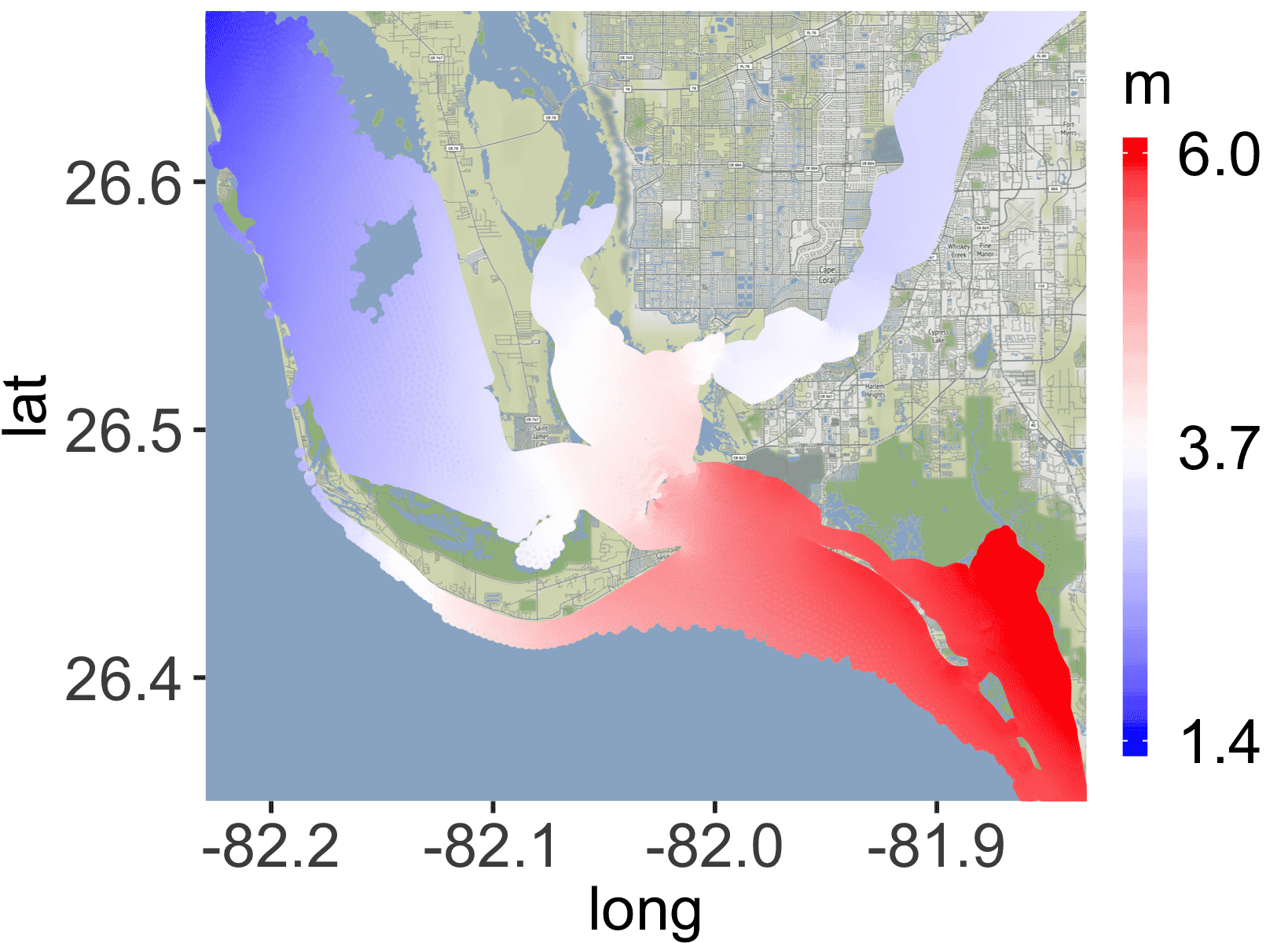}
  \caption{ADCIRC+SWAN: $\bfx_2$.}
\end{subfigure}%
\begin{subfigure}{.333\textwidth}
  \centering
  \includegraphics[width=1.0\linewidth, height=0.12\textheight]{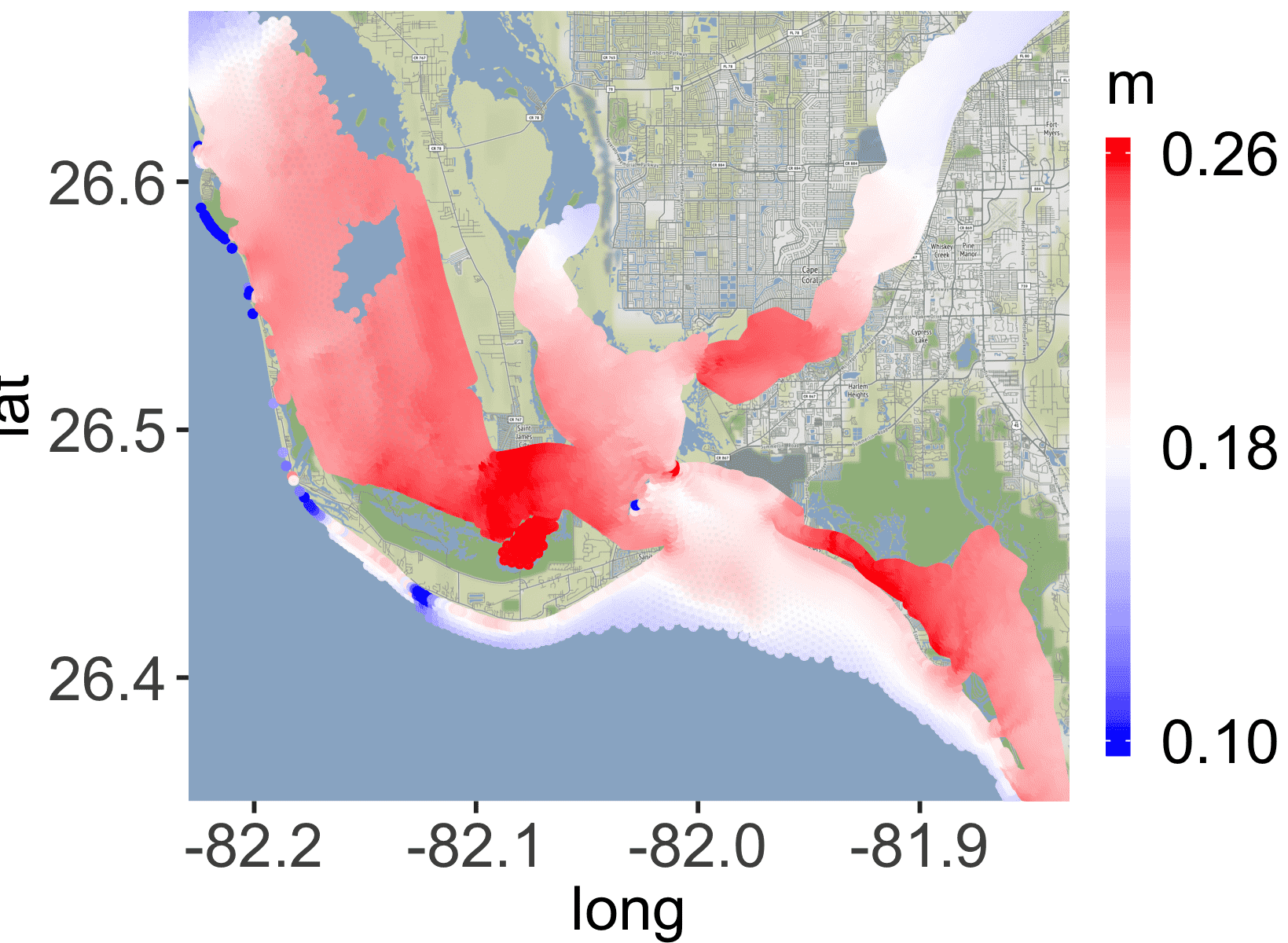}
  \caption{Difference at $\bfx_2$.}
\end{subfigure}

 \caption{Comparison of model runs from the ADCIRC simulator and the ADCIRC + SWAN simulator at two input settings. The first and second columns show the model runs from the low-fidelity simulator and the high-fidelity simulator. The third column shows the difference between the high-fidelity run and the low-fidelity run.}
\label{fig: storm surges at two input settings}
\end{figure}


\section{Multifidelity Computer Model Emulation} \label{sec: model formulation}

Section~\ref{sec: univariate model} gives a brief introduction of the general autoregressive cokriging framework, and Section~\ref{sec: multivariate model} presents the proposed methodology called \emph{parallel partial cokriging}.  

\subsection{Background on Autoregressive Cokriging Modeling}\label{sec: univariate model}

Assume that the computer model can be run at $s$ levels of sophistication corresponding to output functions $y_{1}(\cdot),\ldots,y_{s}(\cdot)$. The computer model associated to $y_{t}(\cdot)$ is assumed to be more accurate than the one associated to $y_{t-1}(\cdot)$ for $t=2,\ldots,s$. Let $\calX$ be a compact subset of $\mathbb{R}^{d}$, which is assumed to be the input space of computer model. Further assume that the computer model $y_{t}(\cdot)$ is run at a set of input values denoted by $\calX_{t}\subset\calX$ for $t=1,\ldots,s$, where $\calX_{t}$ contains $n_{t}$ input values. The autoregressive cokriging model at any input $\bfx\in\mathcal{X}_{t}$ is 
\begin{align} y_{t}(\bfx)=\gamma_{t-1}(\bfx)y_{t-1}(\bfx)+\delta_{t}(\bfx),\quad t=2,\ldots,s,\label{eqn: AR cokriging} 
\end{align}
where $\delta_{t}(\cdot)$ is the unknown location discrepancy representing the local adjustment from level $t-1$ to level $t$, and $\gamma_{t-1}(\bfx)$ is the scale discrepancy representing the scale change from level $t-1$ to level $t$ at input $\bfx$. This well-interpreted model is induced from the so called Markov assumption: no further information is gained about $y_{t}(\bfx)$ by observing $y_{t-1}(\bfx')$ for any other $\bfx\ne \bfx'$ \citep{Kennedy2000}.

To account for uncertainties in the unknown functions $y_{1}(\cdot)$ and $\delta_{t}(\cdot)$, we assign Gaussian processes
 \begin{align} \begin{split}y_{1}(\cdot)\mid\bfbeta_{1},\sigma_{1}^{2},\bfphi_{1} & \sim\mathcal{GP}(\bfh_{1}^{\top}(\cdot)\bfbeta_{1},\,\sigma_{1}^{2}r(\cdot,\cdot|\bfphi_{1})),\\ \delta_{t}(\cdot) & \sim\mathcal{GP}(\bfh_{t}^{\top}(\cdot)\bfbeta_{t},\,\sigma_{t}^{2}r(\cdot,\cdot|\bfphi_{t})), \end{split} \label{eqn: cokriging model} 
\end{align} 
for $t=2,\ldots,s$. $\bfh_{t}(\cdot)$ is a vector of basis functions and $\bfbeta_{t}$ is a vector of coefficients at fidelity level  $t$. In practice, the basis functions $\bfh_{t}(\cdot)$ along with $\bfw_{t}(\cdot)$ should be determined by exploratory data analysis following standard Gaussian process modeling procedure. $\gamma_{t-1}(\bfx)$ can be modeled as a basis function representation, i.e., $\gamma_{t-1}(\bfx)=\bfw_{t-1}(\bfx)^{\top}\bfxi_{t-1}$, where $\bfw_{t-1}(\bfx)$ is a vector of known basis functions and $\bfxi_{t-1}$ is a vector of unknown coefficients. Here, $r(\cdot,\cdot|\bfphi_{t})$ is a correlation function with correlation parameters $\bfphi_t:=(\phi_{t,1}, \ldots, \phi_{t,d})^\top$ at fidelity level $t$, where $d$ denotes the number of input parameters. Following the seminar work \citep{Sacks1989}, we use the product form of correlation structure to allow anisotropy in each input dimension, i.e., $r(\bfx,\bfx'\mid\bfphi_{t})=\prod_{i=1}^{d}r(x_{i},x_{i}'\mid\phi_{t,i})$, where each $r(x_{i},x_{i}'\mid\phi_{t,i})$ can be chosen as the Mat\'ern correlation function although other choice is available \citep[see][]{Ma2019cov}. The Mat\'ern correlation function is 
\begin{align*} r(u\mid\phi)=\frac{2^{1-\upsilon}}{\Gamma(\upsilon)}\left(\frac{\sqrt{2\upsilon}u}{\phi}\right)^{\upsilon}\mathcal{K}_{\upsilon}\left(\frac{\sqrt{2\upsilon}u}{\phi}\right), 
\end{align*}
where $u=|x_i-x_i'|$ is the Euclidean distance, $\mathcal{K}_{\upsilon}$ is the modified Bessel function of the second kind, and $\upsilon$ is the smoothness parameter that controls the mean-square differentiability of random processes. Following the standard practice of computer model emulation, $\nu=2.5$ is chosen because of its closed form expression and twice mean-square differentiability of random processes \citep[e.g.,][]{Ma2020OBayes}. 

One way to implement the autoregressive cokriging model defined in Equations~\eqref{eqn: AR cokriging} and \eqref{eqn: cokriging model} for the storm surge application is to treat spatial coordinates as an additional input parameter. This can avoid dealing with the high-dimensional output, but it requires dealing with large-scale computations in evaluating the likelihood function. In particular, we have $9,284\times200$ model output values for the low-fidelity simulator ADCIRC and $9,284\times60$ ones for the high-fidelity simulator ADCIRC + SWAN. Fitting the autoregressive cokriging model by treating spatial coordinates as an additional input would require about $1.4\times10^{19}=(9284\times 260)^{3}$ flops to evaluate the likelihood and about 46,252 gigabytes memory to store the covariance matrix with double precision. In addition, the computation of the predictive distribution would be infeasible in standard computers as large number of unknown parameters could not be analytically integrated out; this is because the experimental design in our application is not necessarily fully nested across levels of code as required in existing autorigressive cokriging implementations \citep[e.g.,][]{Kennedy2000,Qian2008,Gratiet2013}. Another challenge is to model nonstationarity in the output space, where the storm surges in Cape Coral show strong heterogeneous spatial dependence structures.

\subsection{The Parallel Partial Cokriging Emulator} \label{sec: multivariate model}

We propose the parallel partial cokriging emulator that couples ideas from the parallel partial Gaussian process \citep{Gu2016} with the cokriging model. This approach mitigates the aforementioned challenges in the storm surge application.

We consider that at each level the output functions and their additive discrepancy functions are modeled as multivariate GPs where each dimension corresponds to a spatial location. This induces a cokriging model with high-dimensional output due to the massive spatial locations available.  Two popular ideas for modeling multivariate output in computer models are the nonseparable covariance model between the input space and output space via basis-function representations for outputs \cite{Higdon2008} and the separable covariance model between input space and output space \citep{Conti2010}.  

However, these approaches have not been developed in the multifidelity setting. What follows is to discuss whether they can provide useful ideas to solve the storm surge application. Specifically, the basis-function representation approach \citep{Higdon2008} requires the output to be represented in terms of a few principal components. However, exploratory analysis suggests that this is not possible in the storm surge application, due to the large changes in the storm output from different input simulations. For the separable model \citep{Conti2010}, it is computationally infeasible due to massive output that the storm surge simulators generate for each simulation. To deal with the high-dimensional output in the separable model, \cite{Gu2016}  propose the PP Gaussian process emulator, which assumes conditionally independent Gaussian processes for each spatial coordinate with the same range parameter. This approach is able to borrow information across the data-poor input dimensions when the data are rich in the spatial dimension with linear computational cost in terms of  the number of simulator outputs $N$.  However the adoption of the PP ideas in the multifidelity setting with possibly non-nested designs is not straightforward. What follows is to introduce the proposed emulator methodology.

\subsubsection{The Parallel Partial Cokriging Model} \label{sec: PP cokriging model}

In the storm surge application, we have $s=2$ fidelity levels for computer models: ADCIRC and ADCIRC + SWAN. Each simulation of these computer models generates output values at the same $N=9,284$ spatial locations. Let $n_{t}$ denote the number of computer simulations at fidelity level $t$, where $n_1=200$ and $n_2=60$ in the application. Let $\bfy_{t,j}$ be a vector of output values over all inputs in $\calX_{t}$ at coordinate $j$ and fidelity level  $t$. Let $\bfy^{t,\mathscr{D}}:=[\bfy_{t,1}, \ldots, \bfy_{t,N}]$ be a $n_t\times N$ matrix of output values across all input values and all spatial locations at level $t$. Let $\bfy_{j}^{\mathscr{D}}:=(\bfy_{1,j}^{\top},\ldots,\bfy_{s,j}^{\top})^{\top}$ be a vector of output values at coordinate $j$ over all inputs in $\calX$ and all fidelity levels.

For each coordinate $j=1,...,N$ and fidelity level $t=2,\ldots,s$, we specify the cokriging model for any input $\bfx \in \mathcal{X}_t$ as  
\begin{align} \label{eqn: AR in multivariate model}
y_{t,j}(\bfx)=\gamma_{t-1,j}(\bfx)y_{t-1,j}(\bfx)+\delta_{t,j}(\bfx).
\end{align}
Here, for $y_{t-1,j}(\bfx)$ and $\delta_{t,j}(\bfx)$, we assign Gaussian processes priors:
\begin{align}
\begin{split}y_{1,j}(\cdot)\mid\bfbeta_{1,j},\sigma_{1,j}^{2},\bfphi_{1} & \sim\mathcal{GP}(\bfh_{1}^{\top}(\cdot)\bfbeta_{1,j},\,\sigma_{1,j}^{2}r(\cdot,\cdot|\bfphi_{1})),\\
\delta_{t,j}(\cdot) & \sim\mathcal{GP}(\bfh_{t}^{\top}(\cdot)\bfbeta_{t,j},\,\sigma_{t,j}^{2}r(\cdot,\cdot|\bfphi_{t})),
\end{split}
\label{eqn: model for coordinate j}
\end{align}
where $\bfh_t(\cdot)$ is a vector of common fixed basis functions across all $N$ spatial locations. In the storm surge application, these basis functions are assumed to be constant functions based on exploratory analysis. For $\gamma_{t-1,j}(\bfx)$, it is represented as a function of $\bfx$, but it will be assumed as unknown constants in the real application, i.e., $\gamma_{t-1,j}(\bfx) := \gamma_{t-1,j}$, since the physics knowledge suggests that the coupled computer model ADCIRC+SWAN tends to generate higher storm surges than ADCIRC due to the wave effects.
For each coordinate $j$, we assume different regression parameters $\bfbeta_j:=\{\bfbeta_{1,j},$ $\ldots,\bfbeta_{s,j}\}$, different variance parameters $\bfsigma^2_j:=\{\sigma_{1,j}^{2},\dots,$ $\sigma_{s,j}^{2}\}$, and different scale discrepancy parameters $\bfgamma_j:=\{\gamma_{1,j},\ldots,\gamma_{t-1,j}\}$. The correlation parameters $\bfphi_t$ are assumed to be the same across different spatial coordinates to simplify computations following \citep{Gu2016}. 

We call the proposed model defined in \eqref{eqn: AR in multivariate model} and \eqref{eqn: model for coordinate j} the \emph{parallel partial cokriging} model. The ``parallel partial'' reflects the fact that our model can be thought of as involving different  autoregressive cokriging models at each coordinate which share common input correlation parameters. As mentioned in Section~\ref{sec: univariate model}, likelihood-based inference requires that  the collection of input runs at each level is nested in order to have closed-form inference, i.e., $\calX_{t}\subset\calX_{t-1}$. In the next section, we present a {data-augmentation} technique to deal with possibly non-nested design so that statistical inference based on the PP cokriging model can be carried out. 

\subsubsection{Data Augmentation for Non-Nested Design} \label{sec: DA}
The specification of convenient priors facilitating the tractability of the  marginal likelihood and the analytic integration of the unknown parameters require the available experimental design to be fully hierarchical nested; i.e. $\mathcal{X}_{t+1}\subset \mathcal{X}_{t}$. This   restrictive requirement can be found by examining the likelihood that results from \eqref{eqn: AR in multivariate model} and \eqref{eqn: model for coordinate j} and it is inherited from the cokriging model. As this requirement is not satisfied in our application, to address this issue we propose a suitable data-augmentation remedy that imputes the data with missing values to create a fully nested design.

We replace the original input design $\calX_{t}$ by $\tilde{\calX}_{t}=\calX_{t}\cup\mathring{\calX}_{t}$ such that $\mathring{\calX}_{t}:=\calX_{(t+1):s}\setminus\calX_{t}$ represents a collection of missing inputs that have not been run by the simulator at fidelity level $t$ in order to form a nested design, where $\calX_{(t+1):s}:=\cup_{k=t+1}^{s}\calX_{k}$ represents the collection of observed inputs from fidelity level $t+1$ up to the highest fidelity level $s$. Let $\mathring{\bfy}_{t,j}$ be a vector of missing output values all inputs in $\mathring{\calX}_{t}$ at coordinate $j$ and fidelity level  $t$. In what follows, we use $\tilde{n}_t$ to denote the number of input values in the augmented set $\tilde{\calX}_t$. Let $\mathring{\bfy}_{j}^{\mathscr{D}}:=(\mathring{\bfy}_{1,j}^{\top},\ldots,\mathring{\bfy}_{s,j}^{\top})^{\top}$ be a vector of missing output values at coordinate $j$ over all fidelity levels, where $\mathring{\bfy}_{s,j}^{\top}$ is defined to be empty for notational convenience. Let $(\mathring{\bfy}_t)_{j=1}^N:=(\mathring{\bfy}_{t,1}^\top, \ldots, \mathring{\bfy}_{t,N}^\top)^\top$ be a vector of missing output at fidelity level  $t$ over all $N$ spatial coordinates. Let $\tilde{\bfy}_{t,j}:=(\bfy_{t,j}^{\top},\mathring{\bfy}_{t,j}^{\top})^{\top}$ be a vector of augmented  output over all inputs at fidelity level  $t$ at the $j$th spatial coordinate and $\tilde{\bfy}_{j}^{\mathscr{D}}:=(\bfy_{j}^{\mathscr{D}\top},\mathring{\bfy}_{j}^{\mathscr{D}\top})^{\top}$ be a vector of augmented output over all inputs at  coordinate $j$. Then the augmented sampling distribution at coordinate $j$
is 
\begin{align}
\begin{split}
L(\tilde{\bfy}_{j}^{\mathscr{D}}\mid\bfbeta_{j},\bfgamma_{j},\bfsigma_{j}^{2},\bfphi)& \propto\pi(\tilde{\bfy}_{1,j}\mid\bfbeta_{1,j},\sigma_{1,j}^{2},\bfphi_{1})
\prod_{t=2}^{s}\pi(\tilde{\bfy}_{t,j}\mid \tilde{\bfy}_{t-1,j}, \bfbeta_{t,j},\gamma_{t,j},\sigma_{t,j}^{2},\bfphi_{t}),
\end{split}
\end{align}
with 
\begin{align*}
    \pi(\tilde{\bfy}_{1,j}\mid\bfbeta_{1,j},\sigma_{1,j}^{2},\bfphi_{1}) &= \mathcal{N}(\widetilde{\bfH}_{1} \bfbeta_{1,j}, \sigma^2_{t,j} \tilde{\bfR}_1), \\ 
    \pi(\tilde{\bfy}_{t,j}\mid \tilde{\bfy}_{t-1,j}, \bfbeta_{t,j},\gamma_{t,j},\sigma_{t,j}^{2},\bfphi_{t}) &= \mathcal{N}( \widetilde{\bfH}_{t} \bfbeta_{t,j} + \widetilde{W}_{t-1,j} \gamma_{t-1,j}, \sigma^2_{t,j} \tilde{\bfR}_{t}),
\end{align*}
where $\widetilde{\bfH}_{t}:=\bfh_t(\tilde{\calX}_t)$, $\tilde{\bfR}_t:=r(\tilde{\calX}_t, \tilde{\calX}_t \mid \bfphi_t)$,   and $\widetilde{W}_{t-1,j}:=\bfy_{t-1,j}(\tilde{\calX}_t)$.  
The proposed augmentation allows the joint likelihood of the complete output values to be factorized into Gaussian likelihood kernels of smaller dimensionality, where we can specify conditional conjugate priors and break the training problem into smaller more tractable ones. 

In Section~\ref{sec: toy example} and Section~\ref{sec: Example with functional output} of the Supplementary Material, artificial examples are used to illustrate the performance of autoregressive cokriging with a \emph{non-nested design} with the proposed data-augmentation technique. These examples show that Incorporating information from a low-fidelity code can improve inferential results on the high-fidelity code as well as demonstrate the good performance of the PP cokriging emulator.

\subsubsection{Empirical Bayesian Inference via an MCEM Algorithm} \label{sec: parameter estimation}

Let $\tilde{\bfy}^{\mathscr{D}}:=(\tilde{\bfy}_{1}^{\mathscr{D}},\ldots,\tilde{\bfy}_{N}^{\mathscr{D}})$ be a $(\sum_{t=1}^sn_t)\times N$ matrix of augmented outputs over all fidelity levels and all spatial locations. We introduce the following notation: $\bfbeta:=(\bfbeta_{1}^{\top},\ldots,\bfbeta_{N}^{\top})^{\top}$, $\bfgamma:=(\bfgamma_{1}^{\top},\ldots,\bfgamma_{N}^{\top})^{\top}$, and $\bfsigma^{2}:=(\bfsigma_{1}^{2\top},\ldots,\bfsigma_{N}^{2\top})^{\top}$. The overall augmented sampling distribution across all $N$ spatial locations is the product of each augmented sampling distribution 
\begin{eqnarray}
L(\tilde{\bfy}^{\mathscr{D}}\mid\bfbeta,\bfgamma,\bfsigma^{2},\bfphi)=\prod_{j=1}^{N}L(\tilde{\bfy}_{j}^{\mathscr{D}}\mid\bfbeta_{j},\bfgamma_{j},\bfsigma_{j}^{2},\bfphi).
\end{eqnarray}

We specify the following a priori model  for the unknown  parameters 
\begin{align} \label{eqn: prior model in multivariate model}
    \begin{split}\pi(\bfbeta,\bfgamma,\bfsigma^{2},\bfphi) 
  &=\pi(\bfphi)\prod_{j=1}^{N}\left\{ \pi(\bfbeta_{s,j}, \sigma_{s,j}^2) \prod_{t=1}^{s-1}\pi(\bfbeta_{t,j},\gamma_{t,j},\sigma_{t,j}^{2})\right\} ,
\end{split}
\end{align}
where  independent Jeffreys priors can be assigned on $\bfbeta_{t,j}, \gamma_{t-1,j}, \sigma_{t,j}^{2}$: i.e., $\pi(\bfbeta_{t,j}, \gamma_{t-1,j}, \sigma_{t,j}^{2})$  $\propto \frac{1}{\sigma^2_{t,j}}$ for $t=2, \ldots, s$, and $\pi(\bfbeta_{1,j}, \sigma^2_{1,j}) \propto \frac{1}{\sigma^2_{1,j}}$ at each coordinate $j$. For each level $t$, we assign a jointly robust prior \citep{Gu2019} on $\bfphi_{t}$ which is  a proper prior and has desirable properties for Gaussian process emulation. 

After integrating out model parameters $\{\bfbeta,\bfgamma,\bfsigma^{2}\}$,
the conditional distribution of $\tilde{\bfy}^{\mathscr{D}}$ given $\bfphi$ is 
\begin{align} \label{eqn: aug marginal dist}
\begin{split}
\pi(\tilde{\bfy}^{\mathscr{D}}\mid\bfphi) 
 & =\prod_{j=1}^{N}\pi(\tilde{\bfy}_{1,j}\mid\bfphi_{1})\prod_{t=2}^{s}\pi(\tilde{\bfy}_{t,j}\mid\bfphi_{t},\tilde{\bfy}_{t-1,j}),
\end{split}
\end{align}
where each conditional distribution is given in Section~\ref{sec: MCEM} of the Supplementary Material. The augmented posterior distribution $\pi(\bfphi \mid \tilde{\bfy}^{\mathscr{D}})$ can be obtained via Bayes' theorem: $\pi(\bfphi \mid \tilde{\bfy}^{\mathscr{D}}) \propto \pi(\tilde{\bfy}^{\mathscr{D}}\mid\bfphi) \times \pi (\bfphi)$, where $\pi (\bfphi)$ is a proper prior.

To estimate $\bfphi$, we adopt an empirical Bayesian approach which  maximizes the integrated posterior $\pi(\bfphi\mid \tilde{\bfy}^{\mathscr{D}})$, since empirical Bayes approaches provide faster computational results compared to the fully Bayesian ones which often require Markov chain Monte Carlo methods. As direct maximization of $\pi(\bfphi\mid \tilde{\bfy}^{\mathscr{D}})$ is impossible due to the intractable form of $\pi(\bfphi\mid \tilde{\bfy}^{\mathscr{D}})$, we introduce a Monte Carlo expectation-maximization (MCEM) algorithm \citep{Wei1990, Dempster1977} to tackle this challenge. The MCEM algorithm consists of two important steps: in E-step, the $Q$-function is defined as a function of certain conditional expectation that is approximated by Monte Carlo methods;  in M-step, maximization of this function is performed with respect to the unknown parameters; for detailed development of this MCEM algorithm; see Algorithm~\ref{alg: MCEM in mult} of the Supplementary Material.

\subsubsection{Prediction} \label{sec: prediction}

For any new input $\bfx_0\in {\calX}$, the goal is to make prediction for $\{y_{s,j}(\bfx_0), j=1, \ldots, N\}$ based upon the data $\bfy^{\mathscr{D}}$. With the prior model~\eqref{eqn: prior model in multivariate model}, the predictive distribution of interest is $y_{s,j}(\bfx_0 \mid \bfy^{\mathscr{D}}, \bfphi)$ for $j=1, \ldots, N$. 

In what follows, we derive a new approach to predicting $y_{s,j}(\bfx_0)$ and termed it as a \emph{sequential prediction} approach. The idea is to add the new input $\bfx_0$ to each collection of missing inputs $\mathring{\calX}_t$ such that a hierarchically nested design can be obtained. To fix the notation, we define $\mathring{\calX}_t^0: = \mathring{\calX}_t \cup \{\bfx_0\}$ and $\tilde{\calX}_t^0 := {\calX}_t \cup \mathring{\calX}_t^0$. Hence the collection of inputs $\{ \tilde{\calX}_t^0: t=1, \ldots, s\}$ also forms a nested design with $\tilde{\calX}_t^0 \subset \tilde{\calX}_{t-1}^0$. Let $\bfy(\bfx_0):=(\bfy_1(\bfx_0)^\top, \ldots, \bfy_s(\bfx_0)^\top)^\top$ with $\bfy_j(\bfx_0):=(y_{1,j}(\bfx_0), \ldots, y_{s,j}(\bfx_0))^\top$. The predictive distribution of $\bfy(\bfx_0)$ given $\tilde{\bfy}^{\mathscr{D}}$ and $\bfphi$ is the product of $N$ independent distributions with 
\begin{align*}
   \pi(\bfy(\bfx_0) \mid \tilde{\bfy}^{\mathscr{D}}, \bfphi) = \prod_{j=1}^N \pi(\bfy_j(\bfx_0)\mid \tilde{\bfy}_j^{\mathscr{D}}, \bfphi).
\end{align*}

The following result gives the predictive distribution at each spatial coordinate. Its proof follows from standard kriging theory.

\begin{proposition}[\textbf{Sequential Prediction}] \label{thm: predict}
Given the PP cokriging model and the non-informative priors~\eqref{eqn: prior model in multivariate model}, the predictive distribution across all fidelity levels at spatial coordinate $j$ for $j=1, \ldots, N$ is 
\begin{align} \label{eqn: predictive distribution in multivariate model}
\begin{split}
        \pi(\bfy_j(\bfx_0) \mid \tilde{\bfy}_j^{\mathscr{D}}, \bfphi) &= \pi(y_{1,j}(\bfx_0) \mid \tilde{\bfy}_{1,j}, \bfphi_1) \prod_{t=2}^{s-1} \pi(y_{t,j}(\bfx_0) \mid \tilde{\bfy}_{t,j}, \tilde{\bfy}_{t-1,j},  \\
        &\quad\quad y_{t-1,j}(\bfx_0),  \bfphi_t) \times \pi(y_{s,j}(\bfx_0) \mid y_{s-1,j}(\bfx_0), \bfy_{s,j}, \bfphi_s).
\end{split}
\end{align}
The conditional distributions on the right-hand side of \eqref{eqn: predictive distribution in multivariate model} are Student-$t$ distributions with degrees of freedom, location, and scale parameters given by
\begin{align*}
\begin{split}\nu_{t,j} & :=n_{t} - q_t,\\
\bfmu_{t,j} & :=\bfT(\bfx_0)\hat{\bfb}_{t,j} +\bfr_t^\top(\bfx_0)\tilde{\bfR}_t^{-1}(\tilde{\bfy}_{t,j}-\tilde{\bfT}_{t,j}\hat{\bfb}_{t,j}),\\
V_{t,j} & :=\frac{S^2(\bfphi_t, \tilde{\bfy}_{t,j})}{n_{t}-q_t}c_{t,j}(\bfx_0),
\end{split}
\end{align*}
with $\hat{\bfb}_{t,j} :=(\tilde{\bfT}_{t,j}^\top \tilde{\bfR}_t^{-1} \tilde{\bfT}_{t,j})^{-1} \tilde{\bfT}_{t,j}^\top \tilde{\bfR}_t^{-1}\tilde{\bfy}_{t,j}$, $S^2(\bfphi_t, \tilde{\bfy}_{t,j})  := (\tilde{\bfy}_{t,j} - \tilde{\bfT}_{t,j}\hat{\bfb}_{t,j})^\top \tilde{\bfR}_t^{-1} (\tilde{\bfy}_{t,j} - \tilde{\bfT}_{t,j}\hat{\bfb}_{t,j})$,
\begin{align*}
\begin{split}
c_{t,j}(\bfx_0) & :=r(\bfx_0, \bfx_0|\bfphi_t) - \bfr_t^\top(\bfx_0)\tilde{\bfR}_t^{-1}\bfr_t(\bfx_0) \\
&\quad+ [{\bfT}_{t,j}(\bfx_0)-\tilde{\bfT}_{t,j}^{\top}\tilde{\bfR}_{t}^{-1}{\bfr}_t(\bfx_0)]^{\top}(\tilde{\bfT}_{t,j}^{\top}\tilde{\bfR}_{t}^{-1}\tilde{\bfT}_{t,j})^{-1}[{\bfT}_{t,j}(\bfx_0) -\tilde{\bfT}_{t,j}^{\top}\tilde{\bfR}_{t}^{-1}{\bfr}_t(\bfx_0)],
\end{split}
\end{align*}
where $\bfr_t(\bfx_0):=r(\tilde{\calX}_t, \bfx_0\mid \bfphi_t)$ and ${\bfT}_{t,j}(\bfx_0)=[{\bfh}_t(\bfx_0), {y}_{t-1,j}(\bfx_0)]$.
\end{proposition}

Proposition~\ref{thm: predict} shows that a random sample from the predictive distribution can be sequentially drawn from a collection of conditional distributions in an efficient manner, since the total computational cost required for such a simulation is $O(\sum_{t=1}^s \tilde{n}_t^3)$ at each spatial coordinate. As the correlation matrix is the same across all spatial locations at each fidelity level, the total computational cost to obtain one single random sample from the predictive distribution across all spatial locations is $O(\sum_{t=1}^s \tilde{n}_t^3 + N \sum_{t=1}^s \tilde{n}_t^2 )$, which is linear in $N$ when $\sum_{t=1}^s \tilde{n}_t^2 \ll N$. Notice that a sample from $\pi(\bfy_{s}(\bfx_0) \mid \bfy^{\mathscr{D}}, \bfphi)$ can be obtained via the composition sample technique based on $\pi(\bfy_{s}(\bfx_0) \mid \bfy^{\mathscr{D}}, \bfphi) = \int \pi(\bfy_{s}(\bfx_0) \mid \tilde{\bfy}^{\mathscr{D}}, \bfphi) \pi(\mathring{\bfy}^{\mathscr{D}} \mid {\bfy}^{\mathscr{D}}, \bfphi)\,  d \mathring{\bfy}^{\mathscr{D}}$, with the missing data $\mathring{\bfy}^{\mathscr{D}}$ being generated from the distribution  $\pi(\mathring{\bfy}^{\mathscr{D}} \mid {\bfy}^{\mathscr{D}}, \bfphi)$ given in Section~\ref{app: distributions} of the Supplementary Material. In practice, $\bfphi$ needs to be replaced with its maximum a posteriori estimate obtained via the MCEM in Algorithm~\ref{alg: MCEM in mult} of the Supplementary Material. {Vanilla Monte Carlo approximation is used to compute the predictive mean and predictive variance of the predictive distribution $\pi(\bfy_{s}(\bfx_0) \mid \bfy^{\mathscr{D}}, \bfphi)$ when the design is not nested.} For a nested design, closed-form expressions for the posterior predictive mean and posterior variance across all fidelity levels are given in Lemma~\ref{thm: prediction in nested design} of the Supplementary Material. 

In Section~\ref{app: pred} of the Supplementary Material, we derive the \emph{one-step prediction} formula based on the idea in \cite{Kennedy2000} and \cite{Gratiet2013} when the design is nested. The sequential prediction formula in Proposition~\ref{thm: predict} has several advantages over the one-step prediction formula in Section~\ref{app: pred} of the Supplementary Material. First, the high-dimentionality of simulator output makes the one-step prediction formula computationally infeasible in the storm surge application, since this sequential prediction formula has $O(N(\sum_{t=1}^s \tilde{n}_t^3))$ computational cost. Second, to obtain predictive distribution $\pi(y_{s,j}(\bfx_0)\mid \bfy^{\mathscr{D}}, \bfphi)$, model parameters $\{\bfgamma, \bfsigma^2 \}$ have to be numerically integrated out in the one-step prediction formula. Thus, Monte Carlo approximation is required to take account of uncertainty in both model parameters $\{ \bfgamma, \bfsigma^2 \}$ and missing data $\mathring{\bfy}^{\mathscr{D}}$. This will even hinder the practicality of the one-step prediction formula for large number of spatial locations. In contrast, the sequential prediction formula explicitly integrate model parameters $\{ \bfbeta, \bfgamma, \bfsigma^2 \}$ without relying on Monte Carlo approximations.

\subsubsection{Computational Cost} \label{sec: cost}
The PP cokriging model can allow efficient computations in output space due to the following reasons. In parameter estimation, each iteration of the MCEM algorithm in Algorithm~\ref{alg: MCEM in mult} of the Supplementary Material requires the computation of so-called $Q$-function in E-step of MCEM and its numerical optimization with respect to correlation parameters $\bfphi_t$ at each level of code. The evaluation of $Q$-function requires matrix inversions and matrix multiplication of size $\tilde{n}_t\times \tilde{n}_t$. Such an evaluation requires $O(MN \tilde{n}_t^2 + \tilde{n}_t^3)$ computational cost across $N$ spatial locations and $M$ Monte Carlo samples. If the numerical optimization requires $k$ evaluations of $Q$-function to find the optimal value, the overall computational cost in each iteration of the MCEM algorithm is $O(kMN \sum_{t=1}^s $ $\tilde{n}_t^2 + k \sum_{t=1}^s\tilde{n}_t^3)$ without any parallelization. Notice that parallelization across $t$ is possible according to Algorithm~\ref{alg: MCEM in mult} of the Supplementary Material. This is a one-time computational cost.  In the predictive distribution~\eqref{eqn: predictive distribution in multivariate model}, each conditional distribution requires matrix inversions and matrix multiplication of size $\tilde{n}_t\times \tilde{n}_t$. This requires $O(\tilde{n}^3_t $\ $+ N\tilde{n}_t^2)$ computational cost. One random sample generated from the predictive distribution at one new input value requires  $O(\sum_{t=1}^s \tilde{n}^3_t+ N\sum_{t=1}^s \tilde{n}_t^2)$ computational cost. As $\tilde{n}_t$ is typically small (a few hundreds at most) in many real applications, the computational cost in prediction is linear in the number of spatial locations, $N$. This indicates the scalability of the proposed method to handle high-dimensional output for multifidelity computer models.

\subsection{Near Equivalence of PP Cokriging and Separable Cokriging} 

The PP cokriging emulator turns out to have nearly the same marginal predictive distributions at each spatial location as a \emph{separable autoregressive cokriging} model, where the separability refers to the fact that unknown spatial correlation matrices $\bfSigma:=\{\bfSigma_t: t=1, \ldots, s\}$ on the spatial domain are assumed for the Gaussian process that approximates the level 1 code and the Gaussian processes in the location-scale discrepancy function. This result is an analogy of Theorem 6.1 of \cite{Gu2016} for autoregressive cokriging models. In what follows, we will assume a nested design. This leads to the following matrix-normal distribution for the separable autoregressive cokriging with $s$ levels:
\begin{align} \label{eqn: MN likelihood}
\begin{split}
L({\bfy}^{\mathscr{D}} \mid \bfB, \Gamma, \bfSigma, \bfphi) &= \mathcal{MN}_{n_1,N} (\bfy_1\mid \bfH_1 \bfB_1, \bfR_1, \bfSigma_1) \\
&\quad \times \prod_{t=2}^s \mathcal{MN}_{n_t,N} (\bfy_t\mid \bfH_t \bfB_t + W_{t-1} \Gamma_{t-1}, \bfR_t, \bfSigma_t),
\end{split}
\end{align}
where $\mathcal{MN}_{n_t, N}(\cdot, \cdot, \cdot)$ is a $n_t\times N$ matrix normal distribution.  $\bfB:=\{\bfB_1, \ldots, \bfB_s\}$ with $\bfB_t:=[\bfbeta_{t,1}, \ldots, \bfbeta_{t,N}]$ being a $q_t \times N$ matrix of unknown mean parameters. $W_{t-1}:=[y_{t-1,1}(\mathcal{X}_t),$ $\ldots, y_{t-1,N}(\mathcal{X}_t)]$ is an $n_t\times N$ matrix. $\bfrho_{t-1}:=(\gamma_{t-1,1}, \ldots, \gamma_{t-1,N})^\top$ is an $N\times 1$ vector. $\Gamma_{t-1}:=\diag\{\bfrho_{t-1}\}$ is an $N\times N$ diagonal matrix. 

We can show that  under the constant prior on the location parameters and any prior on spatial correlation matrices $\{\bfSigma_t: t=1, \ldots, s\}$, the resulting predictive mean in the separable autoregressive cokriging model is simply the predictive mean in the  PP cokriging model, and the resutling predictive variance in the separable autoregressive cokriging model is almost equal to the predictive variance in the PP cokriging model. Thus, if only the mean and marginal predictive variance are concerned, there is no need to introduce spatial correlation structures in the output space. This fact is established in the next theorem with its proof given in Section~\ref{sec: proof in SEP cokriging} of the Supplementary Material.
\begin{theorem} \label{thm: SEP}
For a separable autoregressive cokriging model with likelihood in~\eqref{eqn: MN likelihood}, given the objective prior 
\begin{align} \label{eqn: constant prior in SEP cokriging}
\pi(\bfB_1, \ldots, \bfB_s, \Gamma_1, \ldots, \Gamma_{s-1} \mid \bfSigma_1, \ldots, \bfSigma_s, \bfphi) \propto 1
\end{align}
for the parameters of mean functions and scale discrepancy functions, the following hold:
\begin{itemize}
\item[1.] The posterior predictive mean at level  $t$ in the separable autoregressive cokriging emulator, at an unobserved input $\bfx_0$ and at spatial coordinate $j$, is identical to the PP cokriging emulator mean. 
\item[2.] The posterior predictive variance at level $t$ in the separable autoregressive cokriging emulator, at an unobserved input $\bfx_0$ and at spatial coordinate $j$, depends on $\bfSigma_t$ through the posterior mean of the $j$th diagonal term $E[\bfSigma_t^{jj} \mid \bfy^{\mathscr{D}}, \bfphi_t] $; it is  identical to the PP cokriging emulator variance, if $E[\bfSigma_t^{jj} \mid \bfy^{\mathscr{D}}, \bfphi_t] = (n_t-q_t)\hat{\sigma}^2_{t,j}/(n_t-q_t-2)$, where $\hat{\sigma}^2_{t,j}$, defined in Lemma~\ref{thm: prediction in nested design} of the Supplementary Material, is an estimator of $\sigma^2_{t,j}$ under nested design. 
\end{itemize}
\end{theorem}

Theorem~\ref{thm: SEP} indicates that the PP cokriging emulator and the separable autoregressive cokriging emulator can have the same predictive mean and marginal predictive variance under nested design when $n_t-q_t$ is large as in practice, since the new posterior expectation of $\bfSigma_{t}^{jj}=\sigma^2_{t,j}$ will be approximately the same as $\hat{\sigma}^2_{t,j}$. Moreover, when the scale discrepancy function $\gamma_{t,j}$ is fixed at one. It is easy to check that the results in Theorem~\ref{thm: SEP} still hold. When the design is not nested, it can  be readily checked that PP cokriging  possesses the properties as in the nested design with the same proof except for notational difference by introducing the missing data. That is, the fact that missing data are available only affects how computation is carried out and does not alter the properties of the PP cokriging emulator. Thus, in practice, the predictive mean and predictive variance in PP cokriging can be used in practice as long as posterior draws from the  predictive distribution over the spatial domain is not concerned. In addition, risk assessment of storm surges in the FEMA report \citep{FEMA2017} only requires computation of the annual exceedance probability at each location, which is a function of predictive mean and predictive variance. 

\section{Analysis of Storm Surge Simulations} \label{sec: application}
In this section, the PP cokriging emulator is used to analyze high-dimensional output from the ADCIRC simulator and the ADCIRC + SWAN simulator.

The analysis of emulation results and numerical comparison is presented to demonstrate the advantage of the parallel partial cokriging model with high-dimensional output. The proposed PP cokriging methodology is implemented in the \textsf{R} package \texttt{ARCokrig} \citep{Ma2019ARCokrig}. The PP cokriging model is trained on 200 inputs from the ADCIRC simulator and 60 inputs from the ADCIRC + SWAN simulator, where 50 inputs from the second fidelity level are nested within the first fidelity level. With such model runs, the proposed method can be applied readily. To measure predictive performance, we run the ADCIRC + SWAN simulator at 166 inputs from the original 226 inputs after excluding 60 inputs as described in Section~\ref{sec: model simulation}.

Moreover, we train the PP kriging emulator via the \textsf{R} package \texttt{RobustGaSP} with the same 60 high-fidelity runs used in the PP cokriging emulator. As the landfall location is along the coastline, we define a distance measure $d_{\Ell}$ to replace the actual longitude and latitude coordinate of the landfall location. Specifically, we first choose a reference location $\Ell_0$ to be the landfall location that is in the most northwest direction along the coastline. Then for any landfall location $\Ell$, $d_{\Ell}$ is defined as the spherical distance between $\Ell$ and $\Ell_0$. As the coastline is unique, the landfall location determines the distance measure $d_{\Ell}$ and vice versa. In the implementation of the PP kriging emulator and the PP cokriging emulator, the input variables are $\Delta P$, $R_p$, $V_f$, $\theta$, $B$, $d_{\Ell}$. Evaluation of predictive performance is based on root-mean-squared-prediction errors (RMSPE), coverage probability of the 95\% equal-tail credible interval (CVG(95\%)), average length of the 95\% equal-tail credible interval (ALCI(95\%)), and continuous rank probability score \citep[CRPS;][]{Gneiting2007}. We also compute the Nash-Sutcliffe model efficiency coefficient (NSME): 
$$\text{NSME}:= 1 - \frac{\sum_{j=1}^N \sum_{\bfx \in A}\{m_j(\bfx) - y_{2,j}(\bfx)\}^2}{\sum_{j=1}^N \sum_{\bfx \in A} \{m_j(\bfx)-\bar{y}_{2,j}\}^2},$$ 
where $m_j(\bfx)$ is the value to predict the high-fidelity simulator $y_{2,j}(\cdot)$ at input $\bfx$ and $j$-th spatial coordinate and $\bar{y}_{2,j}: = \sum_{\bfx \in \calX_2} y_{2,j}(\bfx)/n_2$ is the average of code $y_{2,j}(\cdot)$ at $j$-th spatial coordinate. The NSME computes the residual variance with the total variance, which has a similar meaning as the coefficient of determination. The closer NSME is to 1, the more accurate the model is. If the ADCIRC simulator is used to predict the ADCIRC + SWAN simulator at these 166 new inputs, NSME is -1.089, which indicates that the mean of the training data in the high-fidelity simulator is a much better predictor than the low-fidelity simulator at these inputs.

\subsection{Emulation Accuracy} \label{sec: emulation result for storm surge}

 In the PP cokriging model, we include constant basis functions $\bfh_t(\cdot)$ according to exploratory data analysis. The scale discrepancy function $\gamma_{t-1,j}$ is assumed to be an unknown constant that does not depend on input. This assumption still allows the scale discrepancy function to vary across different fidelity levels and spatial coordinates. For parameter estimation, the MCEM algorithm is initialized with multiple starting values and took about 5 hours to achieve convergence for a pre-specified tolerance on a 2-core Macbook Pro with 8 GB random access memory. The predictive mean and predictive variance is approximated by 30 random draws from the distribution \eqref{eqn: predictive distribution in multivariate model}. Negligible improvement is seen from increasing the number of draws. The estimated range parameters show that the peak surge elevation is highly dependent on the inputs: central pressure deficit ($\Delta P$), Holland's B parameter ($B$), since these two inputs have relatively large range parameters compared to their input ranges in the training sets. The small impact of the landfall location ($\Ell$) is due to our focus on a small coastal region in Cape Coral.

A direct approach to building an emulator for the high-fidelity model run is to use the PP kriging emulator trained only with the high-fidelity simulations. As an additional comparison, we also include the prediction results based on the PP kriging emulator using the 200 low-fidelity runs only. The results in Table~\ref{table: CV for cokriging} show that the PP cokriging emulator gives best prediction results all the three emulators, since the PP cokriging model gives smallest RMSPE and CRPS and largest CVG and NSME. It is also interesting to see that PP cokriging is able to give largest CVG with a modest ALCI. As a baseline, the low-fidelity simulator is directly used to predict the high-fidelity simulator over these 166 new inputs, resulting in the RMSPE at 0.132, which is similar to the RMSPE obtained using the PP cokriging emulator trained on the low-fidelity runs only. The PP kriging emulator trained on high-fidelity runs only gives much better CVG and CRPS than it does when trained on the low-fidelity runs only, while PP kriging emulator trained on the high-fidelity runs only gives larger RMSPE than it does when trained on the low-fidelity runs only. This indicates that the folklore of using 10$d$ runs for emulating the high-fidelity simulator does not perform satisfactorily and that the low-fidelity simulator is a biased surrogate model for the high-fidelity simulator in the storm surge application. Figure~\ref{fig: prediction versus held-out} indicates that the PP cokriging emulator performs much better than the PP kriging emulator  at the input setting $\bfx_1$, since its predicted PSE are scattered around the 45-degree straight line. In contrast, the PP kriging emulator has a comparably worse performance. All these numerical and visual results confirm the advantage of using the PP cokriging emulator in the storm surge application by effectively borrowing information across different fidelity levels. 
 

\begin{figure}
\begin{subfigure}{0.5\textwidth}
  \centering
\makebox[\textwidth][c]{ \includegraphics[width=1.0\linewidth, height=0.18\textheight]{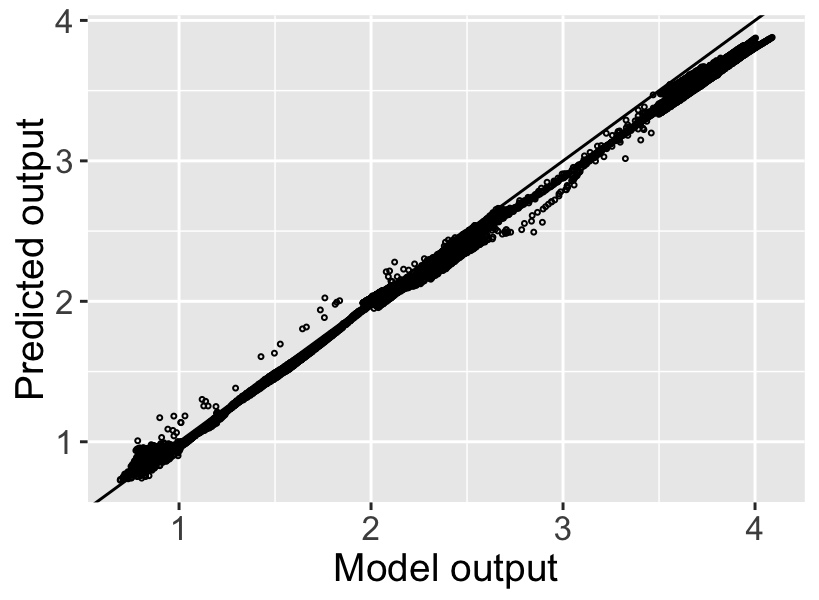}}
  \caption{PP kriging trained on high-fidelity runs only}
\end{subfigure}%
\begin{subfigure}{.5\textwidth}
  \centering
  \includegraphics[width=1.0\linewidth, height=0.18\textheight]{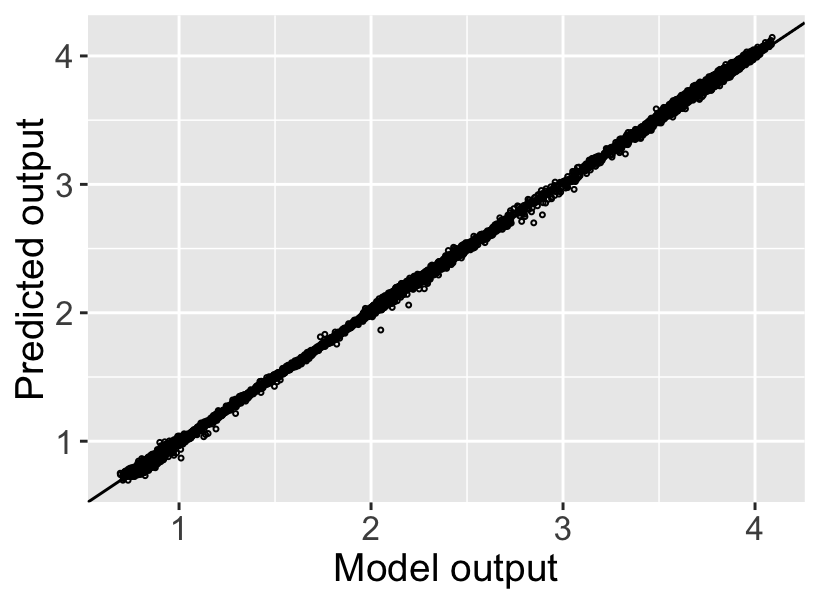}
  \caption{PP cokriging}
\end{subfigure}

 \caption{Scatter plot of predicted PSE against held-out PSE over $N=9,284$ spatial locations at the input setting $\bfx_1$. }
\label{fig: prediction versus held-out}
\end{figure}

In the storm surge application, the high-fidelity simulator is about 10 times slower than the low-fidelity simulator. Increasing the number of model runs in the high-fidelity simulator is therefore computationally prohibitive. The computational cost of predicting a new high-fidelity model run via the PP cokriging emulator is negligible compared to that needed to get a single run from the actual ADCIRC + SWAN simulator. This implies that emulating the high-fidelity simulator by using the our proposed PP cokriging emulator that combines only a small number of high-fidelity runs and a few hundred low-fidelity runs is preferable than using the low-fidelity simulator, in terms of both accuracy and computational cost.   The capability to use the low fidelity simulator, without substantial loss of accuracy through use of the PP cokriging emulator, to explore more  of the parameter space greatly enhances the feasibility of achieving high-precision modeling results without a massive computational budget.

\begin{table}[htbp]
\centering
\normalsize
   \caption{Predictive performance of emulators at $n^*=166$ held-out inputs over all $N=9,284$ spatial locations. PP kriging was trained based on low-fidelity runs  and high-fidelity runs separately. PP cokriging was trained based on both the low-fidelity simulation and high-fidelity simulation. PP = parallel partial.}
     {\resizebox{1.0\textwidth}{!}{%
  \setlength{\tabcolsep}{1.5em}
   \begin{tabular}{l c c c c c} 
   \toprule 
   \noalign{\vskip 1.5pt}
& RMSPE & CVG(95\%)   & ALCI(95\%) & CRPS & NSME \\  \noalign{\vskip 1.5pt}
\noalign{\vskip 1.5pt} \hline \noalign{\vskip 3pt}   \noalign{\vskip 1.5pt}
                 {PP kriging with low-fidelity data} & 0.130    & 0.042 & 0.023 & 0.109 &0.979  \\ \noalign{\vskip 4pt}
				 {PP kriging with high-fidelity data} &0.174     & 0.913  & 0.532 & 0.083 & 0.966 \\ 
				 \noalign{\vskip 4pt}				    
				 {PP cokriging} & 0.040  & 0.992  & 0.257 & 0.024 & 0.998 \\
\noalign{\vskip 1.5pt} \bottomrule
   \end{tabular}%
   }}
   \label{table: CV for cokriging}
\end{table}

\subsection{Uncertainty Analysis}

Cross-validation in the previous section showed that the PP cokriging emulator can provide very accurate predictions when compared to the true high-fidelity surge model runs in an overall sense. Figure~\ref{fig: map of storm surges at two input setting} compares the predicted storm surges against held-out storm surge from the high-fidelity surge model across $N=9,284$ spatial locations at two storm inputs that are used in Figure~\ref{fig: storm surges at two input settings} and Figure~\ref{fig: prediction versus held-out}. At these two inputs, the PP cokriging emulator seems to have large predictive uncertainties in the southeast region of Cape Coral and to have small predictive uncertainties in the Pine Island Sound Aquatic Preserve and the Caloosahatchee River. The largest predictive standard deviation in the PP cokriging emulator across all spatial locations is around 0.2, which is smaller than the difference between the high-fidelity simulator and the low-fidelity simulator. 

\begin{figure}
\begin{subfigure}{.333\textwidth}
  \centering
\makebox[\textwidth][c]{ \includegraphics[width=1.0\linewidth,height=0.12\textheight]{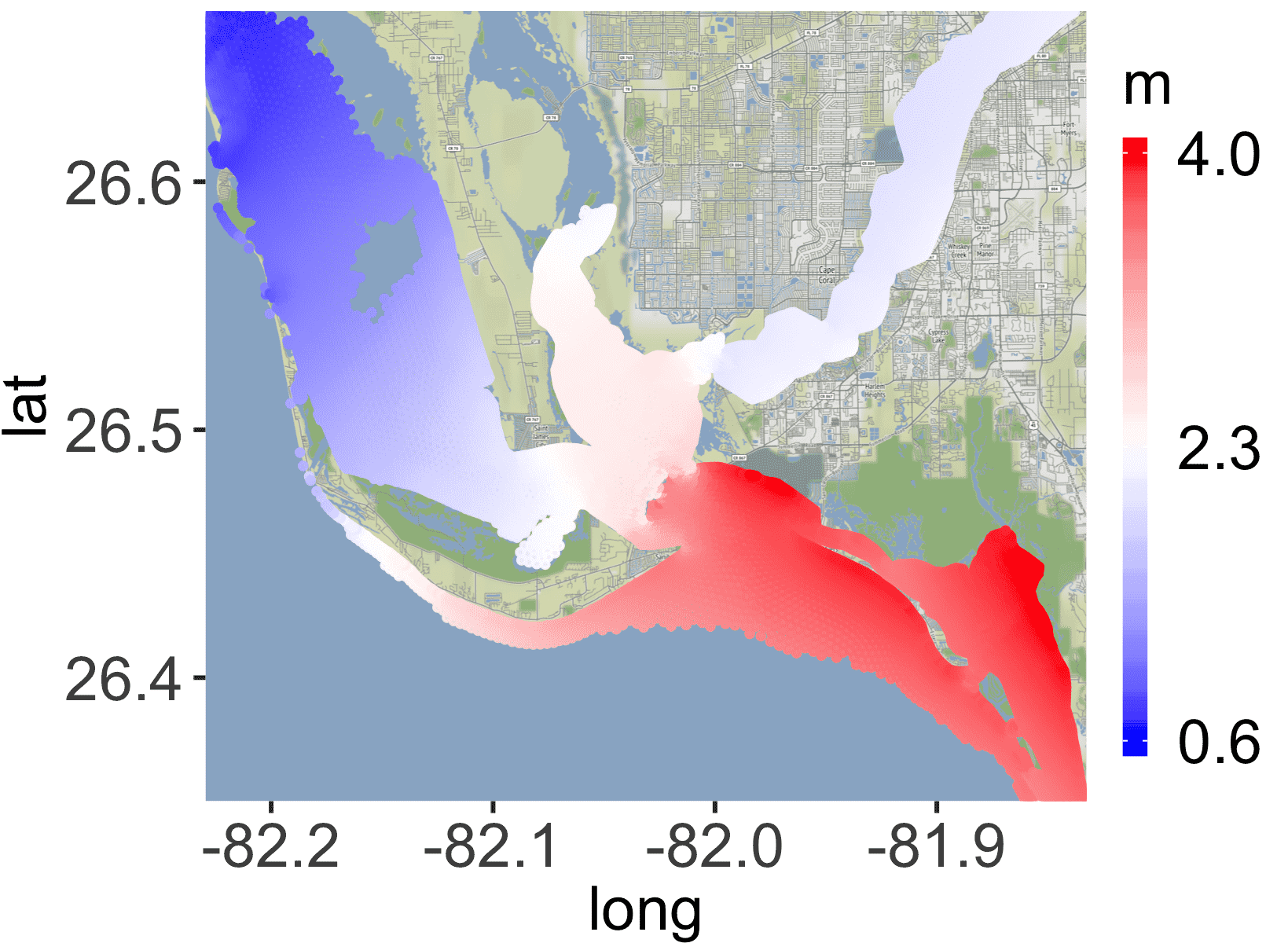}}
\end{subfigure}%
\begin{subfigure}{.333\textwidth}
  \centering
  \includegraphics[width=1.0\linewidth,height=0.12\textheight]{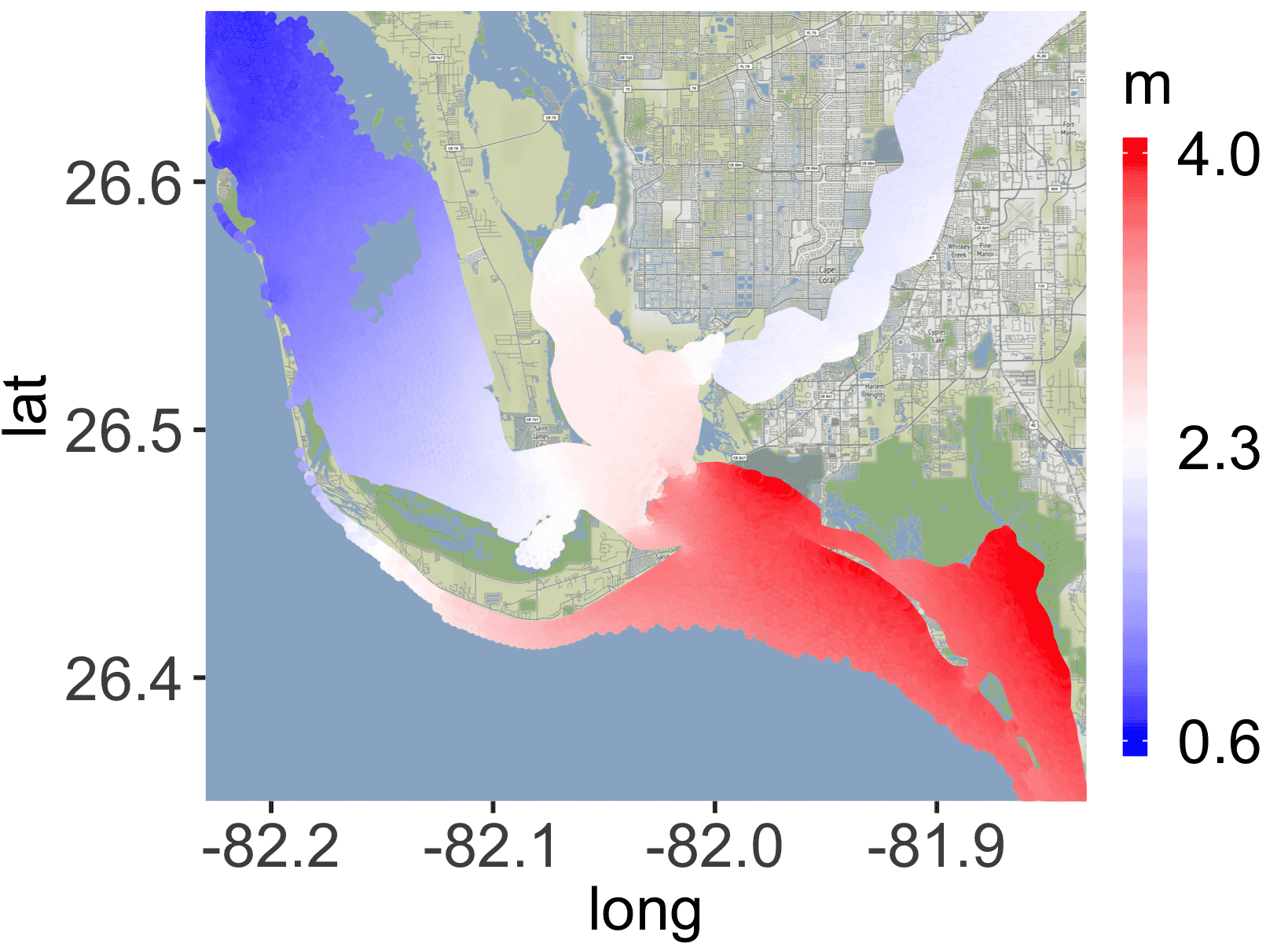}
\end{subfigure}%
\begin{subfigure}{.333\textwidth}
  \centering
  \includegraphics[width=1.0\linewidth,height=0.12\textheight]{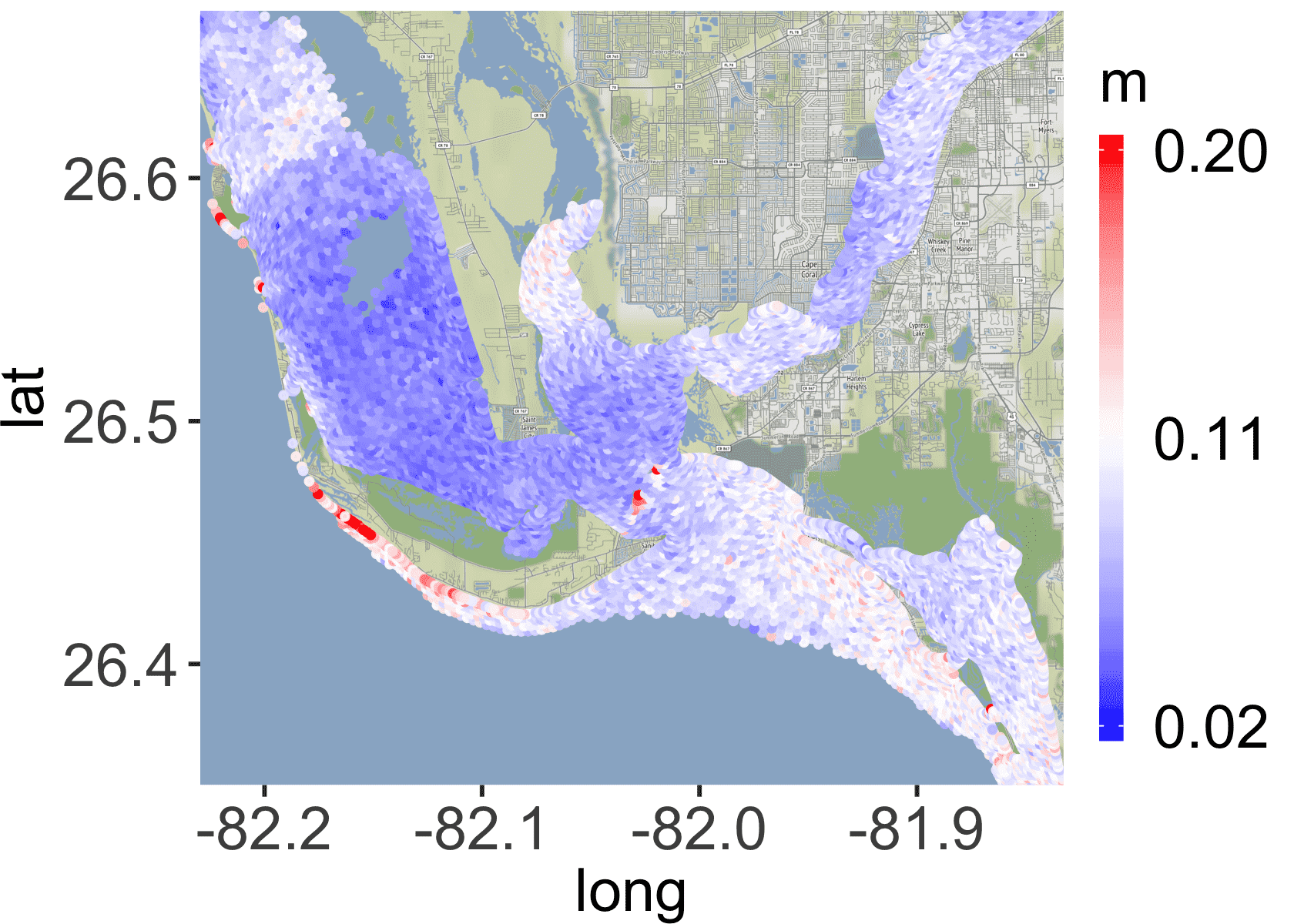}
\end{subfigure}

\begin{subfigure}{.333\textwidth}
  \centering
  \includegraphics[width=1.0\linewidth,height=0.12\textheight]{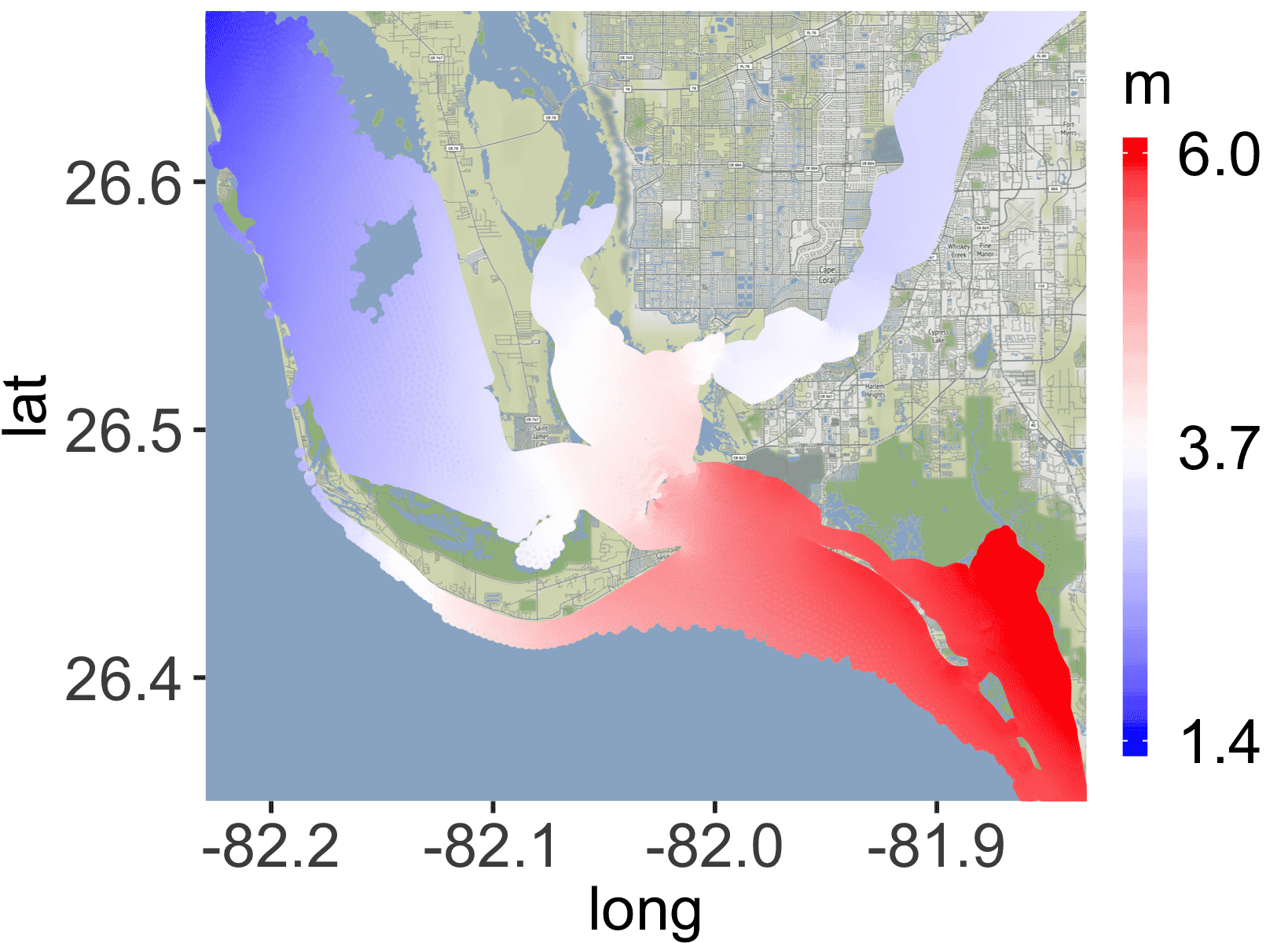}
\end{subfigure}%
\begin{subfigure}{.333\textwidth}
  \centering
  \includegraphics[width=1.0\linewidth,height=0.12\textheight]{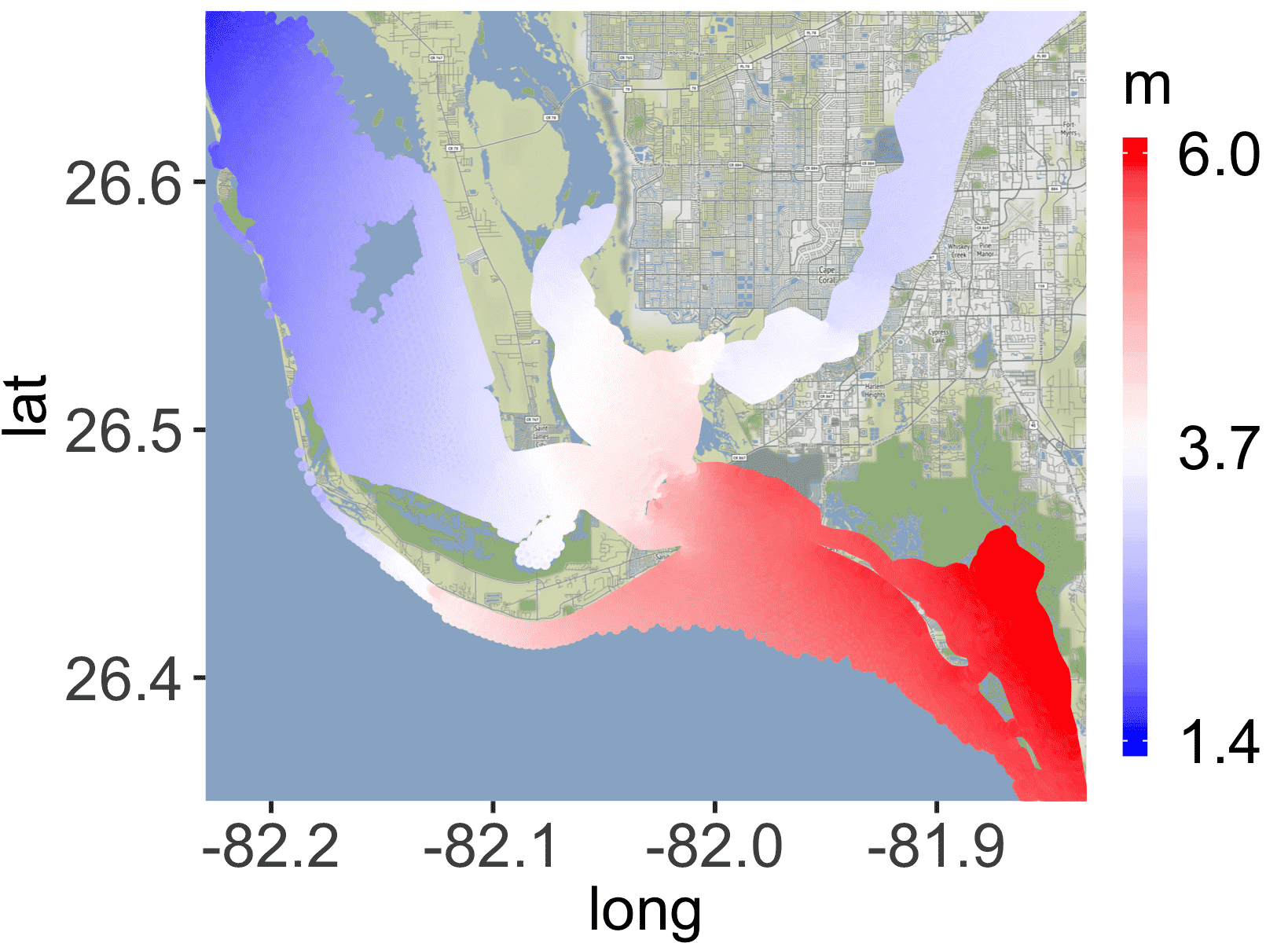}
\end{subfigure}%
\begin{subfigure}{.333\textwidth}
  \centering
  \includegraphics[width=1.0\linewidth,height=0.12\textheight]{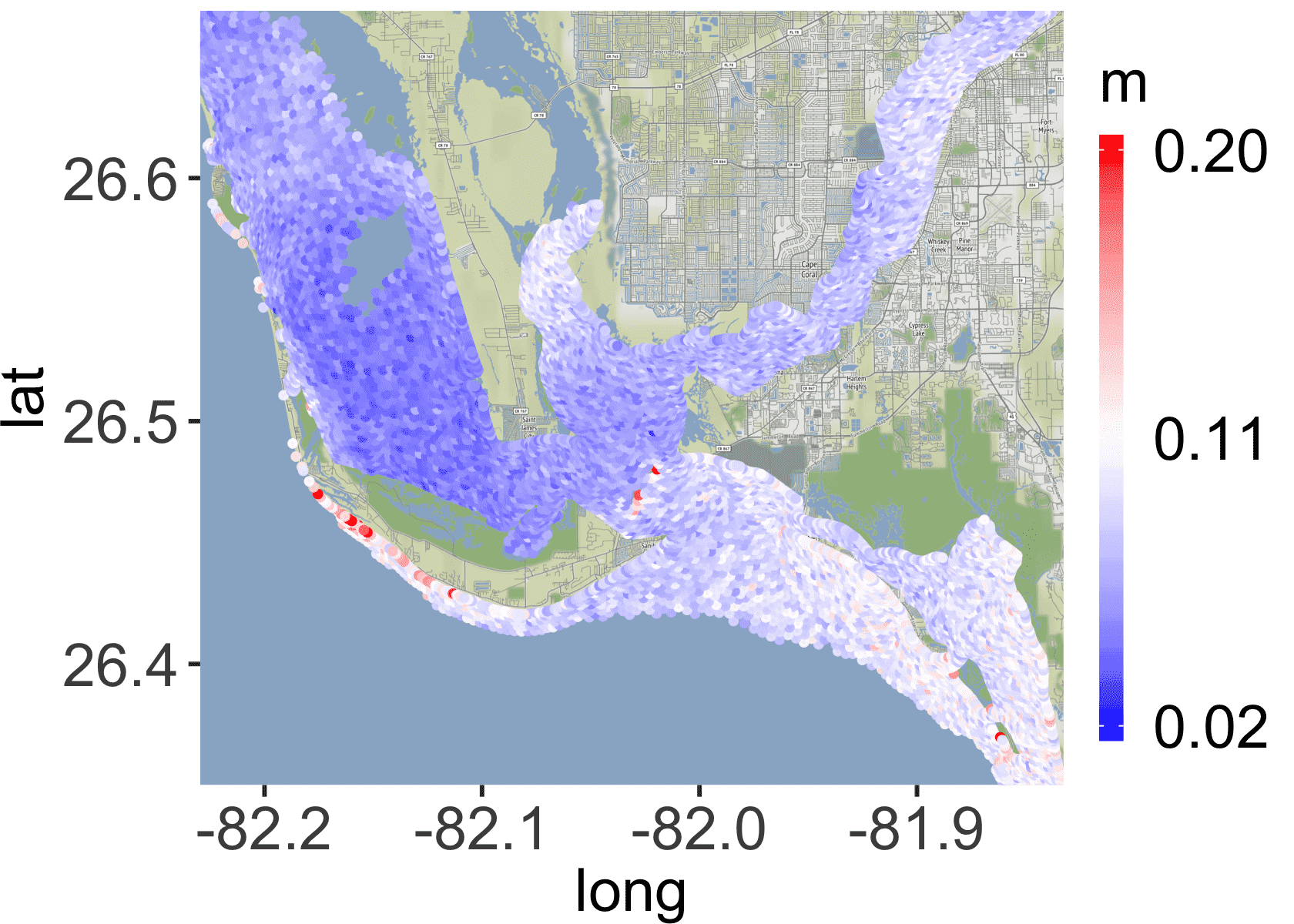}
\end{subfigure}
\caption{High-fidelity runs and predicted peak surge elevations with predictive standard errors at two input settings. The first column shows the high-fidelity runs at two different input settings. The second and third columns show the corresponding predicted PSE and associated predictive standard errors.}
\label{fig: map of storm surges at two input setting}
\end{figure}

Next, we explore the relationship between storm inputs and error structures in the PP cokriging emulator. We compute the prediction errors across all spatial locations at all held-out inputs. Figure~\ref{fig: error analysis} shows that the majority of emulation errors range from -0.5 to 0.5. This indicates that the PP cokriging emulator can capture the input-output relationship quite well. The residuals become larger as the central pressure deficit and the forward speed increase. The scale pressure radius seems to impact the emulation error in an opposite way as central pressure deficit. The residuals across different spatial locations are different as shown in Figure~\ref{fig: error analysis}. This indicates that the current PP cokriging emulator can partially capture the inhomogeneous structures in the output space, and some variations due to inputs are still left; see Section~\ref{sec: Discussion} for discussions on nonstationarity modeling in input space.

Finally, we show the parameter estimates for $\bfbeta_1$, $\bfsigma_1$, $\bfbeta_2$, $\bfgamma_1$, and $\bfsigma_2$ in Figure~\ref{fig: map of parameter estimates}. As we can see, these estimated parameters show strong spatially-varying structures at different regions. The estimated regression parameters $\hat{\bfbeta}_1$ and standard deviation $\hat{\bfsigma}_1$ at the low-fidelity level seem to be smoother than those estimates at fidelity level 2. This is because more variations are captured by the Gaussian process at the low-fidelity level. The remaining variations captured by the discrepancy function $\delta_{2,j}(\cdot)$ are small. This implies that the Gaussian process at the low-fidelity level fits well with model runs from the ADCIRC simulator and the discrepancy between the low-fidelity simulator and the high-fidelity simulator is relatively small. The estimated scale discrepancy parameters $\hat{\bfgamma}_1$ at all locations also show strongly heterogeneous spatial structures with values slightly greater than 1. This indicates that the high-fidelity simulator is more likely to generate higher values of storm surges than the low-fidelity simulator, but this trend is very small. The estimated standard deviations $\hat{\bfsigma}_1$ and $\hat{\bfsigma}_2$ seem to have more local structures than their corresponding regression parameters. This makes sense because we expect the regression trend in Gaussian processes to capture large-scale variations, and covariance structure to capture small-scale variations.

A central task in coastal engineering is to perform risk assessment of storm surges. The proposed emulator can produce predictive mean and predictive variance of storm surges over large number of spatial locations, which are practically useful and advantageous in risk assessment of storm surges in coastal flood hazard studies \citep[e.g.,][]{Cialone2017, niedoroda2010analysis, FEMA2017}, because computation of annual exceedance probabilities at one frequency level can require several thousands of storm surge simulations generated by a surge prediction model, which is prohibitively costly \citep[e.g.,][]{FEMA2017}. Practitioners in ocean engineering can use the proposed PP cokriging emulator to evaluate AEP using Monte Carlo approximations. This would be computationally efficient as running large number of storm surge simulations is not required once the emulator is readily available. The application of an emulator in risk assessment of storm surges is pursued elsewhere. 

\begin{figure}[htbp]
\centering
\makebox[\textwidth][c]{ \includegraphics[width=1.0\textwidth, height=0.30\textheight]{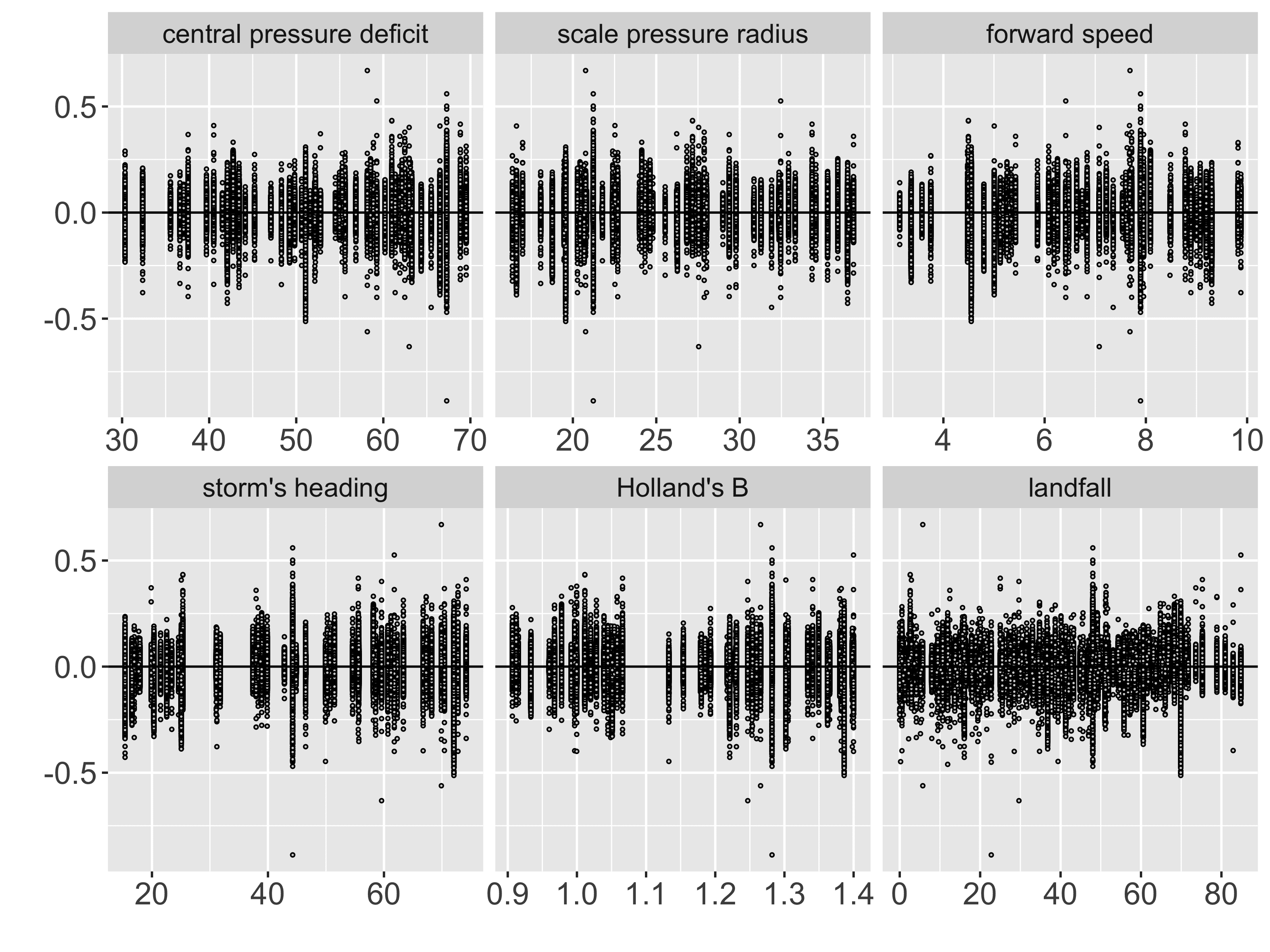}}
\caption{Prediction errors across all $N=9,284$ spatial locations against each storm parameter at all held-out inputs.}
\label{fig: error analysis}
\end{figure}

\begin{figure}
\captionsetup[subfigure]{justification=centering}
\begin{subfigure}{.333\textwidth}
  \centering
\makebox[\textwidth][c]{ \includegraphics[width=1.0\linewidth, height=0.12\textheight]{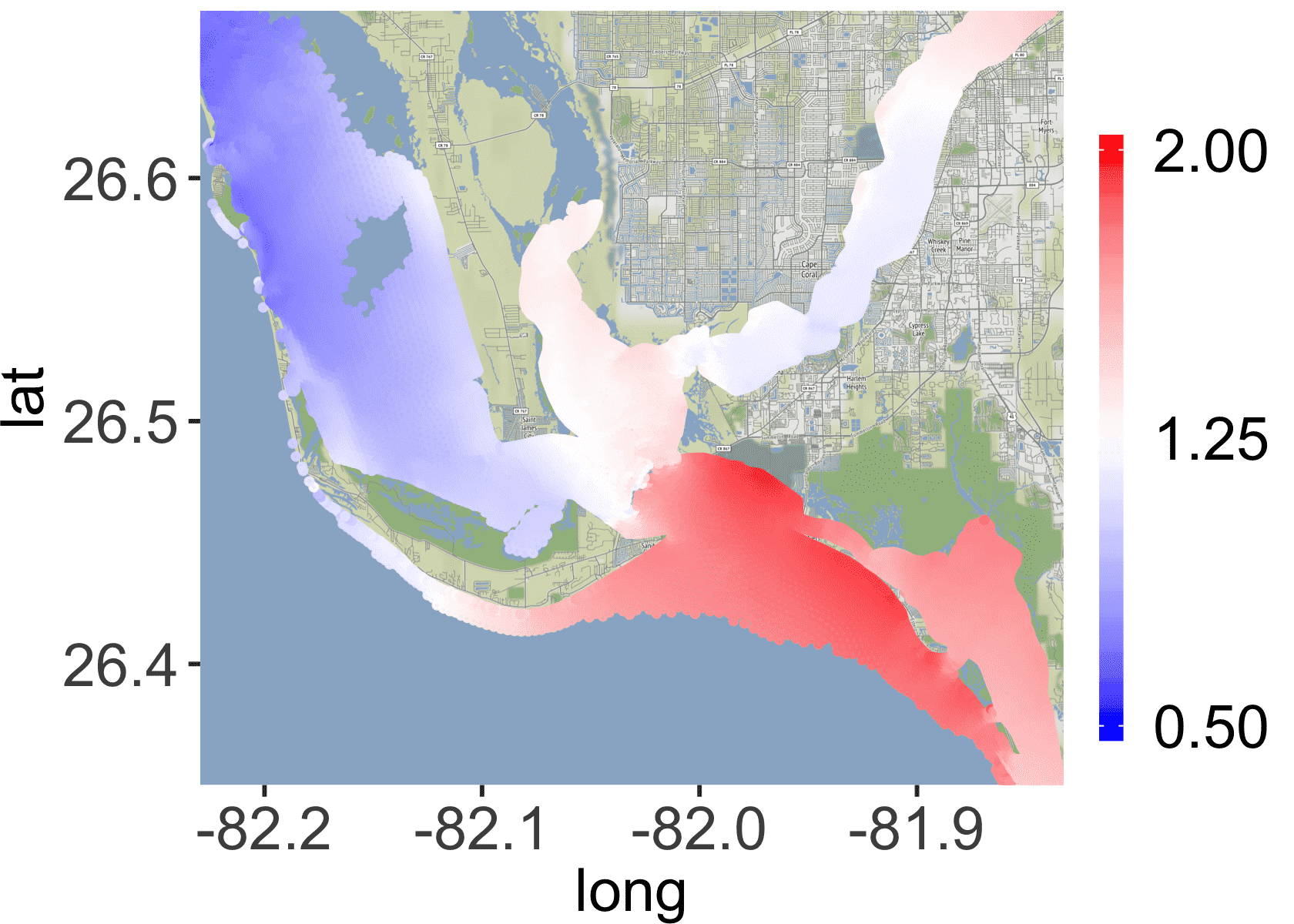}}
  \caption{$\hat{\bfbeta}_1$}
\end{subfigure}%
\begin{subfigure}{.333\textwidth}
  \centering
  \includegraphics[width=1.0\linewidth,height=0.12\textheight]{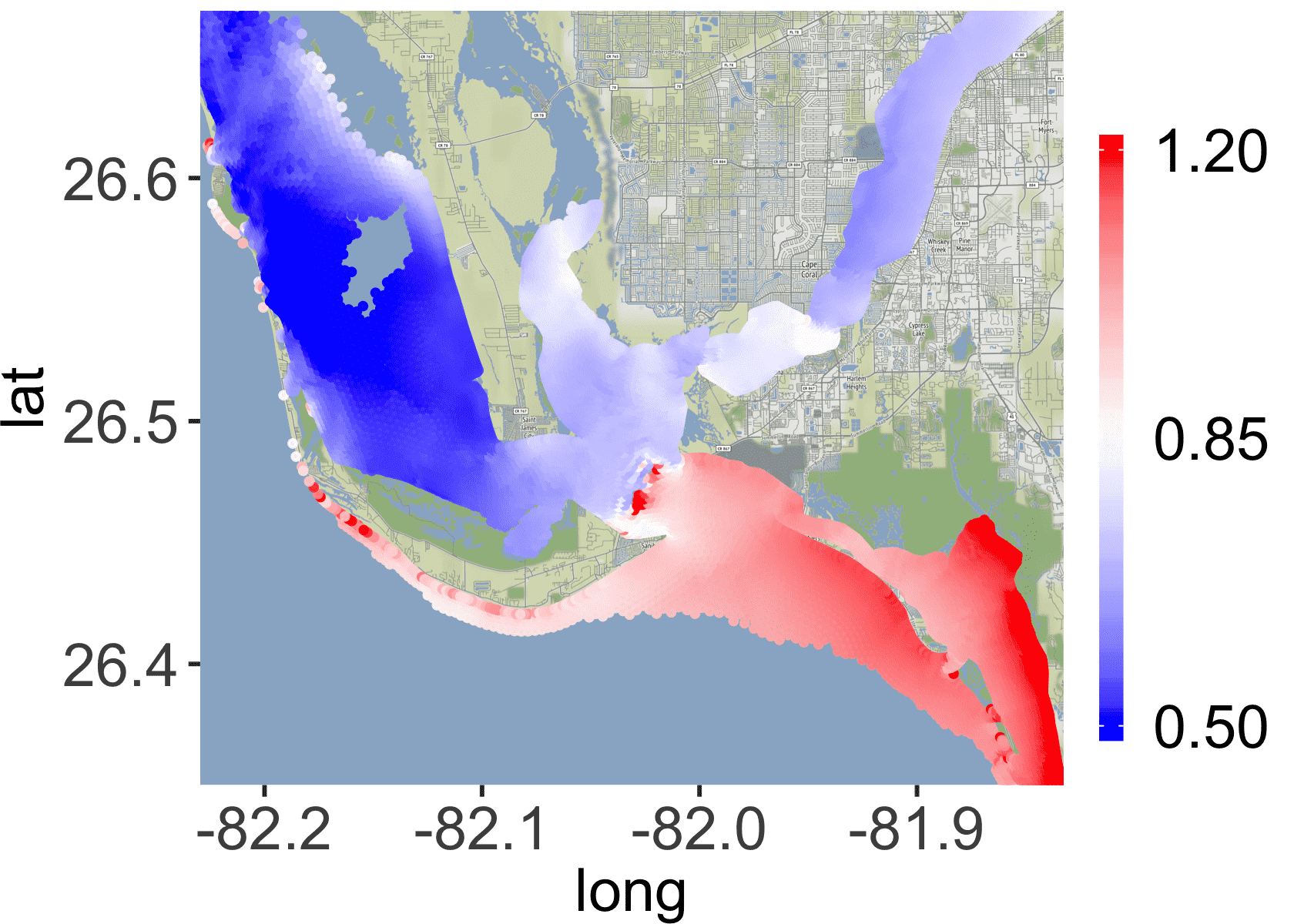}
  \caption{$\hat{\bfsigma}_1$}
\end{subfigure}%
\begin{subfigure}{.333\textwidth}
  \centering
    \includegraphics[width=1.0\linewidth,height=0.12\textheight]{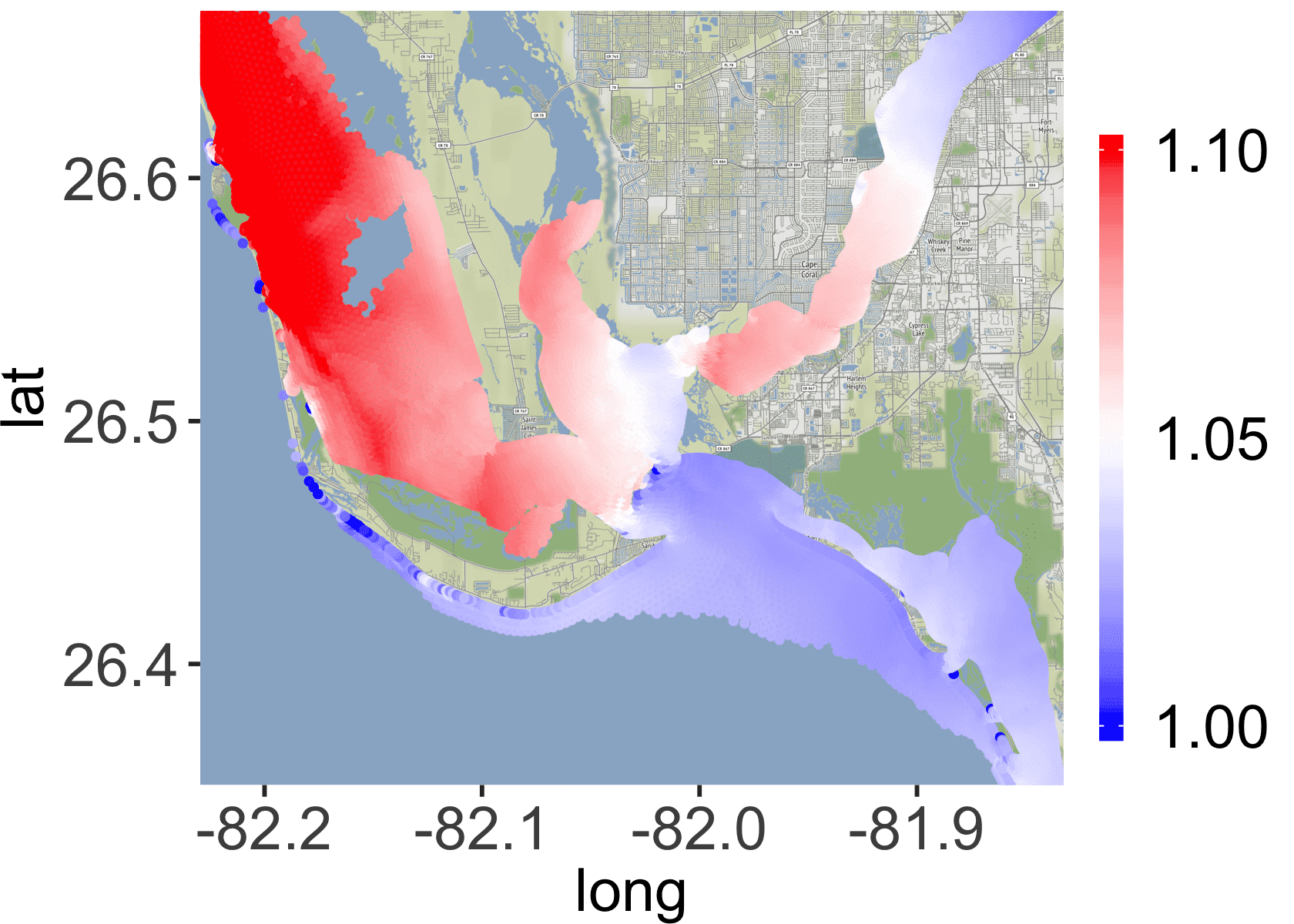}
  \caption{$\hat{\bfgamma}_1$}
\end{subfigure}

\begin{subfigure}{.333\textwidth}
  \centering
  \includegraphics[width=1.0\linewidth,height=0.12\textheight]{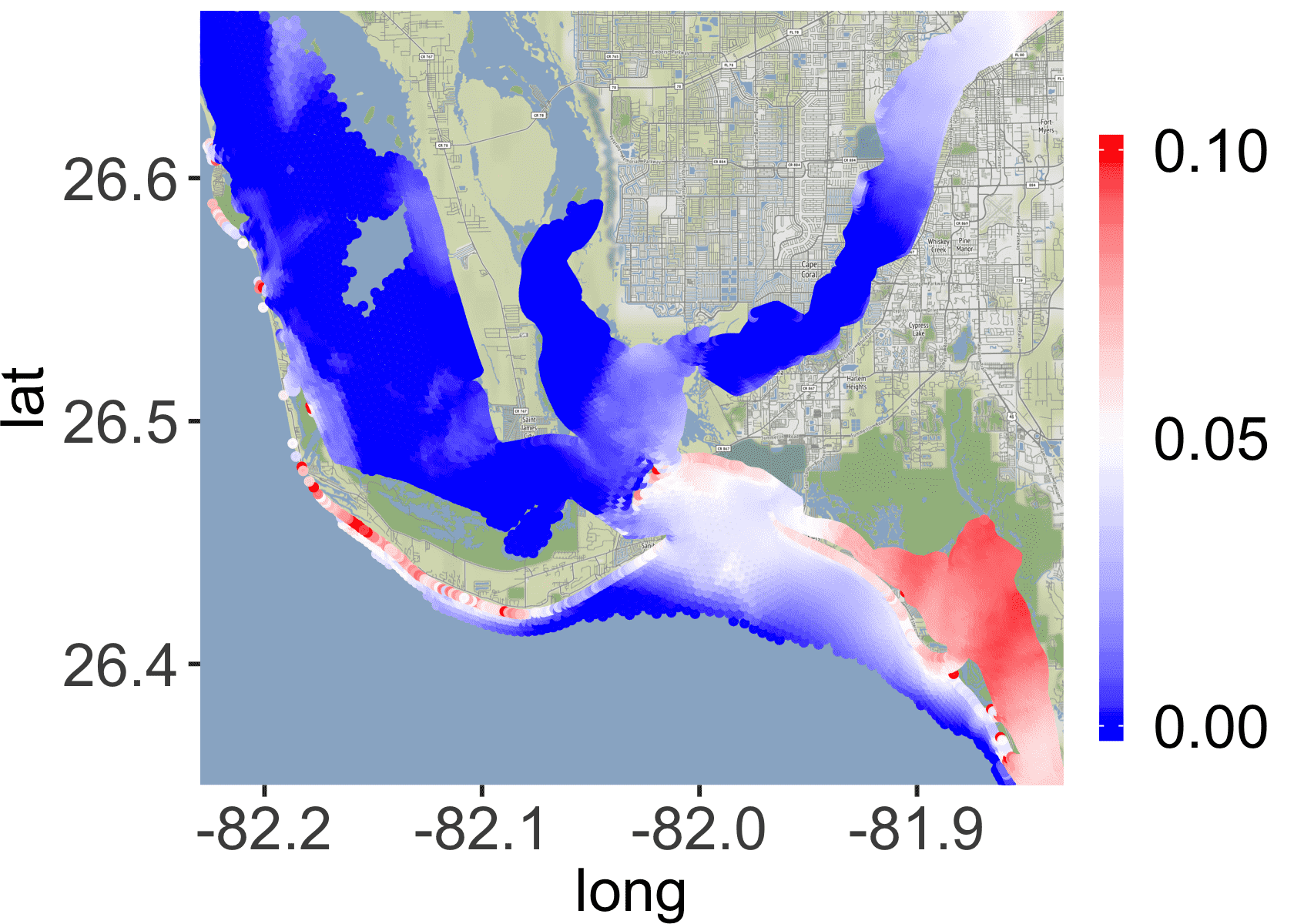}
  \caption{$\hat{\bfbeta}_2$}
\end{subfigure}%
\begin{subfigure}{.333\textwidth}
  \centering
  \includegraphics[width=1.0\linewidth,height=0.12\textheight]{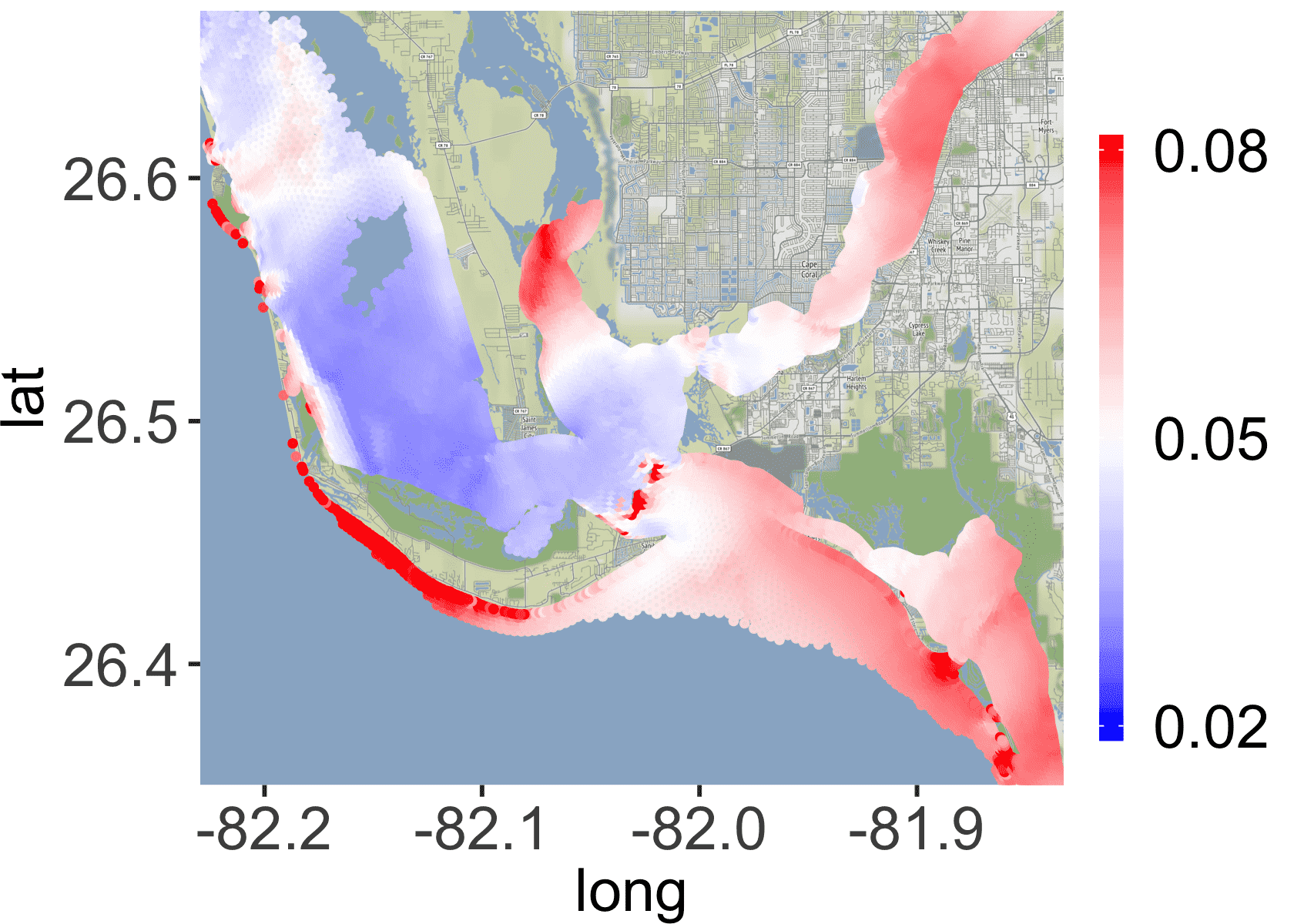}
  \caption{$\hat{\bfsigma}_2$}
\end{subfigure}
\caption{Estimated parameters across all spatial locations. The estimated parameters show strong heterogeneous spatial patterns.}
\label{fig: map of parameter estimates}
\end{figure}

\section{Discussion} \label{sec: Discussion}

Coastal flood hazard studies by FEMA and USACE use ADCIRC + SWAN to quantify the storm surge hazard, where simulation from this computer model is time-consuming and resource-intensive. We have built a parallel partial cokriging emulator to predict storm surges using simulations from both ADCIRC + SWAN and ADCIRC. The PP cokriging emulator effectively provides the marginal predictive distributions of storm surges for any storm characteristics at each spatial location. These marginal predictive distributions are also almost identical to the marginal predictive distributions obtained in situations where any spatial correlation matrix is used. These marginal predictive distributions can be directly used to compute annual exceedance probabilitiies in coastal flood hazard studies, which provide a convenient tool for practitioners in ocean engineering to perform risk assessment and surge forecasting research. The PP cokriging emulator not only has similar prediction accuracy as the high-fidelity simulator ADCIRC + SWAN, but also has a linear computational cost in terms of output values and also induces nonstationarity in output space, which is crucial to capture non-smooth storm surge surface. Field measurements of historical storm data including both observed surges and characteristics of storms that making landfall in Southwest Florida (SWFL) are very limited with only one gague station operating in Fort Myers. If there were sufficient storm data available, the proposed emulator could be further used to model the discrepancy between actual storm surges and ADCIRC + SWAN simulations, since accurate prediction of the actual storm surges can help perform more accurate coastal flood hazard studies. Although this paper focuses on a small region in SWFL, there is also a significant interest in extending the current method from a local region to the entire SWFL region so that the relationship between AEP and storm parameters can be quantified over a much larger coastal region.

The PP cokriging emulator assumes conditional independence across spatial locations that essentially leads to a separable covariance structure between input space and output space to simplify computations. This assumption can help capture nonstationary spatial patterns in the storm surge application. If interest lies in joint modeling across spatial locations, one can choose a spatial window to enable joint modeling. A related concern is the assumption of common correlation parameters at all spatial locations. If correlation parameters differ at each spatial location, the computational cost would not be linear in terms of spatial coordinates. This is a key advantage when the hazard from storm surge is assessed over a large spatial domain. One can potentially partition the domain into a set of subregions and allow different correlation parameters across these subregions.

In the storm surge application, we used a limited number of runs to train the emulator due to computational constraints. To aid coastal flood hazard studies and storm surge forecasting, one may use the proposed PP cokriging emulator to setup the design in a statistical optimal way such as sequential design \citep{LeGratiet2015} or to use large number of model runs. The latter problem can be tackled via computationally efficient Gaussian process approximation approaches \citep[e.g.][]{Gramacy2015}.  
The storm surge output shows quite rough structure across the spatial domain due to hurricane characteristics and heterogeneous topography. One can introduce nonstationarity in input space via treed Gaussian process \citep{Gramacy2008, konomikaragiannisABTCK2019}. Another interesting exploration for the proposed methodology is related to non-nested design on how much gain is obtained by allowing the design is not hierarchically nested over the traditional nested design. There is also an interest in developing  emulators when multifidelity computer codes have different spatial resolutions.  These possible directions could be pursued in future.

\section*{Acknowledgements}

This research began under the auspices and support of the Statistical and Applied Mathematical Sciences Institute (SAMSI) 2018-2019 research program on Model Uncertainty: Mathematical and Statistical (MUMS) in the Storm Surge Working Group that was supported by the U.S. National Science Foundation (NSF) under Grant DMS-1638521. This material is partly based upon work supported by the U.S. Department of Homeland Security (DHS) under Grant Award Number 2015-ST-061-ND0001-01 and by the U.S. Coastal Research Program (USCRP) administered by the U.S. Army Corps of Engineering (USACE), Department of Defense (DoD). The views and conclusions contained in this document are those of the authors and should not be interpreted as necessarily representing the official policies, either expressed or implied, of NSF, DHS, and DoD, and no official endorsement should be inferred. Ma's work used the Extreme Science and Engineering Discovery Environment (XSEDE) Lonestar5 and Stampede2 at the Texas Advanced Computing Center (TACC) through allocation DMS180042.  Ma gratefully acknowledges the support by the postdoctoral fellowship at SAMSI and by the USCRP program under Cooperative Agreement Number W912HZ-2020011 while part of this research was conducted at SAMSI and Duke University. Ma is grateful to Professor Jim Berger for his valuable discussion. Asher gratefully acknowledges the support by the USCRP program under  Cooperative Agreement Number W912HZ-2020011 while part of this research was conducted. Toro gratefully acknowledges in-house research funding provided by Lettis Consultants International, Inc.~for making possible his participation. 

\section*{Supplement Material}

The Supplement Material contains technical details and additional results. Code for the numerical examples can be found in \url{https://github.com/pulongma/PPCokriging}.


\begin{singlespace}
\bibliographystyle{apalike}
\setlength{\bibsep}{5pt}
\bibliography{main}
\end{singlespace}

\clearpage\pagebreak\newpage
\thispagestyle{empty}
\begin{center}
{\LARGE{\bf Web-based supporting materials for {Multifidelity Computer Model Emulation with High-Dimensional Output: An Application to Storm Surge}} } 
\vspace{1cm}
\\ \LARGE {\bf by}
\end{center}

\vskip 1cm
\baselineskip=15pt
\begin{center}
  Pulong Ma
\\
  {\it Clemson University, Clemson, SC, USA}\\
  Georgios Karagiannis \\
  {\it Durham University, Durham, UK}\\
  Bledar A. Konomi\\
  {\it University of Cincinnati, Cincinnati, OH, USA}\\
  Taylor G. Asher \\
  {\it University of North Carolina at Chapel Hill, Chapel Hill, NC, USA}\\
  Gabriel R. Toro \\
  {\it Lettis Consultants International, Inc., USA} \\
  and Andrew T. Cox\\
  {\it Oceanweather, Inc., USA}
  \hskip 5mm\\
\end{center}

\setcounter{equation}{0}
\setcounter{page}{0}
\setcounter{table}{0}
\setcounter{section}{0}
\setcounter{figure}{0}
\clearpage\pagebreak\newpage

\counterwithout{equation}{section}

\renewcommand{\theequation}{S.\arabic{equation}}
\renewcommand{\thesection}{S.\arabic{section}}
\renewcommand{\thesubsection}{S.\arabic{section}.\arabic{subsection}}
\renewcommand{\thepage}{S.\arabic{page}}
\renewcommand{\thetable}{S.\arabic{table}}
\renewcommand{\thefigure}{S.\arabic{figure}}

\section{Nodes in ADCIRC and Selected Points in Cape Coral} \label{sec: nodes}
In the FEMA coastal flood hazard study in the Southwest Florida region, \cite{FEMA2017} used an ADCIRC model to predict storm surges and then computed the annual exceedance probability (AEP) at multiple locations in order to quantify the storm surge hazard. Precisely, AEP was defined as the probability of storm surge exceeding a given flood level across the entire space of storm characteristics at a spatial location over a period of time. The flood levels of interest in coastal flood study are the $10-$, $2-$, $1-$, and $0.2$-percent-annual-chance of exceedance. They  are often used as  annual chance in many FEMA reports and documents in place of the annual chance of exceedance. The calculation of this annual chance requires thousands of simulations from an ADCIRC model, and hence are computationally prohibitive. The methodology developed in the paper provides an approach to alleviate such computational challenges for coastal flood hazard study. 

In the FEMA coastal flood hazard study in the Southwest Florida region \citep{FEMA2017}, ADCIRC is run on a mesh with $148,055$ nodes (spatial points), where numerical computation of the underlying physical models is done and output is returned. This mesh is made up of unstructured triangular grids spanning across coastal areas including Gulf of Mexico and Southwest coast of the United States; see Panel (a) of Figure~\ref{fig: mesh}. Previous study \citep[e.g.,][]{Luettich2004, Dietrich2012, FEMA2017} shows that ADCIRC can only generate accurate output in Southwest Florida region when its output is validated against historical storm data. In this paper, we focus on a subregion of Southwest Florida in Cape Coral. After eliminating spatial locations in the deep ocean and those extending inland (far away from the coastline), we primarily focus on $9,284$ spatial locations shown in the Panel (b) of Figure~\ref{fig: mesh}.

\begin{figure}[htbp]
\begin{subfigure}{.5\textwidth}
{ \includegraphics[width=1.0\textwidth, height=0.25\textheight]{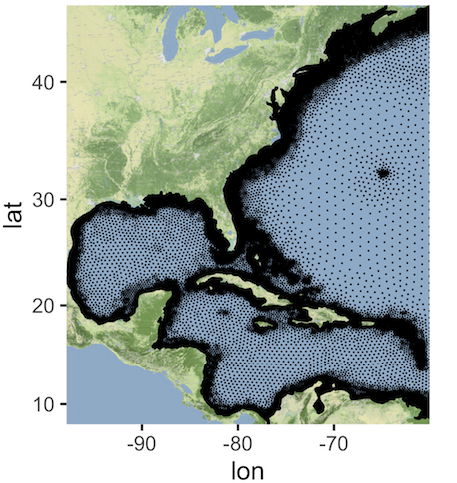}}
\caption{Full nodes of ADCIRC.}
\end{subfigure}%
\begin{subfigure}{.5\textwidth}
  \centering
  \includegraphics[width=1\linewidth, height=0.26\textheight]{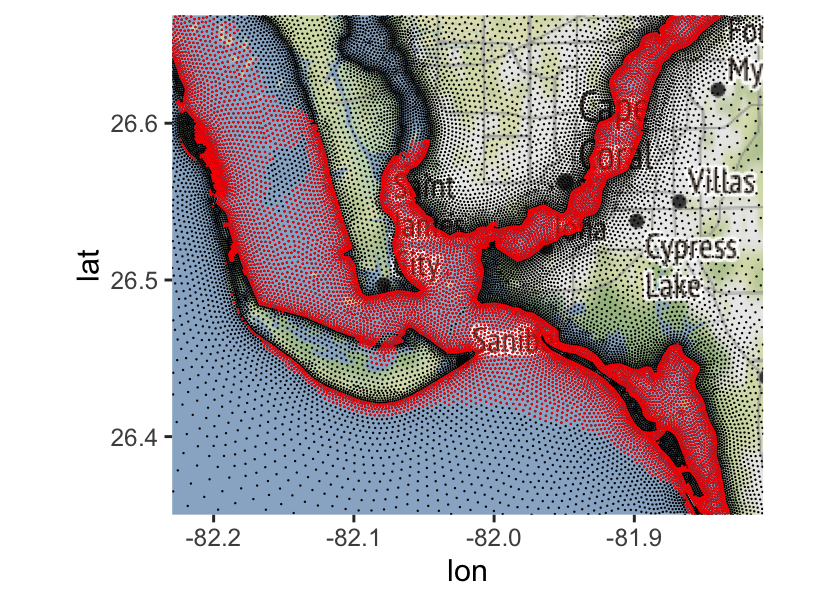}
  \caption{Nodes in Cape Coral.}
\end{subfigure}
\caption{On the left panel: ADCIRC generated output over 148,055 nodes. On the right panel:  the zoom-in view of these nodes in the Cape Coral region with selected 9,284 spatial locations highlighted in red.}
\label{fig: mesh}
\end{figure}

\section{Toy Example with Univariate Output} \label{sec: toy example}

In this section, we illustrate the proposed method with a toy example for univariate output. To demonstrate the cokriging model with the proposed Bayesian estimation method, we adapt an example from \cite{Gratiet2013} and consider a low-fidelity code
$y_1(x) = 0.5(6x-2)^2\sin(12x-4) + 10(x-0.5) - 5,
$ and a high-fidelity code
$y_2(x) = 2y_1(x) - 20x + 20 + { \sin(10\cos(5x))}$, where a constant scale discrepancy and a nonlinear location discrepancy are assumed. The experiment is setup as follows. We choose 20 design points in $[-1, 1]$ with spacing 0.1 for the low-fidelity code. For the high-fidelity code, we consider 10 design points $\{-1, -0.8, -0.55, -0.4, -0.2, 0, 0.2, 0.4, 0.6, 1\}$ for non-nested design. This gives a non-nested design with two points $\{-0.55, -0.2\}$ that are not in the low-fidelity level code. For predictive purpose, we choose 200 points uniformly spaced in the domain $[-1, 1]$. 

We perform the typical kriging with high-fidelity runs using the package \texttt{RobustGaSP} \citep{Gu2018RobustGP}; see the result in the left panel of Figure~\ref{fig: univariate example 1}. Then we use the proposed method to perform autoregressive cokriging using the package \texttt{ARCokrig}. The cokriging result is shown on the right panel of Figure~\ref{fig: univariate example 1}. By combining information from both low-fidelity code and high-fidelity code, the autoregressive cokriging model gives much better result than kriging itself and can approximate the high-fidelity code more accurately.

\begin{figure}[htbp]
\begin{center}
\makebox[\textwidth][c]{ \includegraphics[width=1.0\textwidth, height=0.3\textheight]{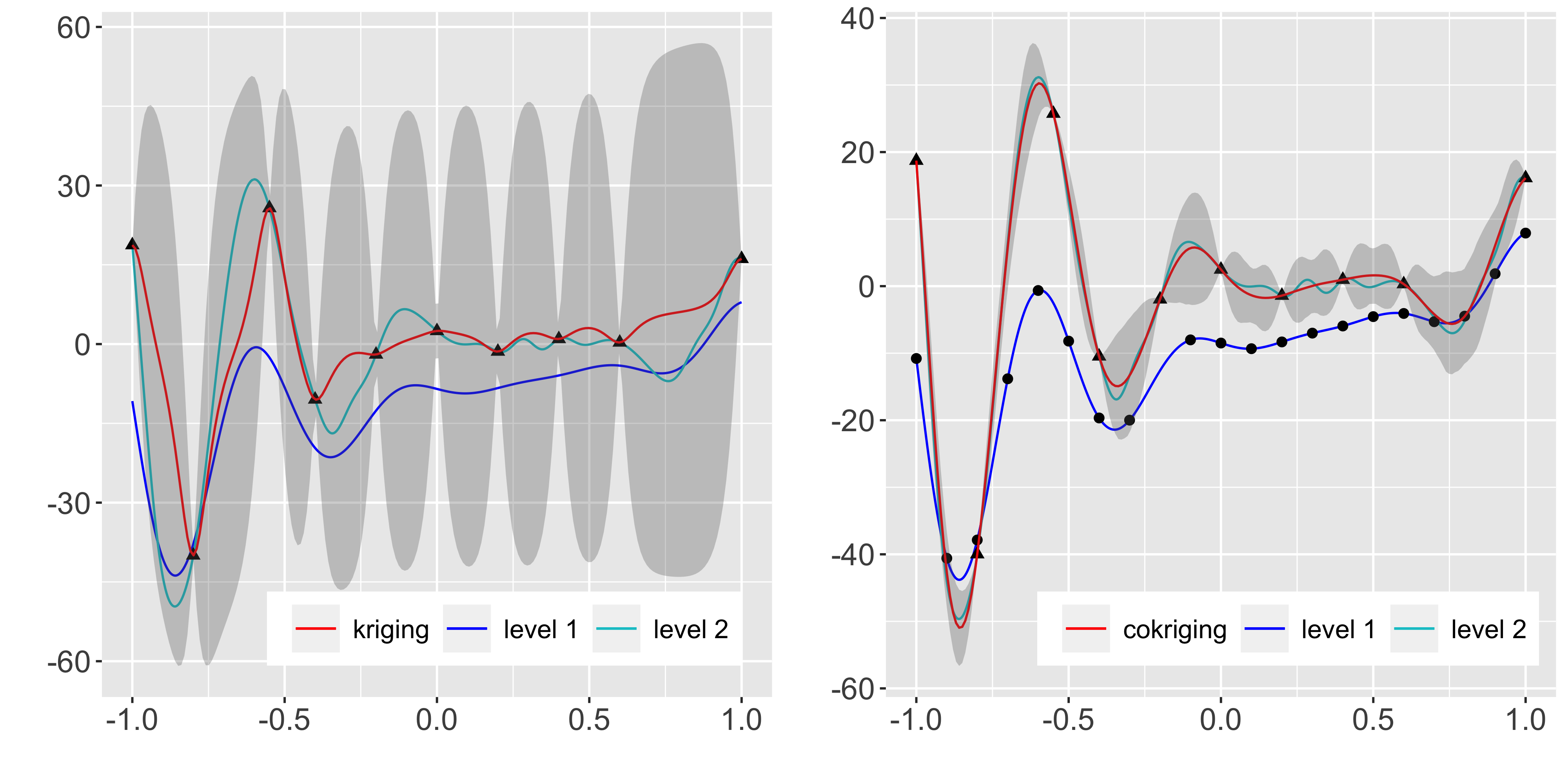}}
\caption{Illustration of the cokriging model under a non-nested design. The triangles represent design points for the high-fidelity code (level 2 code). The dots represent design points for the low-fidelity code (level 1 code). The left panel shows the kriging result based on high-fidelity runs, and the right panel shows the cokriging result based on both low-fidelity runs and high-fidelity runs. The bounds in gray areas are 95\% credible intervals.
}
\label{fig: univariate example 1}
\end{center}
\end{figure}

When the design is not nested, the Monte Carlo EM algorithm in Section~\ref{sec: MCEM} will be used to estimate correlation parameters $\bfphi$. Then the missing data $\mathring{\bfy}^{\mathscr{D}}$ can be estimated based on the posterior $\pi(\mathring{\bfy}^{\mathscr{D}}\mid \bfy^{\mathscr{D}},\hat{\bfphi})$ in Section~\ref{app: distributions} of the Supplementary Material, where $\hat{\bfphi}$ denotes the estimated correlation parameters. The prediction can be obtained by directly generating Monte Carlo samples from the posterior $\pi(\mathring{\bfy}^{\mathscr{D}}\mid \bfy^{\mathscr{D}},\hat{\bfphi})$.  To study the impact of the number of Monte Carlo samples on parameter estimation in the MCEM algorithm and prediction under non-nested design, we also investigated the performance of prediction based on different Monte Carlo sample sizes. In the first situation, we vary the Monte Carlo sample size $M$ in the MCEM algorithm and prediction; in the second situation, we fixed the Monte Carlo sample size $M$ to be 30 to estimate correlation parameters in the MCEM algorithm. Then we generate Monte Carlo samples from the predictive distribution with Monte Carlo sample size varying from $M=10$ to $M=100$. In these two situations, we compared the RMSPEs over 200 input values for the high-fidelity code with results shown in Figure~\ref{fig: MC error}. Clearly, we could see that as $M$ is greater than 25, the RMSPE becomes stable in these two situations. This indicates that the MCEM algorithm shows convergence and it is safe to get prediction based on more than 25 Monte Carlo samples from the predictive distribution. 

\begin{figure}[htbp]
\begin{center}
\makebox[\textwidth][c]{ \includegraphics[width=1.0\textwidth, height=0.3\textheight]{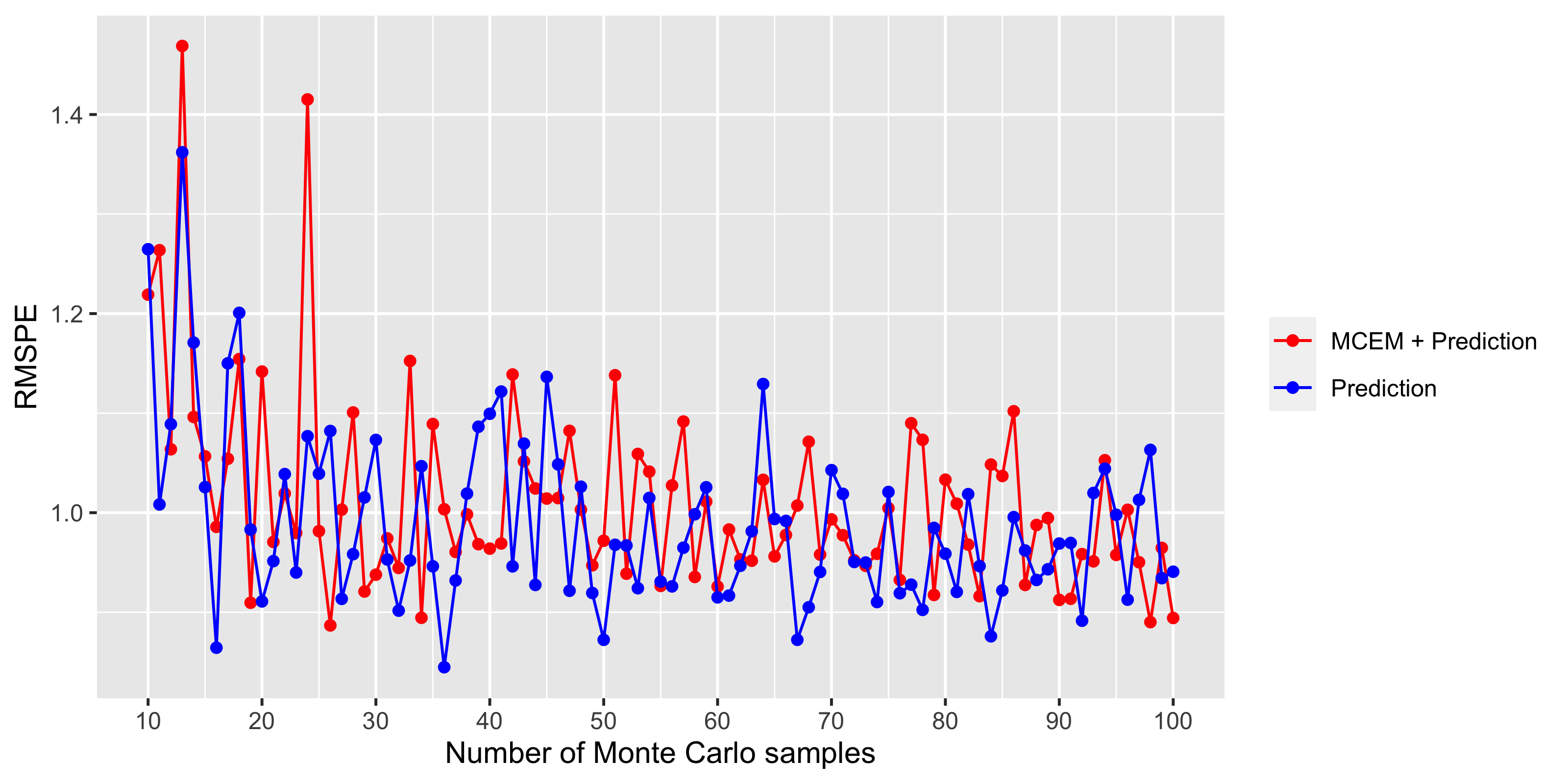}}
\caption{Prediction performance with different Monte Carlo sample sizes under a non-nested design. The horizontal axis represents the Monte Carlo sample size varying from 10 to 100 with increment 1. The vertical axis represents the RMSPE over 200 input values for the high-fidelity code. The blue curve shows the RMSPEs by keeping the Monte Carlo sample size to be the same in the MCEM algorithm and the predictive distribution, where the Monte Carlo sample size varies from 10 to 100. The red curve shows the RMSPEs by fixing the Monte Carlo sample size at 30 in the MCEM algorithm and varying the Monte Carlo sample size in the predictive distribution.
}
\label{fig: MC error}
\end{center}
\end{figure}

\section{Artificial Example with Functional Output} \label{sec: Example with functional output}
In this section, we illustrate the proposed PP cokriging method with an artificial example, where the output is functional. This example is created based on the univariate example in Section~\ref{sec: toy example}. We consider the following low-fidelity code: 
\begin{align}\label{eqn: low}
y_1(x, t) = 0.5(6x-2)^2\sin(12x-4) + 10(x-0.5) - 5 + xt^3 + 2t\exp(-t),
\end{align}
and the high-fidelity code 
\begin{align}\label{eqn: high}
y_2(x, t) = 2 y_1(x,t) - 20x + 20 + \sin(10\cos(5x)) t \cos(t),
\end{align}
where the variable $x$ is assumed to be a physical input parameter and $t$ is considered to be a time point. The goal here is to emulate the functional output from the high-fidelity code for any given input variable $x$. As an illustration, we use the same input design as in Section~\ref{sec: toy example}, and consider the output over 30 time points in $[0, 2]$. Figure~\ref{fig: functional output} indicates that the proposed PP cokriging method gives reasonable results with predictive mean aligned along the 45-degree line. In addition, the RMSPE is 0.4026 and CVG(95\%) is 0.987, and the NSME is 0.999. These numerical measures also suggest that the proposed PP cokriging method gives reasonable prediction results for predicting the high-fidelity code. 

\begin{figure}[htbp]
\begin{center}
\makebox[\textwidth][c]{ \includegraphics[width=1.0\textwidth, height=0.3\textheight]{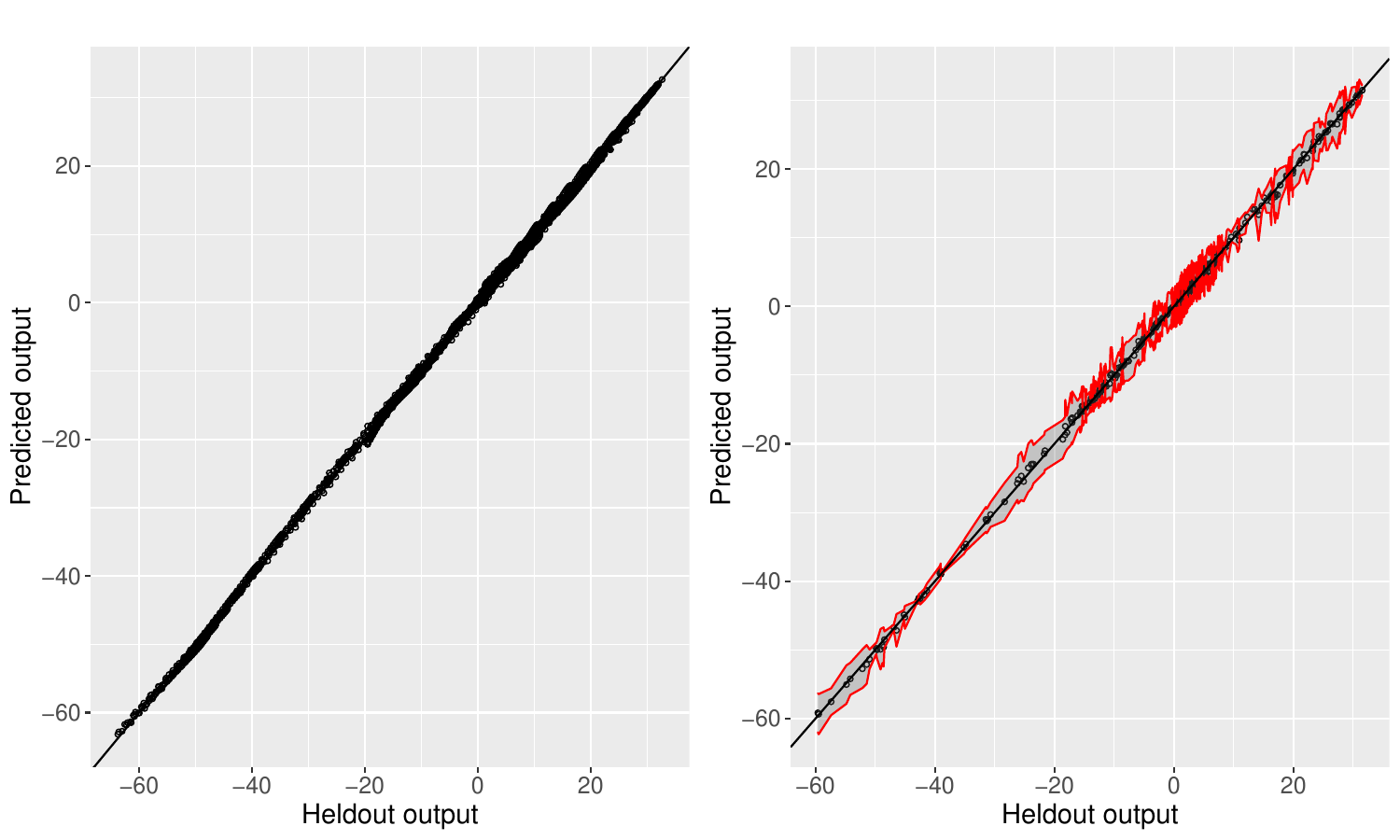}}
\caption{Diagnostics of prediction over 200 input values across 30 time points with PP cokriging. Left panel shows the predictive mean of the high-fidelity code versus the held-out outputs from the high-fidelity code over 200 input values across 30 time points. Right panel shows the predictive mean, 95\% percentile predictive intervals for the high-fidelity code versus the held-out outputs from the high-fidelity code over randomly selected 500 points out of 6000 ($=200\times 30$) points. 
}
\label{fig: functional output}
\end{center}
\end{figure}

\section{The Monte Carlo Expectation-Maximization Algorithm} \label{sec: MCEM}
In the augmented marginal distribution~\eqref{eqn: aug marginal dist}, its conditional distributions are given by  $\pi(\tilde{\bfy}_{t,j}\mid\bfphi_{t},\tilde{\bfy}_{t-1,j}) \propto |\tilde{\bfR}_t|^{-1/2} |\tilde{\bfT}_{t,j}^\top \tilde{\bfR}_t^{-1} \tilde{\bfT}_{t,j}|^{-1/2} \{ {S}^2(\bfphi_t, \tilde{\bfy}_{t,j})\}^{-(\tilde{n}_t - q_t)/2}$, where $\tilde{\bfT}_{1,j}=\tilde{\bfH}_1$ for $j=1, \ldots, N$ and $\tilde{\bfT}_{t,j}=[\tilde{\bfH}_t, \tilde{W}_{t-1,j}]$ for $t=2, \ldots, s$ and $j=1, \ldots, N$. ${S}^2(\bfphi_t, \tilde{\bfy}_{t,j}):=\tilde{\bfy}_{t,j}^\top\tilde{\bfQ}_t\tilde{\bfy}_{t,j}$ with $\tilde{\bfQ}_t:=\tilde{\bfR}_t^{-1}\{\mathbf{I} - \tilde{\bfT}_{t,j}$ $(\tilde{\bfT}_{t,j}^\top \tilde{\bfR}_t^{-1} \tilde{\bfT}_{t,j})^{-1} \tilde{\bfT}_{t,j}^\top \tilde{\bfR}_t^{-1}\}$.

In the EM algorithm \citep{Dempster1977}, we treat $\mathring{\bfy}^{\mathscr{D}}$ as ``missing data'' and $\{\mathring{\bfy}^{\mathscr{D}},\bfy^{\mathscr{D}}\}$ as the complete data. Let $\bfphi^{[\ell]}$ be the parameters in the $\ell$th
iteration of the EM algorithm. The EM algorithm consists of two steps. The first step is to compute the so-called $Q$-function based on a complete-data-log-likelihood $\ln\pi(\tilde{\bfy}^{\mathscr{D}},\bfphi)$ in the E-step, which is given by 
\begin{align}
\begin{split}\ln\pi(\tilde{\bfy}^{\mathscr{D}},\bfphi) 
 & =\sum_{t=1}^{s}g_{t}(\bfphi_{t},(\mathring{\bfy}_t)_{j=1}^N),
\end{split}
\end{align}
where $(\mathring{\bfy}_t)_{j=1}^N: = (\mathring{\bfy}_{t,1}, \ldots, \mathring{\bfy}_{t,N})^\top$ and   $g_{t}(\bfphi_{t},(\mathring{\bfy}_t)_{j=1}^N):= \ln\pi(\bfphi_t)- \sum_{j=1}^N\ln|\tilde{\bfT}_{t,j}^\top \tilde{\bfR}_t^{-1} \tilde{\bfT}_{t,j}|/2$ $-N\ln|\tilde{\bfR}_t|/2- (\tilde{n}_t-q_t) \sum_{j=1}^N\ln {S}^2(\bfphi_t, \tilde{\bfy}_{t,j})/2 .$

Starting with initial values $\bfphi^{[\ell]}$ at the $\ell$-th iteration, we compute the $Q$-function in the E-step:
$Q(\bfphi;\bfphi^{[\ell]}) =E_{\pi(\mathring{\bfy}^{\mathscr{D}}\mid \bfy^{\mathscr{D}},\bfphi^{[\ell]})}\{\ln\pi(\tilde{\bfy}^{\mathscr{D}},\bfphi)\}
$, where the conditional distribution $\pi(\mathring{\bfy}^{\mathscr{D}}\mid \bfy^{\mathscr{D}},\bfphi^{[\ell]})$ is given in Section~\ref{app: distributions} of the Supplementary Material. As this expectation cannot be computed analytically, Monte Carlo samples from the distribution $\pi(\mathring{\bfy}^{\mathscr{D}}\mid \bfy^{\mathscr{D}},\bfphi^{[\ell]})$ can be used to approximate this expectation. Specifically, the $Q$-function is 
\begin{align}
\begin{split}
Q(\bfphi;\bfphi^{[\ell]})  
 & \approx \underset{=\hat{Q}_{t,M}(\bfphi_t\mid\bfphi_t^{[\ell]})}{\sum_{t=1}^{s}\underbrace{\frac{1}{M}\sum_{k=1}^{M}g_{t}(\bfphi_{t}, (\mathring{\bfy}_t^{[k]})_{j=1}^N)}}, 
\end{split}
\end{align}
where $\{(\mathring{\bfy}_t^{[k]})_{j=1}^N:t=1,\ldots,s\}$ is a sample from the distribution $\pi(\mathring{\bfy}^{\mathscr{D}}\mid \bfy^{\mathscr{D}},\bfphi^{[\ell]})$. The second step is to numerically maximize this function with respect to parameters $\bfphi_t$ for $t=1, \ldots, s$. This leads to the so-called Monte Carlo EM (MCEM) algorithm \citep{Wei1990}.

In each iteration of the MCEM algorithm, the computation of the function $g(\bfphi_t, (\mathring{\bfy}_t)_{j=1}^N)$ has computational cost $O(N\tilde{n}_t^3)$. Lemma~\ref{lem: determinant} in Section~\ref{app: useful results} of the Supplementary Material shows that the computational complexity of this function can be reduced to $O(N\tilde{n}_t^2 + \tilde{n}_t^3)$, which substantially reduces computational cost when $N$ is very large as in our real application. In addition, parameters $\{\bfphi_t: t=1, \ldots, s\}$ can be estimated independently in the M-step. This allows great advantage in estimating model parameters with the proposed methodology. As the number of simulations is limited for high-fidelity simulators in real applications, the proposed cokriging model is anticipated to produce more stable estimates than independent kriging models. The procedure of the MCEM algorithm is given in Algorithm~\ref{alg: MCEM in mult}.

\begin{algorithm}
\caption{The MCEM algorithm}
\label{alg: MCEM in mult} 
\begin{raggedright}
\textbf{Input:} Initial values $\bfphi^{[1]}$, data $\bfy$ and $\ell=1$. \\
 \textbf{Output:} Values $\bfphi$ that maximize $\pi(\bfphi\mid \bfy)$. 
\par\end{raggedright}
 \begin{algorithmic}[1] \Repeat 
 \State Generate $M$ samples
from $\pi(\mathring{\bfy}^{\mathscr{D}}\mid \bfy^{\mathscr{D}},\bfphi^{[\ell]})$,
which are denoted by $\{(\mathring{\bfy}_t^{[k]})_{j=1}^N:t=1,\ldots,s; k=1, \ldots, M\}$.
\For{$t=1,\ldots,s$} \Comment{can be run in parallel} \label{alg:step-generatemissingdata}
\State
E-step: 
\begin{eqnarray*}
\hat{Q}_{t,M}(\bfphi_{t}\mid\bfphi_{t}^{[\ell]})=\frac{1}{M}\sum_{k=1}^{M}g_{t}(\bfphi_{t}, (\mathring{\bfy}_t^{[k]})_{j=1}^N).
\end{eqnarray*}
\State M-step: 
\begin{eqnarray*}
\bfphi_{t}^{[\ell+1]}:=\arg\max_{\bfphi_{t}}\hat{Q}_{t,M}(\bfphi_{t}\mid\bfphi_{t}^{[\ell]}).
\end{eqnarray*}
\EndFor \Until{certain stopping convergence criterion is satisfied.}
\end{algorithmic} 
\end{algorithm}

\section{One-Step Prediction} \label{app: pred}
In what follows, we derive the predictive distribution based on the same idea used in \cite{Kennedy2000, Gratiet2013}. It is easy to show that 
\begin{align*}
    \pi(\bfy_{s}(\bfx_0) \mid \bfy^{\mathscr{D}}, \bfphi) 
    & = \prod_{j=1}^N \int \pi(y_{s,j}(\bfx_0) \mid \bfy^{\mathscr{D}}_j, \bfbeta_j, \bfgamma_j, \bfsigma^2_j, \bfphi) \pi(\bfbeta_j, \bfgamma_j, \bfsigma^2_j \mid \bfy^{\mathscr{D}}_j)\\
    &\quad \times d\{\bfbeta_j, \bfgamma_j, \bfsigma^2_j \}.
\end{align*}
Notice that 
\begin{align*}
\begin{split}
    \pi(y_{s,j}(\bfx_0)\mid \bfy^{\mathscr{D}}_j, \bfbeta_j, \bfgamma_j, \bfsigma^2_j, \bfphi) 
    &= \int \pi(y_{s,j}(\bfx_0) \mid \tilde{\bfy}^{\mathscr{D}}_j, \bfbeta_j, \bfgamma_j, \bfsigma^2_j, \bfphi) \\
    &\quad \times \pi(\mathring{\bfy}^{\mathscr{D}}_j \mid \bfy^{\mathscr{D}}_j, \bfbeta_j, \bfgamma_j, \bfsigma^2_j, \bfphi)  \, d\{ \mathring{\bfy}^{\mathscr{D}}_j\}, 
\end{split}
\end{align*}
where $\pi(\mathring{\bfy}^{\mathscr{D}}_j \mid \bfy^{\mathscr{D}}_j, \bfbeta_j, \bfgamma_j, \bfsigma^2_j, \bfphi)$ is a multivariate normal distribution with mean and variance given in Section~\ref{app: distributions} of the Supplementary Material.  $\pi(y_{s,j}(\bfx_0) \mid \tilde{\bfy}_j, \bfbeta_j, \bfgamma_j, \bfsigma^2_j, \bfphi)$ is a normal distribution with mean $m_{s,j}(\bfx_0)$ and variance $v_{s,j}(\bfx_0)$ given in Proposition~\ref{lem: one-step prediction}. 

\begin{proposition}[{\textbf{One-Step Prediction}}] \label{lem: one-step prediction}
The mean $m_{s,j}(\bfx_0)$ and variance $v_{s,j}(\bfx_0)$ in the conditional predictive distribution $\pi(y_{s,j}(\bfx_0) \mid \tilde{\bfy}^{\mathscr{D}}_j, \bfbeta_j, \bfgamma_j,$ $\bfsigma^2_j, \bfphi)$ are given by 
\begin{align*} 
    \begin{split}
    m_{s,j}(\bfx_0) &= \bff^\top_j(\bfx_0) {\bfbeta}_j + \bfc^\top_j(\bfx_0) \bigl(\bfSigma^{j}\bigr)^{-1}(\bfy^{\mathscr{D}}_j - \tilde{\bfF}^j {\bfbeta}_j), \\ 
    v_{s,j}(\bfx_0) &=\sigma^2_{s,j} r(\bfx_0, \bfx_0|\bfphi_s) - \bfc^\top_j(\bfx_0) \bigl(\bfSigma^{j}\bigr)^{-1} \bfc_j(\bfx_0),
    \end{split}
\end{align*}
where the vector $\bff_j(\bfx_0)$ is given by 
\begin{align*}
    \bff_j(\bfx_0) &:= \left( \left(\prod_{i=1}^{s-1} \gamma_{i,j} \right) \tilde{\bfh}_1^\top(\bfx_0), \left(\prod_{i=2}^{s-1} \gamma_{i,j} \right) \tilde{\bfh}_2^\top(\bfx_0), \ldots, \gamma_{s-1,j} \tilde{\bfh}_{s-1}^\top(\bfx_0), \right.\\
    &\quad\quad \left. \tilde{\bfh}_s^\top(\bfx_0) \right)^\top.
\end{align*}
$\bfSigma^j$ is an $s$-by-$s$ block matrix with the $(i,i')$-th block of size $\tilde{n}_t$-by-$\tilde{n}_t$ given by 
\begin{align*}
\bfSigma_{i,i}^j(\tilde{\calX}_t, \tilde{\calX}_t) &:= 
\begin{cases}
\sum_{k=1}^t \left(\prod_{\ell=k}^{t-1}\gamma^2_{\ell,j}\right) \sigma^2_{k,j} r(\tilde{\calX}_t, \tilde{\calX}_t\mid \bfphi_k), & \text{ for } i=i';  \\
\left(\prod_{k=i}^{i'-1}\gamma_{k,j}\right) \bfSigma_{i,i}^j(\tilde{\calX}_i, \tilde{\calX}_i'), & \text{ for } i< i'. 
\end{cases}
\end{align*}
The vector $\bfc_j(\bfx_0)$ is $\bfc_j(\bfx_0):=(\bfc_{1,j}(\bfx_0, \tilde{\calX}_1), \ldots, \bfc_{s,j}(\bfx_0, \tilde{\calX}_s))^\top$ with 
\begin{align*}
    \bfc_{t,j}(\bfx_0, \tilde{\calX}_t) = \gamma_{t-1,j}\bfc_{t-1,j}(\bfx_0, \tilde{\calX}_t) + \left(\prod_{k=t}^{s-1}\gamma_{k,j} \right) \sigma^2_{t,j} r(\bfx_0, \tilde{\calX}_t|\bfphi_t),\, t=2, \ldots, s,
\end{align*}
where $\prod_{k=s}^{s-1}\gamma_{k,j}:=1$ for all $j$'s and $\bfc_{1,j}(\bfx_0, \tilde{\calX}_1) :=\left(\prod_{k=t}^{s-1}\gamma_{k,j}\right) \sigma^2_{1,j} r(\bfx_0, \tilde{\calX}_1|$ $\bfphi_1)$. The $(i,k)$th block in the matrix $\tilde{\bfF}^j$ is $\tilde{\bfF}_{i,k}^j:=\prod_{t=k}^{i-1}\gamma_{t,j} \bfh_k(\tilde{\calX}_i)$ for $i\geq k$, and $\tilde{\bfF}_{i,k}^j:=\mathbf{0}$ for $i<k$. 
\end{proposition}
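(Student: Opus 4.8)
The plan is to prove the proposition as a direct consequence of Gaussian conditioning, after assembling the joint law of $y_{s,j}(\bfx_0)$ together with the augmented data vector $\tilde{\bfy}_j^{\mathscr{D}}$ at a fixed spatial coordinate $j$, conditionally on the hyperparameters $(\bfbeta_j,\bfgamma_j,\bfsigma_j^2,\bfphi)$. First I would unroll the autoregressive recursion \eqref{eqn: AR in multivariate model}: since in this application each $\gamma_{t-1,j}$ is a constant, iterating gives
\begin{align*}
y_{t,j}(\cdot)=\sum_{k=1}^{t}\Big(\prod_{\ell=k}^{t-1}\gamma_{\ell,j}\Big)\delta_{k,j}(\cdot),
\end{align*}
with the conventions $\delta_{1,j}:=y_{1,j}$ and $\prod_{\ell=t}^{t-1}\gamma_{\ell,j}:=1$. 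Because the $\delta_{k,j}$ are mutually independent Gaussian processes with mean $\bfh_k^\top(\cdot)\bfbeta_{k,j}$ and covariance $\sigma_{k,j}^2\, r(\cdot,\cdot\mid\bfphi_k)$, any finite collection of the $y_{t,j}(\cdot)$ values --- in particular $\big(y_{s,j}(\bfx_0),\tilde{\bfy}_j^{\mathscr{D}}\big)$ --- is jointly Gaussian.

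The bulk of the work is assembling the moments of this joint vector. Using the display above and independence across levels, I would show that for $i\le i'$,
\begin{align*}
E\big[y_{i,j}(\bfx)\big]&=\sum_{k=1}^{i}\Big(\prod_{\ell=k}^{i-1}\gamma_{\ell,j}\Big)\bfh_k^\top(\bfx)\bfbeta_{k,j},\\
\Cov\big(y_{i,j}(\bfx),y_{i',j}(\bfx')\big)&=\Big(\prod_{\ell=i}^{i'-1}\gamma_{\ell,j}\Big)\sum_{k=1}^{i}\Big(\prod_{\ell=k}^{i-1}\gamma_{\ell,j}^{2}\Big)\sigma_{k,j}^{2}\,r(\bfx,\bfx'\mid\bfphi_k).
\end{align*}
Restricting the mean formula to $\bfx\in\tilde{\calX}_i$ produces the block form of $\tilde{\bfF}^j$ stated in the proposition, while taking $\bfx=\bfx_0$ and $i=s$ gives $E[y_{s,j}(\bfx_0)]=\bff_j^\top(\bfx_0)\bfbeta_j$ with the stated $\bff_j$. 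Restricting the covariance formula to $\bfx,\bfx'\in\tilde{\calX}_{i}\subseteq\tilde{\calX}_{i'}$ (valid because the augmented designs are nested) gives the diagonal and off-diagonal blocks of $\bfSigma^j$; and taking one argument to be $\bfx_0$ at level $s$ and the other on $\tilde{\calX}_t$ --- then peeling one level off the recursion on the level-$t$ argument --- yields precisely the recursion
\begin{align*}
\bfc_{t,j}(\bfx_0,\tilde{\calX}_t)=\gamma_{t-1,j}\,\bfc_{t-1,j}(\bfx_0,\tilde{\calX}_t)+\Big(\prod_{k=t}^{s-1}\gamma_{k,j}\Big)\sigma_{t,j}^{2}\,r(\bfx_0,\tilde{\calX}_t\mid\bfphi_t),
\end{align*}
so that $\bfc_j(\bfx_0)=\Cov\big(y_{s,j}(\bfx_0),\tilde{\bfy}_j^{\mathscr{D}}\big)$, and the marginal $\Var\big(y_{s,j}(\bfx_0)\big)$ is the covariance formula at $i=i'=s$, $\bfx=\bfx'=\bfx_0$.

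Once these identifications are in place --- $E[\tilde{\bfy}_j^{\mathscr{D}}]=\tilde{\bfF}^j\bfbeta_j$, $\Var(\tilde{\bfy}_j^{\mathscr{D}})=\bfSigma^j$, $\Cov(y_{s,j}(\bfx_0),\tilde{\bfy}_j^{\mathscr{D}})=\bfc_j^\top(\bfx_0)$ --- the standard formula for the conditional distribution of one block of a jointly Gaussian vector given another immediately yields that $y_{s,j}(\bfx_0)\mid\tilde{\bfy}_j^{\mathscr{D}}$ is Gaussian with mean $\bff_j^\top(\bfx_0)\bfbeta_j+\bfc_j^\top(\bfx_0)\bfSigma^{j,-1}(\tilde{\bfy}_j^{\mathscr{D}}-\tilde{\bfF}^j\bfbeta_j)$ and variance $\Var(y_{s,j}(\bfx_0))-\bfc_j^\top(\bfx_0)\bfSigma^{j,-1}\bfc_j(\bfx_0)$, i.e.\ the claimed pair $(m_{s,j},v_{s,j})$. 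The product-over-$j$ form of the predictive law is then just the conditional independence across coordinates already recorded above \eqref{eqn: predictive distribution in multivariate model}.

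I do not expect any conceptual obstacle: the proposition is essentially a corollary of multivariate Gaussian conditioning. The only genuinely delicate part --- and the place where index slips are easy --- is the bookkeeping of the $\gamma$-products when unrolling the recursion, in particular verifying the cross-covariance recursion for $\bfc_{t,j}$ and the off-diagonal blocks $\bfSigma^j_{i,i'}$ with $i<i'$; I would organize that step as an induction on the level index $t$, peeling off $y_{t,j}=\gamma_{t-1,j}y_{t-1,j}+\delta_{t,j}$ one level at a time and using that $\delta_{t,j}$ is independent of everything at levels below $t$.
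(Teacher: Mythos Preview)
Your proposal is correct and follows essentially the same route as the paper: the paper's proof simply observes that, at each fixed coordinate $j$, the formula is the standard autoregressive-cokriging predictive law of \cite{Kennedy2000,Gratiet2013,Qian2008} and then invokes conditional independence across coordinates, whereas you actually carry out that standard derivation---unrolling the recursion, reading off the joint Gaussian moments, and applying the block conditioning formula. One small remark: the marginal variance you compute at $i=i'=s$ is the full sum $\sum_{k=1}^{s}\big(\prod_{\ell=k}^{s-1}\gamma_{\ell,j}^{2}\big)\sigma_{k,j}^{2}\,r(\bfx_0,\bfx_0\mid\bfphi_k)$, not merely $\sigma_{s,j}^{2}\,r(\bfx_0,\bfx_0\mid\bfphi_s)$ as written in the displayed $v_{s,j}$; this is a quirk of the statement rather than of your argument, and your derivation gives the correct expression.
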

\begin{proof}
In the univariate setting, this formula is the same as the one in \cite{Kennedy2000, Gratiet2013, Qian2008}. As the predictive distribution is independent given the range parameters, the result follows straightforwardly for spatial coordinate $j$.
\end{proof}

The one-step prediction formula in Proposition~\ref{lem: one-step prediction} requires $O(N(\sum_{t=1}^s \tilde{n}_t)^3)$ flops to obtain predictions over all $N$ spatial locations. It is worth pointing  out that integrating out model parameters $\{\bfgamma_j, \bfsigma^2_j: j=1, \ldots, N\}$ requires Monte Carlo approximations for $N$ spatial locations. This is computationally demanding if we want to account for uncertainties in these model parameters when the number of spatial locations is large as in the storm surge application. Of course, one may generate prediction by plugging the posterior estimates of these model parameters into the predictive distribution, however, doing so would not allow us to account for these model parameters across all the spatial locations. This will underestimate predictive uncertainty. So, the one-step prediction formula should be avoid in practice.

\section{Distributions} \label{app: distributions}
It is straightforward to show that the conditional distribution $\pi(\mathring{\bfy}_j^{\mathscr{D}}\mid \bfy_j^{\mathscr{D}}, \bfbeta_j, \bfgamma_j, \bfsigma^2_j, \bfphi)$
can be factorized as 
\begin{align*} 
\pi(\mathring{\bfy}_j^{\mathscr{D}}\mid \bfy_j^{\mathscr{D}}, \bfbeta_j, \bfgamma_j, \bfsigma^2_j, \bfphi)&=\pi(\mathring{\bfy}_{1}^{\mathscr{D}}\mid \bfy_{1}^{\mathscr{D}},\bfbeta_{1,j}, \sigma^2_{1,j}, \bfphi_{1})\prod_{t=2}^{s-1}\pi(\mathring{\bfy}_{t}^{\mathscr{D}}\mid\tilde{\bfy}_{t-1}^{\mathscr{D}},\bfy_{t}^{\mathscr{D}}, \\
&\quad \bfbeta_{t,j}, \gamma_{t-1,j}, \sigma^2_{t,j}, \bfphi_{t}).
\end{align*}
The conditional distributions are multivariate normal distributions with means and variances given by 
\begin{align*}
\begin{split}
\mu_{\mathring{y}|y}^{t,j} & :=\mathring{\bfT}_{t,j} {\bfb}_{t,j}+r(\mathring{\calX}_{t},{\calX}_{t}\mid\bfphi_{t})\bfR_t^{-1}(\bfy_{t,j}-\bfT_{t,j} {\bfb}_{t,j}),\\
\Sigma_{\mathring{y}|y}^{t,j} & :=\sigma^2_{t,j} \{r(\mathring{\calX}_{t},\mathring{\calX}_{t}\mid\bfphi_{t})-r(\mathring{\calX}_{t},{\calX}_{t}\mid\bfphi_{t})\bfR_t^{-1}r(\calX_{t},\mathring{\calX}_{t}\mid\bfphi_{t})\},
\end{split}
\end{align*}
where $\mathring{\bfT}_{1,j}:=\mathring{\bfH}_{1,j}$, $\mathring{\bfT}_{t,j}:=[\mathring{\bfH}_{1,j}, y_{t-1,j}(\mathring{\mathcal{X}}_t)]$ for $t>1$,  $\bfb_{1,j} = \bfbeta_1$, and $\bfb_{t,j} = (\bfbeta_{t,j}^\top, \gamma_{t-1,j})^\top$ with $t=2, \ldots, s$.

The joint distribution of $\tilde{\bfy}^{\mathscr{D}}$ and $\bfphi$ can be obtained via $\pi(\tilde{\bfy}^{\mathscr{D}},\bfphi)=\pi(\tilde{\bfy}^{\mathscr{D}}\mid\bfphi)\pi(\bfphi)$. Suppose that the missing data $\mathring{\bfy}_t$ only depends on $\bfy_t$ and missing data $\mathring{\bfy}_{t-1}$. It follows that, the conditional distribution $\pi(\mathring{\bfy}^{\mathscr{D}}\mid \bfy^{\mathscr{D}},\bfphi)$
is given by 
\begin{align*}
\begin{split}
\pi(\mathring{\bfy}^{\mathscr{D}}\mid \bfy^{\mathscr{D}},\bfphi) 
= \prod_{j=1}^N \pi(\mathring{\bfy}_{1,j}\mid \bfy_{1,j},\bfphi_{1})\prod_{t=2}^{s-1}\pi(\mathring{\bfy}_{t,j}\mid\tilde{\bfy}_{t-1,j},\bfy_{t,j},\bfphi_{t}),
\end{split}
\end{align*}
where the conditional distributions on the right-hand side are $\mathring{n}_{t}$-dimensional Student-$t$ distribution for $t=1, \ldots, s-1$, whose degrees of freedom is $n_{t} - q_t$, location parameters $\bfmu_{t,j}$ and scale parameters $V_{t,j}$ are 
\begin{align*}
\begin{split}
\bfmu_{t,j} & :=\mathring{\bfT}_t\hat{\bfb}_{t,j} +r(\mathring{\calX}_{t},{\calX}_{t}\mid\bfphi_{t})\bfR_t^{-1}(\bfy_{t,j}-\bfT_{t,j}\hat{\bfb}_{t,j}),\\
V_{t,j} & :=\frac{S^2(\bfphi_t, \bfy_{t,j})}{n_{t}-q_t}\Sigma_{\mathring{y}\mathring{y}|y}^{t},
\end{split}
\end{align*}
with 
\begin{align*}
\begin{split}
\hat{\bfb}_{t,j} &:=(\bfT_{t,j}^\top \bfR_t^{-1} \bfT_{t,j})^{-1} \bfT_{t,j}^\top \bfR_t^{-1}\bfy_{t,j} \\
S^2(\bfphi_t, \bfy_{t,j}) & := \bfy_{t,j}^\top \bfQ_t \bfy_{t,j}\\
\Sigma_{\mathring{y}\mathring{y}|y}^{t} & :=\mathring{\bfR}_t-\bfr_t^\top(\mathring{\calX}_t)\bfR_t^{-1}\bfr_t(\mathring{\calX}_{t}) 
+ [\mathring{\bfT}_{t,j}^{\top}-\bfT_{t,j}^{\top}\bfR_{t}^{-1} \bfr_t(\mathring{\calX}_t)]^{\top} \\
&\cdot (\bfT_{t,j}^{\top}\bfR_{t}^{-1}\bfT_{t,j})^{-1} [\mathring{\bfT}_{t,j}^{\top}-\bfT_{t,j}^{\top}\bfR_{t}^{-1} \bfr_t(\mathring{\calX}_t)],
\end{split}
\end{align*}
where $\bfR_t:=r(\calX_t, \calX_t \mid \bfphi_t)$, $\bfr_t(\mathring{\calX}_t):=r({\calX}_t, \mathring{\calX}_t\mid \bfphi_t)$, and $\mathring{\bfR}_t:=r(\mathring{\calX}_t, \mathring{\calX}_t\mid \bfphi_t)$. $\bfQ_t:=\bfR_t^{-1}\{\mathbf{I} - \bfT_{t,j} (\bfT_{t,j}^\top \bfR_t^{-1} \bfT_{t,j})^{-1} \bfT_{t,j}^\top \bfR_t^{-1}\}$.

\section{PP Cokriging with Nested Design} \label{sec: nested design}
Let $\hat{y}_{t,j}(\bfx_0):=E[y_{t,j}(\bfx_0)\mid \bfy^{\mathscr{D}}, \bfphi]$ be the predictive mean and $\hat{v}_{t,j}(\bfx_0):=Var[y_{t,j}(\bfx_0)\mid \bfy^{\mathscr{D}}, \bfphi]$ be the predictive variance at $j$th coordinate and level $t$. In what follows, $\hat{\bfy}_{t,j}(\bfx_0)$ is called the cokriging predictor and $\hat{\bfv}_{t,j}(\bfx_0)$ is called the cokriging variance at new input $\bfx_0$ for $j$ coordinate and level $t$.

\begin{lemma}  \label{thm: prediction in nested design}
Suppose that $n_t-q_t>2$. For $j=1, \ldots, N$, the cokriging predictor and cokriging variance at fidelity level  $t$ are given by 
\begin{align*} 
    \begin{split}
    \hat{y}_{t,j}(\bfx_0) &= \bfT_{t,j}^\top(\bfx_0) \hat{\bfb}_{t,j} + \bfr_t^\top(\bfx_0) \bfR_t^{-1}(\bfy_t - \bfT_{t,j} \hat{\bfb}_{t,j}), \\
    \hat{v}_{t,j}(\bfx_0) &= \hat{\gamma}_{t-1,j}^2\hat{v}_{t-1,j}(\bfx_0) + \frac{n_t-q_t}{n_t-q_t-2} \hat{\sigma}^2_{t,j} \left\{r(\bfx_0, \bfx_0;\bfphi_t) - \bfr_t^\top(\bfx_0) \bfR_t^{-1} \bfr_t(\bfx_0) + \kappa_{t,j} \right\},
    \end{split}
\end{align*}
where $\bfT_{1,j}(\bfx_0):=\bfh_1(\bfx_0)$, $\bfT_{t,j}(\bfx_0):=[\bfh_t^\top(\bfx_0), \hat{y}_{t-1,j}^\top(\bfx_0)]^\top$ for $t>1$, $\bfT_{1,j}:=\bfH_1$, $\bfT_{t,j}=[\bfH_t, \bfy_{t-1,j}(\mathcal{X}_t)]$, $\hat{v}_{0,j}:=0$, $\hat{\bfb}_{t,j}:=(\bfT_{t,j}^\top \bfR_t^{-1} \bfT_{t,j})^{-1} \bfT_{t,j}^\top \bfR_t^{-1} \bfy_{t,j}$ is the generalized least square estimator for $\bfb_{t,j}:=(\bfbeta_{t,j}^\top, \gamma_{t,j})^\top$ with $\gamma_{0,j}=0$, $\hat{\sigma}_{t,j}^2:=(n_t-q_t)^{-1}(\bfy_{t,j}-\bfT_{t,j}\hat{\bfb}_{t,j})^\top\bfR_t^{-1}(\bfy_{t,j}-\bfT_{t,j}\hat{\bfb}_{t,j})$,   and 
\begin{align*}
    \kappa_{t,j} &:=  [\bfT_{t,j}(\bfx_0)-\bfT_{t,j}^{\top}\bfR_{t}^{-1}\bfr_t(\bfx_0)]^{\top}(\bfT_{t,j}^{\top}\bfR_{t}^{-1}\bfT_{t,j})^{-1}[\bfT_{t,j}(\bfx_0)-\bfT_{t,j}^{\top}\bfR_{t}^{-1}\bfr_t(\bfx_0)] \\
    &\quad + \hat{v}_{t-1,j}(\bfx_0) \left\{ y_{t-1,j}^\top(\calX_t) \bfQ^H_t y_{t-1,j}(\calX_t) \right\}^{-1}.
\end{align*}
with $\bfQ^H_t := \bfR_t^{-1} - \bfR_t^{-1} \bfH_t (\bfH_t^\top \bfR_t^{-1} \bfH_t)^{-1} \bfH_t^\top \bfR_t^{-1}$.
\end{lemma}
\begin{proof}
As independence across spatial locations is assumed, we utilize the predictive formulas independently at each spatial location and then let the input correlation matrix to be the same across each spatial location.  Thanks to Theorem 3.3 of \cite{Ma2020OBayes}, the prediction formula at each fidelity level  is available in a closed form. The results thus follow immediately.   
\end{proof}

\section{Proof of Theorem~\ref{thm: SEP}} \label{sec: proof in SEP cokriging}
\begin{proof}
We first establish the results for predictive mean and then establish the results for predictive variance. 

Let $\bfy^t(\bfx_0):=(y_{t,1}(\bfx_0), \ldots, y_{t,N}(\bfx_0))^\top$ be a vector of outputs at input $\bfx_0$ at level $t$ across $N$ spatial coordinates. 
The joint distribution of $\bfy(\bfx_0):=[\bfy_1(\bfx_0), \ldots, \bfy_N(\bfx_0)]$ and $\bfy^{\mathscr{D}}$ is a product of matrix-normal distributions across $s$ levels,
\begin{align} \label{eqn: joint MN distribution}
\begin{split}
\begin{pmatrix} 
\bfy(\bfx_0) \\
\bfy^{\mathscr{D}} \\
\end{pmatrix} 
\biggr \rvert  \bfB, \Gamma, \bfSigma, \bfphi &\sim 
\mathcal{MN}_{n_1+1, N} \Biggr(   
\begin{pmatrix} 
\bfh^\top_1(\bfx_0) \bfB_1 \\
\bfH_1 \bfB_1 \\
\end{pmatrix}, 
\begin{pmatrix} 
r(\bfx_0, \bfx_0; \bfphi_1) & \bfr^\top_1(\bfx_0)  \\
\bfr_1(\bfx_0)& \bfR_1  \\
\end{pmatrix}, 
\bfSigma_1
\Biggr) \\
&\quad \times \prod_{t=2}^s 
\mathcal{MN}_{n_t+1, N} \left(   
\begin{pmatrix} 
\bfh^\top_t(\bfx_0) \bfB_t + (\bfy^{t-1}(\bfx_0))^\top \Gamma_{t-1} \\
\bfH_t \bfB_t + W_{t-1} \Gamma_{t-1} \\
\end{pmatrix}, \right. \\
&\quad\quad \left.
\begin{pmatrix} 
r(\bfx_0, \bfx_0; \bfphi_t) & \bfr_t^\top(\bfx_0)  \\
\bfr_t(\bfx_0)& \bfR_t  \\
\end{pmatrix}, \bfSigma_t
\right),
\end{split}
\end{align}
where $\bfr_t(\bfx_0):=r(\mathcal{X}_t, \bfx_0; \bfphi_t)$. Thus the  predictive distribution of $\bfy(\bfx_0):=[\bfy_1(\bfx_0),$ $\ldots, \bfy_N(\bfx_0)]$ given $\bfy^{\mathscr{D}}$ and  $\bfB, \Gamma, \bfSigma, \bfphi$ is 
\begin{align} \label{eqn: joint pred dist MN}
\begin{split}
\pi(\bfy(\bfx_0)\mid \bfy^{\mathscr{D}}, \bfB, \Gamma, \bfSigma, \bfphi) &= p(\bfy^1(\bfx_0) \mid \bfy^{1, \mathscr{D}}, \bfB_1, \bfSigma_1, \bfphi_1) \\
&\quad \times \prod_{t=1}^s p(\bfy^t(\bfx_0) \mid \bfy^{t, \mathscr{D}}, \bfy^{t-1, \mathscr{D}}, \bfB_t, \Gamma_t, \bfSigma_t, \bfphi_t),
\end{split}
\end{align}
with 
\begin{align}
\begin{split}
p(\bfy^1(\bfx_0) \mid \bfy^{1, \mathscr{D}}, \bfB_1, \bfSigma_1, \bfphi_1) &=\mathcal{MN}_{1, N} \biggr(E[\bfy^t(\bfx_0) \mid \bfy^{1,\mathscr{D}}, \bfB_1, \bfSigma_1, \bfphi_1], \Lambda_1, \bfSigma_1 \biggr),  \\
p(\bfy^t(\bfx_0) \mid \bfy^{t, \mathscr{D}}, \bfy^{t-1, \mathscr{D}}, \bfB_t, \Gamma_t, \bfSigma_t, \bfphi_t) &= \mathcal{MN}_{1, N}\biggr(E[\bfy^t(\bfx_0) \mid \bfy^{t, \mathscr{D}}, \bfy^{t-1, \mathscr{D}}, \bfB_t, \Gamma_{t-1}, \bfSigma_t, \\
&\quad \quad \bfphi_t], \Lambda_t, \bfSigma_t\biggr),
\end{split}
\end{align}
where 
\begin{align} \label{eqn: predictive quantities}
\begin{split}
E[\bfy^1(\bfx_0) \mid \bfy^{1, \mathscr{D}}, \bfB_1, \bfSigma_1, \bfphi_1] &:= \bfh^\top_1(\bfx_0) \bfB_1 + \bfr_1^\top(\bfx_0) \bfR^{-1}_1\{\bfy^{1, \mathscr{D}} - \bfH_1 \bfB_1  \}, \\
\Lambda_1&:= r(\bfx_0, \bfx_0; \bfphi_1) - \bfr_1^\top(\bfx_0)\bfR_1^{-1} \bfr_1(\bfx_0), \\
E[\bfy^t(\bfx_0) \mid \bfy^{t, \mathscr{D}}, \bfB_t, \Gamma_{t-1}, \bfSigma_t, \bfphi_t] &:= \bfh_t(\bfx_0) \bfB_t  + (\bfy^{t-1}(\bfx_0))^\top\Gamma_{t-1} \\
& \quad  + \bfr_t^\top(\bfx_0) \bfR^{-1}_t\{\bfy^{t, \mathscr{D}} - \bfH_t \bfB_t -W_{t-1}\Gamma_{t-1} \}, \\
\Lambda_t&:= r(\bfx_0, \bfx_0; \bfphi_t) - \bfr_t^\top(\bfx_0)\bfR_t^{-1} \bfr_t(\bfx_0).
\end{split}
\end{align}

Using the objective prior \eqref{eqn: constant prior in SEP cokriging} yields the posterior of $\bfB, \bfrho:=[\bfrho_{1}, \ldots, \bfrho_{s-1}]^\top$ given $\bfy^{\mathscr{D}},\bfSigma, \bfphi$ as a product of posteriors:
\begin{align*}
\pi(\bfB, \bfrho \mid \bfy^{\mathscr{D}},\bfSigma, \bfphi) &= \pi(\bfB_1\mid \bfy^{1, \mathscr{D}},\bfSigma_1, \bfphi_1)  \prod_{t=2}^s \pi(\bfB_t, \bfrho_{t-1}\mid \bfy^{t, \mathscr{D}}, \bfy^{t-1, \mathscr{D}}, \bfSigma_t, \bfphi_t),
\end{align*} 
with 
\begin{align} \label{eqn: posterior t}
\begin{split}
\bfB_1 \mid \bfy^{1, \mathscr{D}}, \bfSigma, \bfphi &\sim \mathcal{MN}_{q_1, N}\biggr(\hat{\bfB}_1, [\bfH_1^\top\bfR_1^{-1}\bfH_1]^{-1}, \bfSigma_1\biggr), \\
\begin{pmatrix}
\bfB_t \\
\bfrho_{t-1}^\top
\end{pmatrix}
\mid \bfy^{t,\mathscr{D}}, \bfy^{t-1,\mathscr{D}}, \bfSigma, \bfphi & \sim \mathcal{MN}_{q_t+1, N}\biggr(
\begin{pmatrix}
\hat{\bfB}_t \\
\hat{\bfrho}_{t-1}^\top
\end{pmatrix}, 
[\bfT_t^\top\bfR_t^{-1}\bfT_t]^{-1}, \bfSigma_t\biggr),
\end{split}
\end{align}
where $\bfT_t:=[\bfH_t, W_{t-1}]$. $\hat{\bfB}_1:=[\bfH_1^\top\bfR_1^{-1}\bfH_1]^{-1} \bfH_1^\top\bfR_1^{-1} \bfy^{1,\mathscr{D}}$. $[\hat{\bfB}_t^\top, \hat{\bfrho}_{t-1}]^\top:=[\bfT_t^\top\bfR_t^{-1}\bfT_t]^{-1}$ $\bfT_t^\top\bfR_t^{-1} \bfy^{t,\mathscr{D}}$. 

\noindent \textbf{(1) Predictive Mean:}\\
For the $j$th coordinate at level $t$, it follows from \eqref{eqn: posterior t} that 
\begin{align} \label{eqn: posterior j}
\begin{split}
\bfbeta_{1,j} \mid \bfy^{1, \mathscr{D}}, \bfSigma, \bfphi &\sim \mathcal{N}_{q_1}\biggr(\hat{\bfbeta}_{1,j}, \bfSigma_1^{jj}[\bfH_1^\top\bfR_1^{-1}\bfH_1]^{-1}\biggr), \\
\begin{pmatrix}
\bfbeta_{t,j} \\
\gamma_{t-1,j}
\end{pmatrix}
\mid \bfy^{t,\mathscr{D}}, \bfy^{t-1,\mathscr{D}}, \bfSigma, \bfphi & \sim \mathcal{N}_{q_t+1}\left(
\begin{pmatrix}
\hat{\bfbeta}_{t,j} \\
\hat{\gamma}_{t-1,j}
\end{pmatrix}, 
\bfSigma_t^{jj}[\bfT_{t,j}^\top\bfR_t^{-1}\bfT_{t,j}]^{-1}\right),
\end{split}
\end{align}
where $\hat{\bfbeta}_{1,j}:=[\bfH_1^\top\bfR_1^{-1}\bfH_1]^{-1} \bfH_1^\top\bfR_1^{-1} \bfy_{1,j}$. $(\hat{\bfbeta}_{t,j}^\top, \hat{\gamma}_{t-1,j})^\top:=[\bfT_{t,j}^\top\bfR_t^{-1}\bfT_{t,j}]^{-1} \bfT_{t,j}^\top\bfR_t^{-1} \bfy_{t,j}$.

For the $j$th coordinate at level $t=1$, it follows from \eqref{eqn: predictive quantities} that the predictive mean in the separable autoregressive cokriging emulator is 
\begin{align} \label{eqn: mean level 1}
E[y_{1,j}(\bfx_0) \mid \bfy^{\mathscr{D}},\bfB, \bfrho, \bfSigma, \bfphi] &= \bfh_1^\top(\bfx_0) \bfbeta_{1,j} + \bfr_1^\top(\bfx_0)\bfR_1^{-1}[\bfy_{1,j} - \bfH_1\bfbeta_{1,j}] 
\end{align}
Taking the expectation over $\bfbeta_{1,j}$ w.r.t.~the posterior \eqref{eqn: posterior j} in \eqref{eqn: mean level 1} yields the same formula for the posterior predictive mean $\hat{y}_{1,j}(\bfx_0)$ in the PP cokriging emulator in Lemma~\ref{thm: prediction in nested design} under nested design. 

For the $j$th coordinate at level $t>1$, it follows from \eqref{eqn: predictive quantities} that the posterior predictive mean in the separable autoregressive cokriging emulator is 
\begin{align} \label{eqn: mean level t}
\begin{split}
E[y_{t,j}(\bfx_0) \mid \bfy^{\mathscr{D}},\bfB, \bfrho, \bfSigma, \bfphi] &= E_{[y_{t-1,j}(\bfx_0) \mid \bfy^{\mathscr{D}},\bfB, \bfrho, \bfSigma, \bfphi]} \{ E_{[y_{t,j}(\bfx_0) \mid \bfy^{\mathscr{D}}, y_{t-1,j}(\bfx_0), \bfB, \bfrho, \bfSigma, \bfphi]}[y_{t}(\bfx_0)] \} \\
&=E_{[y_{t-1,j}(\bfx_0) \mid \bfy^{\mathscr{D}},\bfB, \bfrho, \bfSigma, \bfphi]} \{ \bfh_t^\top(\bfx_0) \bfbeta_{t,j} + y_{t-1,j}(\bfx_0)\gamma_{t-1,j} \\
&\quad + \bfr_t^\top(\bfx_0)\bfR_t^{-1}[\bfy_{t,j} - \bfH_t\bfbeta_{t,j} - y_{t-1,j}(\mathcal{X}_t) \gamma_{t-1,j} ] \} \\
&= \bfh_t^\top(\bfx_0) \bfbeta_{t,j} + \hat{y}_{t-1,j}(\bfx_0)\gamma_{t-1,j} \\
&\quad  + \bfr_t^\top(\bfx_0)\bfR_t^{-1}[\bfy_{t,j} - \bfH_t\bfbeta_{t,j} - y_{t-1,j}(\mathcal{X}_t) \gamma_{t-1,j} ]. 
\end{split}
\end{align}
Taking the expectation over $\bfbeta_{t,j}, \gamma_{t-1,j}$ w.r.t.~the posterior \eqref{eqn: posterior j} in \eqref{eqn: mean level t} yields the same formula for the posterior predictive mean $\hat{y}_{t,j}(\bfx_0)$ in the PP cokriging emulator in Lemma~\ref{thm: prediction in nested design} under nested design. 

\noindent \textbf{(2) Predictive Variance:}\\
The proof of predictive variance consist of two steps: In Step 1, we derive the expression for $\text{Var}\{y_{t,j}(\bfx_0) \mid \bfy^{\mathscr{D}}, \bfSigma, \bfphi \}$; in Step 2, we derive the formula for $\text{Var}\{y_{t,j}(\bfx_0) \mid \bfy^{\mathscr{D}}, \bfphi \}$ by marginalizing out $\bfSigma$.

Note that according to \cite{Conti2010}, the  predictive distribution of $\bfy(\bfx_0):=[\bfy_1(\bfx_0),$ $\ldots, \bfy_N(\bfx_0)]$ given $\bfy^{\mathscr{D}}$ and  $\bfSigma, \bfphi$ is 
\begin{align} \label{eqn: joint pred dist MN}
\begin{split}
\pi(\bfy(\bfx_0)\mid \bfy^{\mathscr{D}}, \bfSigma, \bfphi) &= p(\bfy^1(\bfx_0) \mid \bfy^{1, \mathscr{D}}, \bfphi_1)  \prod_{t=1}^s p(\bfy^t(\bfx_0) \mid \bfy^{t, \mathscr{D}}, \bfy^{t-1, \mathscr{D}},  \bfSigma_t, \bfphi_t),
\end{split}
\end{align}
with 
\begin{align}
\begin{split}
p(\bfy^1(\bfx_0) \mid \bfy^{1, \mathscr{D}}, \bfSigma_1, \bfphi_1) &=\mathcal{MN}_{1, N} \biggr(E[\bfy^t(\bfx_0) \mid \bfy^{1,\mathscr{D}}, \bfSigma_1, \bfphi_1], \Lambda_1^*, \bfSigma_1 \biggr),  \\
p(\bfy^t(\bfx_0) \mid \bfy^{t, \mathscr{D}}, \bfy^{t-1, \mathscr{D}}, \bfSigma_t, \bfphi_t) &= \mathcal{MN}_{1, N}\biggr(E[\bfy^t(\bfx_0) \mid \bfy^{t, \mathscr{D}}, \bfy^{t-1, \mathscr{D}},  \bfSigma_t, \bfphi_t], \Lambda_t^*, \bfSigma_t\biggr),
\end{split}
\end{align}
where 
\begin{align} \label{eqn: marginal predictive quantities}
\begin{split}
E[\bfy^1(\bfx_0) \mid \bfy^{1, \mathscr{D}},  \bfSigma_1, \bfphi_1] &:= \bfh^\top_1(\bfx_0) \hat{\bfB}_1 + \bfr_1^\top(\bfx_0) \bfR^{-1}_1\{\bfy^{1, \mathscr{D}} - \bfH_1 \hat{\bfB}_1  \}, \\
\Lambda_1^*&:= \Lambda_1 + \bigr\{\bfh_1(\bfx_0) - \bfH_1^\top \bfR_1^{-1}\bfr_1(\bfx_0) \bigr\}^\top(\bfH_1^\top \bfR_1^{-1} \bfH_1)^{-1} \\
&\quad\quad \times \bigr\{\bfh_1(\bfx_0) - \bfH_1^\top \bfR_1^{-1}\bfr_1(\bfx_0) \bigr\}, \\
E[\bfy^t(\bfx_0) \mid \bfy^{t, \mathscr{D}}, \bfSigma_t, \bfphi_t] &:= \bfh_t^\top(\bfx_0) \hat{\bfB}_t  + (\bfy^{t-1}(\bfx_0))^\top\hat{\Gamma}_{t-1} \\
& \quad  + \bfr_t^\top(\bfx_0) \bfR^{-1}_t\{\bfy^{t, \mathscr{D}} - \bfT_t \hat{\bfB}_t -W_{t-1}\hat{\Gamma}_{t-1} \}, \\
\Lambda_t^*&:= \Lambda_t + \bigr\{\bfT_t(\bfx_0) - \bfT_t^\top \bfR_t^{-1}\bfr_t(\bfx_0) \bigr\}^\top(\bfT_t^\top \bfR_t^{-1} \bfT_t)^{-1} \\
&\quad\quad \times \bigr\{\bfT_t(\bfx_0) - \bfT_t^\top \bfR_t^{-1}\bfr_t(\bfx_0) \bigr\}, \\
\bfT_t(\bfx_0) &:=[\bfh_t(\bfx_0), (\bfy^{t-1}(\bfx_0))^\top] \\
\bfT_t &:= [\bfH_t, W_{t-1}]\\
\end{split}
\end{align}

\noindent \textbf{Step 1:} The formula for the predictive variance at level $t$ and $j$th spatial coordinate follows from the law of total variance: 
\begin{align} \label{eqn: conditional variance} 
\begin{split}
\text{Var}\{y_{t,j}(\bfx_0) \mid \bfy^{\mathscr{D}}, \bfSigma, \bfphi \}& = \text{Var}\{E[y_{t,j}(\bfx_0) \mid  \bfy^{\mathscr{D}}, y_{t-1,j}(\bfx_0),  \bfSigma, \bfphi ] \mid  \bfy^{\mathscr{D}}, \bfSigma, \bfphi \} \\
&\quad + E\{\text{Var}[y_{t,j}(\bfx_0) \mid \bfy^{\mathscr{D}},  \bfSigma, \bfphi, y_{t-1,j}(\bfx_0)] \mid \bfy^{\mathscr{D}},  \bfSigma, \bfphi \}
\end{split}
\end{align}
with 
\begin{align*}
\text{Var}\{E[y_{t,j}(\bfx_0) \mid  \bfy^{\mathscr{D}}, y_{t-1,j}(\bfx_0),  \bfSigma, \bfphi ] \} &=\text{Var}\{y_{t-1,j}(\bfx_0) \hat{\gamma}_{t-1,j} \mid \bfy^{\mathscr{D}}, \bfSigma, \bfphi\} \\
&=  \hat{\gamma}_{t-1,j}^2 v_{t-1,j}(\bfx_0)   \\
E\{\text{Var}[y_{t,j}(\bfx_0) \mid \bfy^{\mathscr{D}}, \bfSigma, \bfphi, y_{t-1,j}(\bfx_0)] \mid \bfy^{\mathscr{D}},  \bfSigma, \bfphi \} & = 
 E\{\bfSigma_{t}^{jj} \Lambda_t^* \mid \bfy^{\mathscr{D}},  \bfSigma, \bfphi   \} \\
&= \bfSigma_{t}^{jj} \{ \Lambda_t + \kappa_{t,j} \},
\end{align*}
where $\bfSigma_t^{jj} = \sigma_{t,j}^2$.

\noindent \textbf{Step 2:} Applying the law of total variance again yields 
\begin{align*}
\text{Var}\{y_{t,j}(\bfx_0) \mid \bfy^{\mathscr{D}}, \bfphi \} &= \text{Var}_{\bfSigma_t \mid \bfy^{\mathscr{D}}, \bfphi} \bigr\{E[y_{t,j}(\bfx_0) \mid \bfy^{\mathscr{D}}, \bfSigma, \bfphi ] \bigr\} \\
&\quad + E_{\bfSigma_t \mid \bfy^{\mathscr{D}}, \bfphi} \bigr\{\text{Var}[y_{t,j}(\bfx_0) \mid \bfy^{\mathscr{D}}, \bfSigma, \bfphi] \bigr\}.
\end{align*}
Note that the first term above is zero, since the posterior mean does not depend on $\bfSigma_t$ according to~\eqref{eqn: mean level t}. Thus, it follows from ~\eqref{eqn: conditional variance} that
\begin{align*}
\text{Var}\{y_{t,j}(\bfx_0) \mid \bfy^{\mathscr{D}}, \bfphi \} & = \hat{\gamma}_{t-1,j}^2 v_{t-1,j}(\bfx_0) +   E\{\bfSigma_{t}^{jj} \mid \bfy^{\mathscr{D}}, \bfSigma, \bfphi\}  \{ \Lambda_t + \kappa_{t,j} \},
\end{align*}
as desired.

\end{proof} 

\section{Ancillary Results} \label{app: useful results}

\begin{lemma} \label{lem: determinant}
Let $\bfT_{t,j}:=[\bfH_t, W_{t-1,j}]$ and $\bfQ_t:=\bfR_t^{-1}\{\mathbf{I} - \bfT_{t,j} (\bfT_{t,j}^\top \bfR_t^{-1}$\ $ \bfT_{t,j})^{-1} \bfT_{t,j}^\top \bfR_t^{-1}\}$ for $t>1$. Then the following result holds: 
\begin{align*}
    \begin{split}
    S^2(\bfphi_t, \bfy_{t,j}) & = \bfy_{t,j}^\top \bfQ_t^H \bfy_{t,j} - \bfy_{t,j}^\top \bfK_t \bfOmega_{t-1,j} \bfK_t \bfy_{t,j} 
     + 2\bfy_{t,j}^\top \bfR_t^{-1} \bfOmega_{t-1,j} \bfK_t \bfy_{t,j} \\
     &- \bfy_{t,j}^\top  \bfR_t^{-1} \bfOmega_{t-1,j} \bfR_t^{-1} \bfy_{t,j}
    \end{split}
\end{align*}
where $\bfQ_t^H:=\bfR_t^{-1}\{\mathbf{I} - \bfH_t (\bfH_t^\top \bfR_t^{-1} \bfH_t)^{-1} \bfH_t^\top \bfR_t^{-1}\}$, $\bfK_t:=\bfR_t^{-1}\bfH_t(\bfH_t^\top \bfR_t^{-1}$\ $\bfH_t)^{-1} \bfH_t^\top \bfR_t^{-1}$, and $\bfOmega_{t-1,j}:=W_{t-1,j} (W_{t-1,j}^{\top} \bfQ_t^H W_{t-1,j})^{-1} W_{t-1,j}^{\top}$. These two identities show that the computational cost to compute $\sum_{j=1}^N \ln|\bfT_{t,j}^\top \bfR_t^{-1} \bfT_{t,j}|$ and $\sum_{j=1}^N \ln S^2(\bfphi_t, \bfy_{t,j})$ is reduced from $O(Nn_t^3)$ to $O(n_t^3+Nn_t^2)$.
\end{lemma}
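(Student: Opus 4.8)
Fix a level $t\ge 2$ and a coordinate $j$; to lighten notation I would drop subscripts and write $\bfR,\bfH,\bfQ^H,\bfK$ for $\bfR_t,\bfH_t,\bfQ_t^H,\bfK_t$ and $W,\bfT,\bfy,\bfOmega$ for $W_{t-1,j},\bfT_{t,j},\bfy_{t,j},\bfOmega_{t-1,j}$, with $\bfT=[\bfH,W]$. The plan is to extract the stated four-term expression from a block-partitioned (Schur-complement) inversion of $\bfT^\top\bfR^{-1}\bfT$. Expanding $\bfQ_t$ in the definition of $S^2$ gives
\[
S^2=\bfy^\top\bfR^{-1}\bfy-\bfy^\top\bfR^{-1}\bfT(\bfT^\top\bfR^{-1}\bfT)^{-1}\bfT^\top\bfR^{-1}\bfy ,
\]
i.e. the $\bfR^{-1}$-weighted residual sum of squares of $\bfy$ regressed on $\bfT$; full column rank of $\bfT$ (assumed, for identifiability of the emulator) makes every Gram matrix below invertible.

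Next I would write $\bfT^\top\bfR^{-1}\bfT$ with $(1,1)$ block $\bfA:=\bfH^\top\bfR^{-1}\bfH$, off-diagonal block $\bfH^\top\bfR^{-1}W$, and $(2,2)$ block $W^\top\bfR^{-1}W$. Since $\bfK=\bfR^{-1}\bfH\bfA^{-1}\bfH^\top\bfR^{-1}$ and $\bfQ^H=\bfR^{-1}-\bfK$, the Schur complement of $\bfA$ equals $W^\top(\bfR^{-1}-\bfK)W=W^\top\bfQ^H W$. Substituting the standard $2\times 2$ block-inverse formula, simplifying with $\bfy^\top\bfR^{-1}\bfy-\bfy^\top\bfR^{-1}\bfH\bfA^{-1}\bfH^\top\bfR^{-1}\bfy=\bfy^\top\bfQ^H\bfy$ and $W^\top\bfR^{-1}\bfH\bfA^{-1}\bfH^\top\bfR^{-1}\bfy-W^\top\bfR^{-1}\bfy=-W^\top\bfQ^H\bfy$, and collecting terms, gives the intermediate identity
\[
S^2=\bfy^\top\bfQ^H\bfy-\bfy^\top\bfQ^H W(W^\top\bfQ^H W)^{-1}W^\top\bfQ^H\bfy ,
\]
which is just the Frisch--Waugh--Lovell reduction (regress out the shared block $\bfH$, then $W$).

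Finally I would insert $\bfQ^H=\bfR^{-1}-\bfK$ into the second term, so that $\bfQ^H W(W^\top\bfQ^H W)^{-1}W^\top\bfQ^H=(\bfR^{-1}-\bfK)\,\bfOmega\,(\bfR^{-1}-\bfK)$ with $\bfOmega=W(W^\top\bfQ^H W)^{-1}W^\top$ exactly as in the statement. Expanding into four pieces and using the symmetry of $\bfR^{-1}$, $\bfK$ and $\bfOmega$ (so the scalar $\bfy^\top\bfR^{-1}\bfOmega\bfK\bfy$ equals its transpose $\bfy^\top\bfK\bfOmega\bfR^{-1}\bfy$) collapses the two cross terms into $2\bfy^\top\bfR^{-1}\bfOmega\bfK\bfy$, and restoring subscripts reproduces the claimed identity term by term. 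For the complexity claim I would note that $\bfR_t^{-1}$, $\bfQ_t^H$, $\bfK_t$ and $|\bfH_t^\top\bfR_t^{-1}\bfH_t|$ depend only on $\bfphi_t$ and the shared design, hence are formed once in $O(n_t^3)$; for each $j$ the quantities $\bfR_t^{-1}\bfy_{t,j}$, $\bfK_t\bfy_{t,j}$, $\bfQ_t^H W_{t-1,j}$ and the scalar $W_{t-1,j}^\top\bfQ_t^H W_{t-1,j}$ cost $O(n_t^2)$, after which every term of the identity is an $O(n_t^2)$ inner product, so $\sum_{j=1}^N\ln S^2(\bfphi_t,\bfy_{t,j})$ costs $O(n_t^3+Nn_t^2)$; and the companion determinant factorization $|\bfT_{t,j}^\top\bfR_t^{-1}\bfT_{t,j}|=|\bfH_t^\top\bfR_t^{-1}\bfH_t|\cdot|W_{t-1,j}^\top\bfQ_t^H W_{t-1,j}|$ is the Schur-complement determinant identity, giving the same bound for $\sum_{j=1}^N\ln|\bfT_{t,j}^\top\bfR_t^{-1}\bfT_{t,j}|$.

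I expect no conceptual obstacle, since the argument is entirely linear algebra. The only real care goes into the Schur-complement expansion --- tracking which cross terms cancel against the off-diagonal blocks and which combine --- and into recognizing that it is the symmetry of $\bfR^{-1}$, $\bfK$ and $\bfOmega$ that forces the two middle terms to coincide; it is also worth stating explicitly that invertibility of $W_{t-1,j}^\top\bfQ_t^H W_{t-1,j}$ is equivalent to full column rank of $\bfT_{t,j}$, which is already part of the model's identifiability assumptions.
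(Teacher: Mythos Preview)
Your proof is correct and follows essentially the same Schur-complement/block-inverse route as the paper: both partition $\bfT_{t,j}^\top\bfR_t^{-1}\bfT_{t,j}$ with the $(1,1)$ block $\bfA_t=\bfH_t^\top\bfR_t^{-1}\bfH_t$, use its Schur complement $W_{t-1,j}^\top\bfQ_t^H W_{t-1,j}$ for the determinant identity, and expand the block inverse to obtain the four-term expression for $S^2$. Your organization via the intermediate Frisch--Waugh--Lovell identity $S^2=\bfy^\top\bfQ^H\bfy-\bfy^\top\bfQ^H W(W^\top\bfQ^H W)^{-1}W^\top\bfQ^H\bfy$ and the subsequent substitution $\bfQ^H=\bfR^{-1}-\bfK$ is slightly cleaner than the paper's direct five-term expansion of $\bfT_{t,j}(\bfT_{t,j}^\top\bfR_t^{-1}\bfT_{t,j})^{-1}\bfT_{t,j}^\top$, but the underlying linear algebra is identical.
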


\begin{proof}[\textbf{Proof of Lemma~\ref{lem: determinant}}]
Notice that 
\begin{align*}
    \bfT_{t,j}^\top \bfR_t^{-1} \bfT_{t,j} &= \begin{pmatrix}
    \bfH_t^\top \bfR_t^{-1} \bfH_t & \bfH_t^\top \bfR_t^{-1} W_{t-1,j} \\
    W_{t-1,j}^{\top} \bfR_t^{-1} \bfH_t & W_{t-1,j}^{\top} \bfR_t^{-1} W_{t-1,j} 
    \end{pmatrix}, \\
    (\bfT_{t,j}^\top \bfR_t^{-1} \bfT_{t,j})^{-1} &=
     \begin{pmatrix}
    \bfA_t^{-1} + \bfA_t^{-1} \bfH_t^\top \bfR_t^{-1} \bfOmega_{t-1,j}\bfR_t^{-1} \bfH_t \bfA_t^{-1} & -\mathbf{M}_t \\
    -\mathbf{M}_t^\top & \omega_{t-1}
    \end{pmatrix}, 
\end{align*}
where $\omega_{t-1}:=(W_{t-1,j}^\top \bfQ_t^H W_{t-1,j})^{-1}$, $\bfA_t := \bfH_t^\top \bfR_t^{-1} \bfH_t$, $\mathbf{M}_t:=\bfA_t^{-1}\bfH_t^\top \bfR_t^{-1}$\ $W_{t-1,j}\omega_{t-1}$.
Using block matrix determinant yields the formula to compute the determinant of $\bfT_{t,j}^\top \bfR_t^{-1} \bfT_{t,j}$. Using the block matrix inverse yields 
\begin{align*}
    \bfT_{t,j}(\bfT_{t,j}^\top \bfR_t^{-1} \bfT_{t,j})^{-1} \bfT_{t,j}^\top &= \bfH_t \bfA_t^{-1} \bfH_t^\top \\
    &+ \bfH_t\bfA_t^{-1}\bfH_t^\top \bfR_t^{-1} \bfOmega_{t-1,j}\bfR_t^{-1}\bfH_t \bfA_t^{-1} \bfH_t^\top \\
    &- W_{t-1,j}\omega_{t-1}W_{t-1,j}^\top \bfR_t^{-1} \bfH_t \bfA_t^{-1} \bfH_t^\top \\
    &-\bfH_t \bfA_t^{-1}\bfH_t^\top \bfR_t^{-1} W_{t-1,j}\omega_{t-1} W_{t-1,j}^\top \\
    &+ W_{t-1,j} \omega_{t-1} W_{t-1,j}^\top. 
\end{align*}
Combining these terms yields the formula to compute $S^2(\bfphi_t, \bfy_{t,j})$. 
\end{proof}

\section{Additional Results} \label{sec: additional results} 
This section gives some figures referenced in Section~\ref{sec: application}. Figure~\ref{fig: prediction versus held-out 1 uncertainty} shows the scatter plot of predicted PSE against held-out PSE at input setting $\bfx_2$ over randomly sampled 1,000 spatial locations with 95\% percentile predictive intervals.
Figure~\ref{fig: prediction versus held-out 2} shows the scatter plot of predicted PSE against held-out PSE at input setting $\bfx_2$. It suggests that the PP cokriging emulator gives better prediction than the PP kriging emulation at input setting $\bfx_2$.  Figure~\ref{fig: prediction versus held-out 2 uncertainty} shows the scatter plot of predicted PSE against held-out PSE at input setting $\bfx_2$ over randomly sampled 1,000 spatial locations with 95\% percentile predictive intervals. All these results also confirm that the PP cokriging performs quite well for the real application in terms of prediction uncertainty. 

\begin{figure}[htbp] 
\begin{subfigure}{0.5\textwidth}
  \centering
\makebox[\textwidth][c]{ \includegraphics[width=1.0\linewidth, height=0.25\textheight]{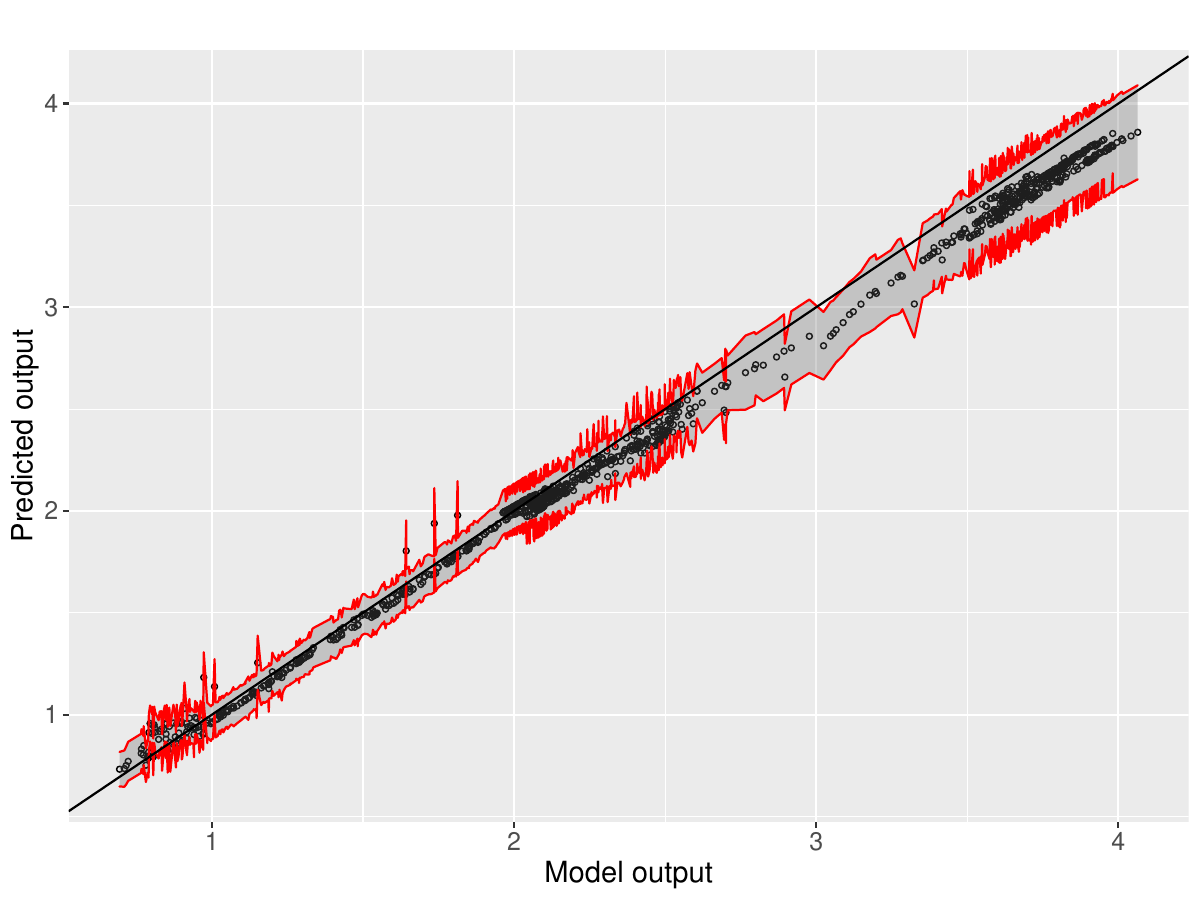}}
  \caption{PP kriging}
\end{subfigure}%
\begin{subfigure}{.5\textwidth}
  \centering
  \includegraphics[width=1.0\linewidth,height=0.25\textheight]{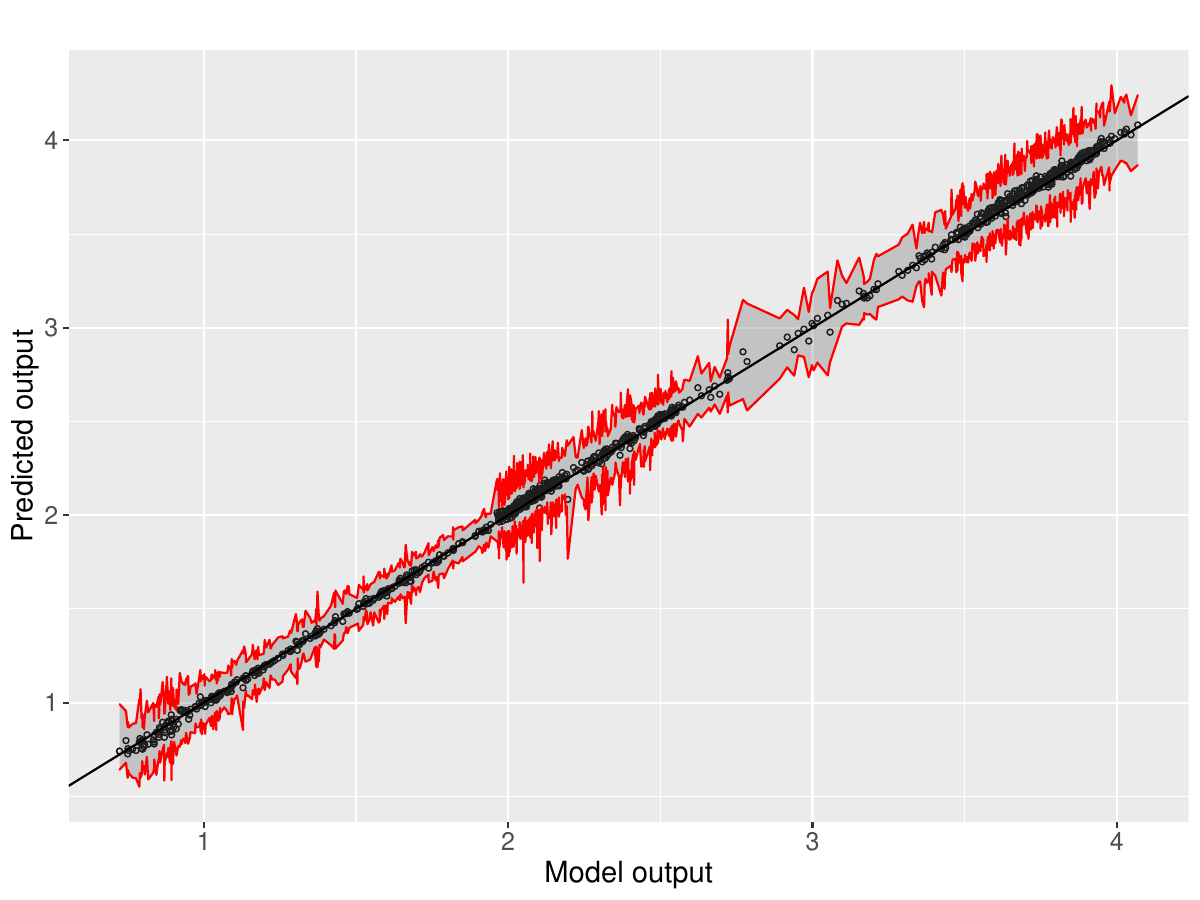}
  \caption{PP cokriging}
\end{subfigure}

 \caption{Scatter plot of predicted PSE against held-out PSE over randomly sampled $N=1,000$ spatial locations at the input setting $\bfx_1$. The red curve shows the 95\% percentile predictive intervals.}
\label{fig: prediction versus held-out 1 uncertainty}
\end{figure}

\begin{figure}[htbp]
\begin{subfigure}{0.5\textwidth}
  \centering
\makebox[\textwidth][c]{ \includegraphics[width=1.0\linewidth, height=0.25\textheight]{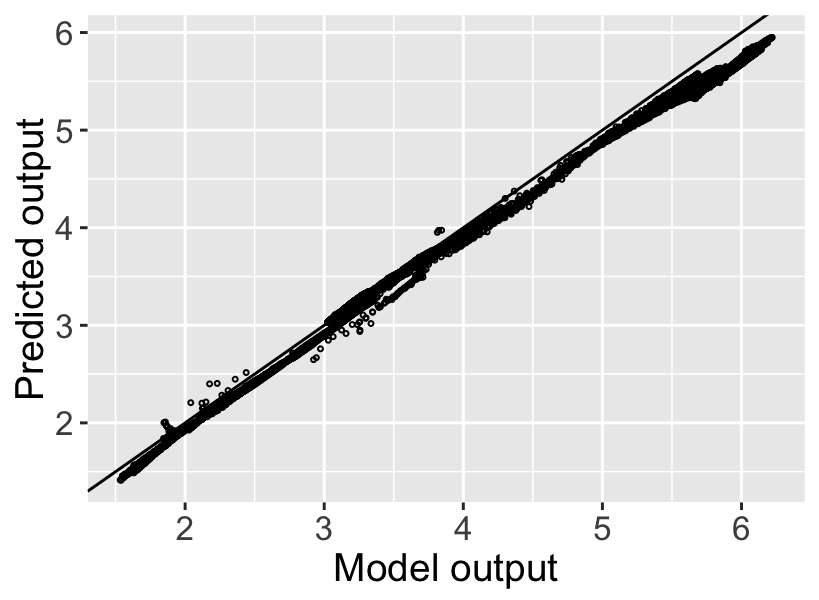}}
  \caption{PP kriging}
\end{subfigure}%
\begin{subfigure}{.5\textwidth}
  \centering
  \includegraphics[width=1.0\linewidth,height=0.25\textheight]{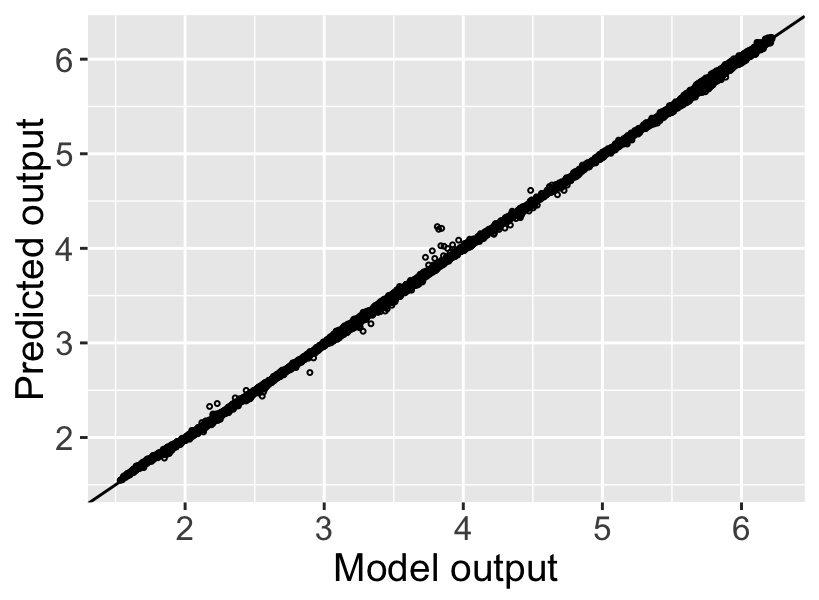}
  \caption{PP cokriging}
\end{subfigure}

 \caption{Scatter plot of predicted PSE against held-out PSE over $N=9,284$ spatial locations at the input setting $\bfx_2$. }
\label{fig: prediction versus held-out 2}
\end{figure}

\begin{figure}[htbp]
\begin{subfigure}{0.5\textwidth}
  \centering
\makebox[\textwidth][c]{ \includegraphics[width=1.0\linewidth, height=0.25\textheight]{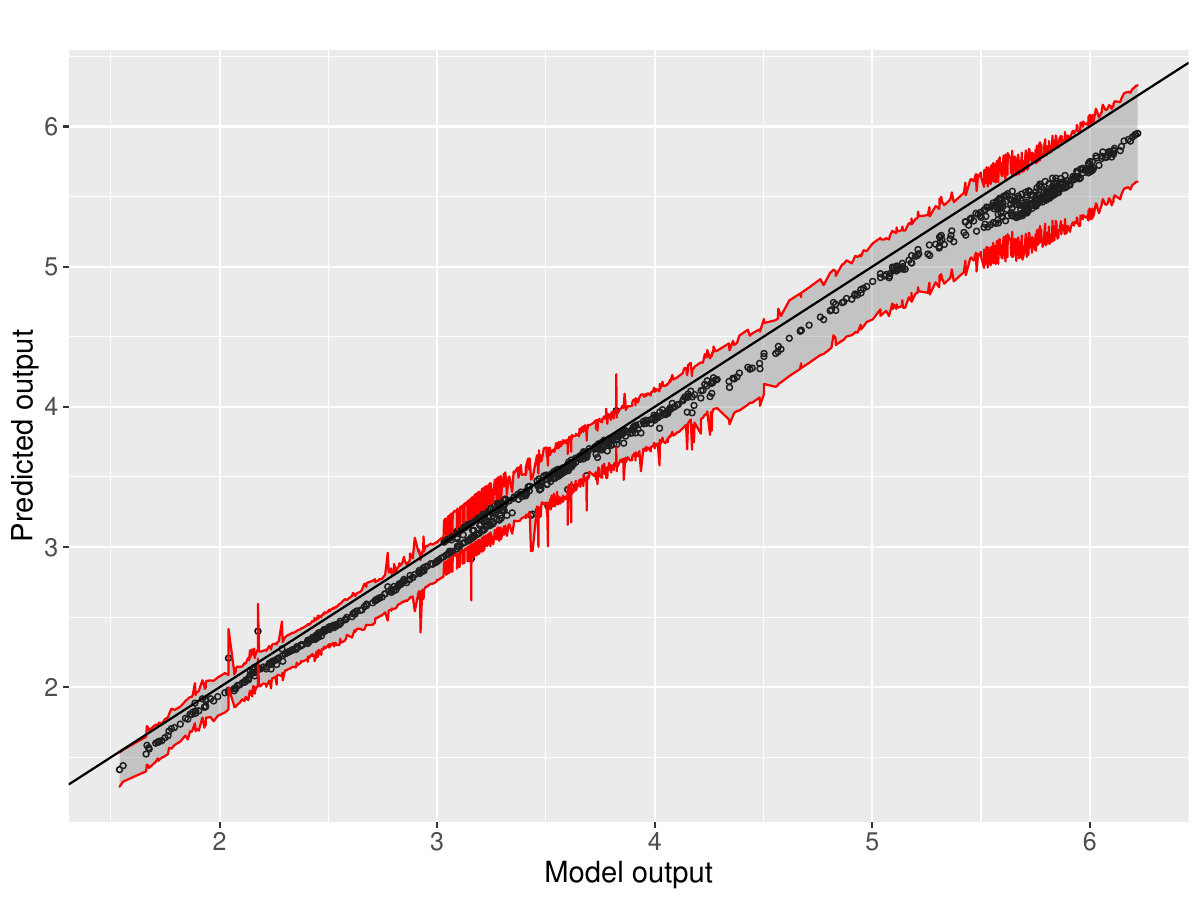}}
  \caption{PP kriging}
\end{subfigure}%
\begin{subfigure}{.5\textwidth}
  \centering
  \includegraphics[width=1.0\linewidth,height=0.25\textheight]{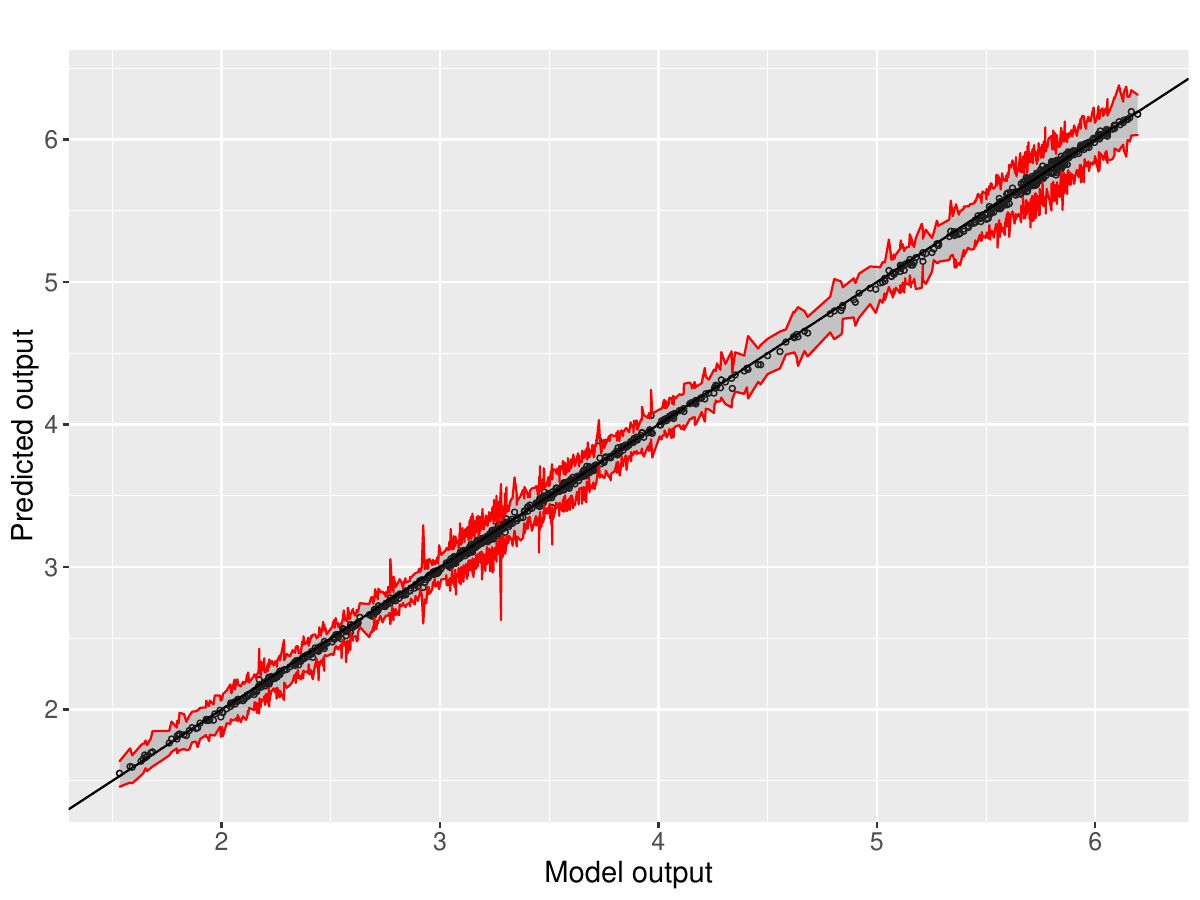}
  \caption{PP cokriging}
\end{subfigure}

 \caption{Scatter plot of predicted PSE against held-out PSE over randomly sampled $N=1,000$ spatial locations at the input setting $\bfx_2$. The red curve shows the 95\% percentile predictive intervals.}
\label{fig: prediction versus held-out 2 uncertainty}
\end{figure}

\end{document}